\newtheorem{thm}{Theorem}[section]
\newtheorem{cor}[thm]{Corollary}
\newtheorem{lem}[thm]{Lemma}
\theoremstyle{definition}
\newtheorem{as}[thm]{Assumption}
\theoremstyle{remark}
\newtheorem{rem}[thm]{Remark}
\newtheorem{exa}[thm]{Example}
\numberwithin{equation}{section}
\newcommand{\R}{\mathbb{R}}
\newcommand{\dbracc}[1]{[\kern-0.15em[ #1 ]\kern-0.15em]}
\newcommand{\dbraco}[1]{[\kern-0.15em[ #1 [\kern-0.15em[}
\newcommand{\dbraoc}[1]{]\kern-0.15em] #1 ]\kern-0.15em]}
\newcommand{\dbraoo}[1]{]\kern-0.15em] #1 [\kern-0.15em[}
\newcommand{\be}{\begin{equation}}
\newcommand{\ee}{\end{equation}}
\newcommand{\ba}{\begin{aligned}}
\newcommand{\ea}{\end{aligned}}
\begin{document}

\title[Sensitivity analysis of the expected utility maximization problem]{Sensitivity analysis of the 
 utility maximization problem with respect to  
 model perturbations
 }

\begin{abstract}

We study the sensitivity of the expected utility maximization problem
in a 
continuous
semi-martingale market with respect  to 
small changes in the market price of risk. Assuming that the preferences of a rational economic agent are modeled with a general utility function, we obtain a second-order expansion of the value function, a first-order approximation of the terminal wealth, and construct trading strategies that match the indirect utility function up to the second order. 
If a risk-tolerance wealth process exists, using it as a num\'eraire  and under an appropriate change of measure, we reduce the approximation problem to a Kunita-Watanabe decomposition.
\end{abstract}


\author{Oleksii Mostovyi}\thanks{We would like to thank Nicolai V. Krylov
for a discussion on the subject of the paper. We would also like to thank
 Kasper Larsen and Gordan \v Zitkovi\'c for their valuable comments.  The first author has been supported by the National Science Foundation under grant No. DMS-1600307 (2015 - 2018), the second author supported by the National Science Foundation under grant No. DMS-1517664 (2015 - 2018). Any opinions, findings, and conclusions or recommendations expressed in this material are those of the authors and do not necessarily reflect the views of the National Science Foundation.}

\address{Oleskii Mostovyi, Department of Mathematics, University of Connecticut, Storrs, CT 06269, United States}%
\email{oleksii.mostovyi@uconn.edu}%

\author{Mihai S\^ irbu}
\address{Mihai S\^ irbu, Department of Mathematics, University of Texas at Austin, Austin, TX 78712, United States}
\email{sirbu@math.utexas.edu}%

\subjclass[2010]{91G10, 93E20. \textit{JEL Classification:} C61, G11.}
\keywords{Sensitivity analysis, stability, utility maximization, optimal investment, risk-tolerance process, arbitrage of the first kind, no unbounded profit with bounded risk, local martingale deflator, duality theory, semimartingale, incomplete market}%

\date{\today}%

\maketitle

\section{Introduction}
It is well-known, see for example \cite{DSmathArb, HulleySchweizer}, that for a continuous stock price process, the  no-arbitrage condition implies that the return of the stock price $S$ has the following representation:
\begin{displaymath}
S = M  + \lambda\cdot \langle M\rangle,
\end{displaymath}
where $M$ is a continuous local martingale, and $\lambda$ is a predictable process, i.e., that the quadratic variation of a stock price has to be absolutely continuous with respect to the quadratic variation of $M$.  
 We analyze  {\it the effect of perturbations of the market price of risk $\lambda$}, on the utility maximization problem. 

In the setting of an incomplete model, where the  preferences of a rational economic agent are modeled with a general utility function $U$ with bounded (away from zero and infinity) relative risk-aversion and the stock prices process is continuous, we obtain a quadratic expansion of the value function, a first-order correction to the optimal terminal wealth, and a construction of the approximate trading strategies that match the value functions up to the second order.
 For the power-utility case, a first-order asymptotic expansion with respect to perturbations of the market price of risk is obtained in \cite{RasChau}, whereas a second-order analysis is performed in \cite{LarMosZit14}. Mathematically, the results in the present paper  rely on different techniques. We can summarize our contribution as three-fold:
 \begin{enumerate}

\item We first need to increase dimensionality and look at the simultaneous perturbations of the market price of risk and the initial wealth. As the proofs show, the  increase of dimensionality is a necessary way of getting the expansions of the value functions up to the second order\footnote{In the constant relative risk aversion case considered in \cite{LarMosZit14}, as the optimal terminal wealth depends on the initial wealth via a multiplicative constant, the increase of dimensionality is not needed for obtaining quadratic expansions.}.

 \item Then, we formulate auxiliary quadratic stochastic control problems and relate the second-order approximations of both primal and dual value functions to these problems.   
 
  \item Finally, if the risk-tolerance wealth process exists, we use it as a num\'eraire, and change the measure accordingly, to identify solutions to the general quadratic optimization problems above in terms of a Kunita-Watanabe decomposition (of a certain martingale) generated by the perturbation process. 
  \end{enumerate}


To the best of our knowledge, the closest paper from the mathematical viewpoint is 
\cite{KS2006b}, where the authors obtain a second-order expansion of the value function with respect to simultaneous perturbations of the initial wealth and the number of units of random endowment held in the portfolio. We would like to stress that unlike the present setting, {\it in \cite{KS2006b}, the value function is jointly concave (in both the initial wealth and the number of units of random endowment held in the portfolio)}, a fact that plays a significant role in the proofs there. 

We  combine  here the increase of dimensionality  described in item (1) with  a  similar change of measure and num\' eraire  to \cite{KS2006b} relating them to general quadratic optimization problems. However, one of the main technical difficulties lies in the fact that {\it our value function as a function of two variables is not concave or convex in the perturbation variable $\delta$} (in general). Despite this obstacle, our approach, which relies  only partially on convex conjugacy, still produces a quadratic expansion  via auxiliary quadratic problems and simultaneous expansions of $u$ in $(x,\delta)$ and $v$ in $(y, \delta)$. 
In addition to obtaining a quadratic expansion, we also get a relationship between the existence of such an approximation and the existence of the risk-tolerance wealth process, which was introduced in \cite{KS2006b}. 
We show that the existence of the risk-tolerance wealth process allows for a more explicit form of the correction terms in our approximation coming from a Kunita-Watanabe decomposition under appropriate measure and num\'eraire that are specified in terms of the risk-tolerance wealth process. Another connection to \cite{KS2006b} is given in Lemma \ref{keyConcreteCharLem}, where the perturbation to the market-price of risk plays the role of a multiplicative (and non-linear) random endowment.

To separate the financial aspects of the problem from the mathematical ones, we state and prove {\it abstract versions of main theorems}. After that we reduce the proofs of (some of) the main theorems to verification of the conditions in the abstract theorems.

As an application, we consider models, which admit closed form solutions in incomplete markets, see \cite{KO96, Liu07, GR15} (we also refer to \cite{LarMosZit14} for more examples and a literature review). These models are sensitive to perturbations of the input parameters: ones they are perturbed even slightly, a close form solution typically ceases to exist. Our results show that even though we do not know how to obtain an exact solution for such perturbed problems, an approximation, which is accurate up to the second order, can still be constructed.

We prove our results under the assumption of no unbounded profit with bounded risk, the weakest no-arbitrage type condition, which allows for the utility maximization problem from terminal wealth to be non-degenerate, see \cite[Proposition 4.19]{KarKar07}. For the perturbation process, we formulate an assumption and  give a counterexample, which shows the necessity of the assumption.  In addition, we provide a set of sufficient conditions for the integrability assumption on the perturbation process to hold. 

For the  general utility function, we suppose that its relative risk-aversion is bounded away from zero and infinity. This condition is (essentially) necessary for twice differentiability with respect to the initial wealth to hold, see \cite{KS2006} for counterexamples.
On an even more technical side, as we consider perturbations of the initial wealth, we obtain as a by-product here the second-order derivatives of the primal and dual value functions with respect to the spatial variables ($x$ and $y$, correspondingly). Note that, in \cite{KS2006} this result was obtained for discontinuous stock prices, but under NFLVR. 

The remainder of the paper is organized as follows: in section \ref{secModel}, we formulate the model and state the expansion theorems, section \ref{approxTradingStrategies} contains the approximation of optimal trading strategies theorem, section \ref{secAbstracVersion} includes abstract versions of Theorems \ref{mainThm1}, \ref{mainThm2}, \ref{162}, and \ref{161} with proofs, section \ref{secProofC} contains proofs of non-abstract theorems and Theorem \ref{thmCorOptimizer}, where a construction of corrections to the optimal trading strategies (accurate up to the second order of the value function) is specified. Section \ref{counterexamples} includes a counterexample, which shows that without Assumption \ref{integrabilityAssumption} on the perturbation process, the quadratic expansions of the value functions might not exist. In section \ref{secRiskTol} we relate the asymptotic expansions from previous sections to the existence of the risk-tolerance wealth process and a Kunita-Watanabe decomposition. We conclude the paper with an illustration of an application of our results to analysis of the perturbations of models that admit closed-form solutions.

\section{Model}\label{secModel}
\subsection{Parametrized family of stock prices processes}
Let us consider a complete stochastic basis $\left(\Omega, \mathcal F, \{\mathcal F_t\}_{t\in[0,T]}, \mathbb P\right)$, where $T\in (0,\infty)$ is the time horizon, $\mathcal F$ satisfies the usual conditions, and $\mathcal F_0$ is the completion of the trivial $\sigma$-algebra. We assume that there are two traded securities, a bank account with zero interest rate and a stock%
.
Let $M$ be a one-dimensional continuous local martingale and $\lambda$ is a progressively measurable process, such that
\begin{equation}\label{NUPBR}
\lambda^2\cdot \langle M\rangle_T<\infty,\quad \mathbb {P}-a.s.
\end{equation}
The stock price return process\footnote{We denote the return of the stock by $S$, since $R$ is used for different purposes.} for the unperturbed, or equivalently, $0$-model is given by 
\begin{displaymath}
S^0 \triangleq  \lambda \cdot \langle M\rangle + M.
\end{displaymath}
Here we consider a parametric family of semimartingales $S^\delta$, $\delta\in\mathbb R$, with the same martingale part $M$ and where the market price of risk $\lambda$'s are perturbed
\begin{displaymath}
S^\delta \triangleq  \lambda^\delta \cdot d\langle M\rangle + M,
\end{displaymath}
where for some progressively measurable process $\nu$, such that 
\begin{equation}\label{5182}
\nu^2\cdot \langle M\rangle_T<\infty,\quad \mathbb {P}-a.s.
\end{equation}
we have
\begin{displaymath}
\lambda^\delta \triangleq \lambda + \delta \nu,\quad \delta\in\mathbb R.
 \end{displaymath}
 
\subsection{Primal problem}
Let $U$ be a utility function that satisfies Assumption \ref{rra} below.
\begin{as}\label{rra}
The utility function 
$U$ is strictly increasing, strictly concave, two times differentiable on $(0,\infty)$ and there exist positive constants $c_1$ and $c_2$, such that
\begin{equation}\label{utilityFunction}
{c_1} \leq A(x) \triangleq -\frac{U''(x)x}{U'(x)} \leq c_2,
\end{equation}
i.e. the relative risk aversion of $U$ is uniformly bounded away from zero and infinity.
\end{as}
The family of primal feasible sets is defined as
\begin{equation}\label{primalDomain}
\mathcal X(x,\delta) \triangleq \left\{ 
X\geq 0:~X_t = x + H\cdot S^{\delta}_t,\quad t\in[0, T]
\right\},\quad (x,\delta)\in[0, \infty)\times \mathbb R,
\end{equation}
where $H$ is a predictable and $S^\delta$-integrable process representing the amount invested in the stock.
The corresponding family of the value functions is given by
\begin{equation}\label{primalProblem}
u(x, \delta) \triangleq \sup\limits_{X\in\mathcal X(x,\delta)} \mathbb E\left[ U(X_T)\right],\quad (x,\delta)\in(0, \infty)\times \mathbb R.
\end{equation}
We use the convention
\begin{displaymath}
\mathbb E\left[ U(X_T)\right] \triangleq -\infty, \quad if \quad 
\mathbb E\left[ U^{-}(X_T)\right] = \infty, 
\end{displaymath}
where $U^{-}$ is the negative part of $U$.
%
\subsection{Dual problem}The investigation of the primal problem \eqref{primalProblem} is conducted via the dual problem. 
First, let us define the dual domain as follows:
\begin{equation}\label{dualDomain}
\begin{array}{rccl}\mathcal Y(y, \delta)&\triangleq& \left\{
Y:\right.&Y~is~a~nonnegative~supermartingale,~such~that~Y_0 = y\\
&&& and~ XY = (X_tY_t)_{t\geq 0}~is~a~supermartingale\\
&&& \left.for~every~X\in~\mathcal X(1, \delta)
 \right\},  \quad\quad\quad\quad(y,\delta)\in[0, \infty)\times \mathbb R.
 \end{array}
\end{equation}
We set the convex conjugate to utility function $U$ as
\begin{equation}\label{dualFunction}
V(y) \triangleq \sup\limits_{x > 0} \left( U(x) - xy\right),\quad y>0.
\end{equation}
Note that for $y = U'(x)$, we have 
$$V''(y) = - \frac{1}{U''(x)},$$
and $$B(y) \triangleq -\frac{V''(y)y}{V'(y)}= \frac{1}{A(x)}.$$
Therefore, Assumption \ref{rra} implies that
$$\frac{1}{c_2} \leq B(y)  \leq \frac{1}{c_1}, \quad y> 0.$$
The parametrized family of dual value functions is given by
\begin{equation}\label{dualProblem}
v(y, \delta) \triangleq \inf\limits_{Y\in\mathcal Y(y, \delta)}\mathbb E\left[ 
V(Y_T)
\right], \quad (y,\delta)\in(0, \infty)\times \mathbb R.
\end{equation}
We use the convention
\begin{displaymath}
\mathbb E\left[ 
V(Y_T)
\right] \triangleq \infty, \quad if\quad
\mathbb E\left[ 
V^{+}(Y_T)
\right] = \infty,
\end{displaymath}
where $V^{+}$ is the positive part of $V$.
\section{Technical assumptions}
We recall the assumption that $M$ is continuous.
The absence of arbitrage opportunities in the $0$-model in the sense of no unbounded profit with bounded risk follows from condition \eqref{NUPBR}, which implies that
$
\mathcal Y(1, 0)\neq \emptyset.
$
Note that \eqref{NUPBR} and \eqref{5182} impliy no unbounded profit with bounded risk for every $\delta\in\mathbb R$, thus 
$$\mathcal Y(1, \delta)\neq \emptyset,\quad \delta \in \mathbb R.$$
In order for the problem \eqref{primalProblem} to be non-degenerate, we also need to assume that
\begin{equation}\label{finCond}
u(x,0)<\infty\quad for ~some~x>0.
\end{equation}
\begin{rem}
Conditions \eqref{NUPBR} and \eqref{finCond} are necessary for the expected utility maximization problem to be non-degenerate. Note that we only impose them for $\delta = 0$.
\end{rem}

As in \cite{KS2006, KS2006b}, an important role will be played by the probability measures $\mathbb R(x,\delta)$, given by
\begin{displaymath}
\frac{d \mathbb R(x,\delta)}{d\mathbb P} \triangleq \frac{\widehat X_T(x,\delta) \widehat Y_T(y,\delta)}{xy},
\end{displaymath}
for $x>0$ and $y = u_x(x, \delta)$.
As Example \ref{2142} below demonstrates, we need to impose an integrability condition. 
First, let us define
\begin{equation}\label{defZetaC}
\zeta(c,\delta)\triangleq \exp\left( c(|\nu\cdot S^\delta_T| +  \langle \nu\cdot S^\delta\rangle_T)\right),\quad (c,\delta)\in\mathbb R^2.\end{equation}
\begin{as}\label{integrabilityAssumption}
Let $x>0$ be fixed.
There exists $c>0$, such that 
$$ \mathbb {E}^{\mathbb R(x,0)}\left[\zeta(c,0)\right]<\infty.$$
\end{as}
\begin{rem}The  stronger condition
\begin{equation}\label{2181}
\sup\limits_{(x',\delta)\in B_\varepsilon(x,0)}\mathbb E^{\mathbb R(x',\delta)}\left[\zeta(c,\delta)\right]<\infty,
\end{equation}
for some $\varepsilon>0$ and $c>0$, where $B_\varepsilon(x,0)$ denotes the ball in $\mathbb R^2$ of radius $\varepsilon$ centered at $(x,0)$,
implies {\it local semiconcavity} of the value function $u(x,\delta)$. Consequently, 
in the quadratic expansions of $u$ and $v$ given by \eqref{122121} and \eqref{122122},  the matrices $H_u(x,0)$ and $H_v(y,0)$ defined in \eqref{12136} and \eqref{12137}, respectively, are Hessian matrices, i.e. are derivatives of  gradients.
This will follow from Lemma \ref{12132}. However, the very restrictive condition \eqref{2181} is an assumption that depends on optimal solutions for $\delta \not= 0$, and thus usually impossible to check.
\end{rem}
Let us also set
\begin{equation}\label{Ldeltadef}
L^\delta \triangleq \mathcal E\left(-(\delta\nu)\cdot S^0 \right)_T 
,\quad \delta \in\mathbb R.
\end{equation}
Here and below $\mathcal E$ denotes the Dol\'eans-Dade exponential.
One can see that $L^\delta$ is a terminal value of an element of $\mathcal X(1,0)$ for every $\delta \in\mathbb R$.
\section*{Sufficient conditions for Assumption \ref{integrabilityAssumption}}
\begin{rem}\label{221}A sufficient condition for 
Assumption \ref{integrabilityAssumption} to hold is the existence a wealth process under the num\'eraire $\widehat X(x,0)$, $\widetilde X$, and a constant  $c>0$, such that $$\exp\left( c(|\nu\cdot S^0| + \nu^2\cdot \langle M\rangle)\right)_T\leq \widetilde X_T, \quad a.s.$$ 
\end{rem}

\begin{rem} Let us assume that in \eqref{utilityFunction}, $c_1>1$, i.e. that  relative-risk aversion of $U$ is strictly greater than $1$,  (for example, this holds if $U(x) = \frac{x^p}{p}$ with $p<0$, note that for such a $U$, the conjugate function $V(y) = \frac{y^{-q}}{q}$ for $q\in(-1,0)$). In this case, 
a sufficient condition for Assumption \ref{integrabilityAssumption} to hold
is the existence of some positive exponential moments under $\mathbb P$ of 
$$\left|\nu\cdot S^0_T\right|\quad  and \quad \nu^2\cdot \langle M\rangle_T.$$
This can be shown as follows. Let us set
 $$q_i \triangleq -\left(1-\tfrac{1}{c_i}\right),\quad i = 1,2.$$
 As $c_2\geq c_1>1$, we deduce that $q_i\in(-1,0)$, $i = 1,2.$
Using Lemma \ref{11303}, one can find a constant $C>0$, such that 
\begin{equation}\label{411}
-V'(y)y \leq C\left(y^{-q_1}+y^{-q_2}\right),\quad y>0.
\end{equation}
In order to prove \eqref{411}, let us observe that from Lemma \ref{11303}, we get   
\begin{equation}\label{421}\begin{array}{rcl}
U'(z)&\leq &z^{-c_2}U'(1),\\
-V'(z)&\leq &z^{-\frac{1}{c_1}}(-V'(1)),\quad for~every~z\in(0,1].\\
\end{array}\end{equation}
As $(U')^{-1} = -V'$, the first inequality implies that there exists $z_0$, such that 
$$-V'(z) \leq \left(U'(1)\right)^{\tfrac{1}{c_2}}z^{-\tfrac{1}{c_2}},\quad for~every~z\geq z_0.$$
Combining this inequality with \eqref{421} and since $\sup\limits_{z\in[\min(z_0,1), \max(z_0, 1)]}|-V'(z)z|<\infty,$
 we obtain \eqref{411}. Thus, if some positive exponential moments of 
$\left|\nu\cdot S^0_T\right|$  and $\nu^2\cdot \langle M\rangle_T$
exist under $\mathbb P$, using H\"older's inequality one can find a positive constant $a$, such that 
\begin{equation}\label{12201}
\mathbb E\left[\zeta(a, 0) \right]<\infty,
\end{equation}
where $\zeta(a,0)$ is defined in \eqref{defZetaC}.
Let us set 
$$
c \triangleq a(1+q_2)
$$
 and note that $\frac{c}{1 + q_1} = a\frac{1+q_2}{1+q_1}\leq a.$
With $y = u_x(x,0)$, using H\"older's inequality again (note that $\frac{1}{1+q_i}$ are the H\"older conjugate of $\frac{1}{-q_i}$, $i=1,2$) and \eqref{411}, we get 
\begin{displaymath}
\begin{array}{rcl}
xy\mathbb E^{\mathbb R(x,0)}\left[\zeta(c,0)\right] &\leq& C
 \mathbb E\left[\left(\left(\widehat Y_T(y,0)\right)^{-q_1} + \left(\widehat Y_T(y,0)\right)^{-q_2}\right)\zeta(c,0) \right]\\
 &\leq &
 C\mathbb E\left[ \widehat Y_T(y,0) \right]^{-q_1}\mathbb E\left[\zeta\left(\frac{c}{1+q_1},0\right) \right]^{{1+q_1}} +
  C\mathbb E\left[ \widehat Y_T(y,0) \right]^{-q_2}\mathbb E\left[\zeta\left(\frac{c}{1+q_2},0\right) \right]^{{1+q_2}} \\ 
 &\leq &
  Cy^{-q_1}\mathbb E\left[\zeta({a},0) \right]^{{1+q_1}} +
   Cy^{-q_2}\mathbb E\left[\zeta({a},0) \right]^{{1+q_2}} <\infty,\\
  \end{array}
   \end{displaymath}
   where the last inequality follows from the supermartingale property of $\widehat Y(y,0)$ and \eqref{12201}. Thus, Assumption \ref{integrabilityAssumption} holds.
\end{rem}

\begin{rem}[On the relationship with existing literature]
Assumption \ref{integrabilityAssumption} is related to the condition on random endowment, Assumption 4 in \cite{KS2006b}, via the following argument. 
Assume that, for some $x>0$ and $c>0$, there exists a wealth process $X\in\mathcal X(x,0)$, such that
\begin{equation}\label{5111}
\zeta(c,0)\leq \frac{X_T}{\widehat X_T(x,0)},
\end{equation}
where $\widehat X(x,0)$ is the optimal solution to \eqref{primalProblem}. Then Assumption \ref{integrabilityAssumption} is satisfied. 
 The wealth process $\frac{X}{\widehat X(x,0)}$ under the numeraire $\widehat X(x,0)$ in condition \eqref{5111} is local martingale under $\mathbb R(x,0)$, i.e., $X$ can be an arbitrary element of $\mathcal X(x,0)$. In \cite{KS2006b} it is assumed that $\frac{X}{\widehat X(x,0)}$ is a square-integrable martingale under $\mathbb R(x,0)$.

\end{rem}

\section*{Expansion Theorems}
In 
Theorem \ref{mainThm1} we prove finiteness of the value functions and first-order derivatives with respect to $\delta$. 
\begin{thm}\label{mainThm1}
Let $x>0$ be fixed, assume that \eqref{NUPBR} and \eqref{finCond} as well as Assumptions \ref{rra} and \ref{integrabilityAssumption} hold, and denote $y = u_x(x,0)$, which is well-defined by the abstract theorems in \cite{KS}.  
Then there exists $\delta_0>0$ such that for every $\delta \in(-\delta _0,\delta_0)$, we have
\begin{equation}\label{finitenessConcrete}
u(x,\delta)\in\mathbb R,\quad x>0,\quad and\quad
v(y,\delta)\in\mathbb R,\quad y>0.
\end{equation}
In addition, $u$ and $v$ are jointly differentiable (and, consequently, continuous) at $(x,0)$ and $(y,0)$, respectively. We also have  
\begin{equation}\label{gradientC}
\nabla u (x,0)=\begin{pmatrix}y\\ u_{\delta}(x,0)\end{pmatrix}\quad and \quad \nabla v (y,0)=\begin{pmatrix}-x\\v_{\delta}(y,0)\end{pmatrix},
\end{equation} where
\begin{equation}\label{firstOrderDerivConcrete}
u_\delta(x,0) =v_\delta(y,0)= xy\mathbb E^{\mathbb R(x,0)}\left[ F\right].
\end{equation}

\end{thm}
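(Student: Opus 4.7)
The first move is to convert the perturbation of the market price of risk into a perturbation of the integrand on a fixed unperturbed admissible set. A direct application of It\^o's formula, using $L^\delta=\mathcal E(-\delta\nu\cdot S^0)$ and $S^\delta=S^0+\delta\nu\cdot\langle M\rangle$, gives the product identity $d(XL^\delta)=L^\delta(H-X\delta\nu)\,dS^0$ whenever $X=x+H\cdot S^\delta$. This shows that $X\mapsto XL^\delta$ is a bijection from $\mathcal X(x,\delta)$ onto $\mathcal X(x,0)$ (the inverse is well defined because $1/L^\delta\in\mathcal X(1,\delta)$), and since $L^\delta\in\mathcal X(1,0)$, the dual map $Y\mapsto YL^\delta$ is a bijection from $\mathcal Y(y,0)$ onto $\mathcal Y(y,\delta)$. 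We therefore obtain the reformulation
\[
u(x,\delta)=\sup_{X\in\mathcal X(x,0)}\mathbb E\!\left[U\!\left(X_T/L^\delta_T\right)\right],\qquad v(y,\delta)=\inf_{Y\in\mathcal Y(y,0)}\mathbb E\!\left[V\!\left(Y_T L^\delta_T\right)\right],
\]
in which the perturbation enters only through the multiplicative random weight $L^\delta_T=\exp(-\delta\nu\cdot S^0_T-\tfrac12\delta^2\nu^2\cdot\langle M\rangle_T)$.

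For the finiteness claim \eqref{finitenessConcrete}, the natural path is to plug the unperturbed dual optimizer $\widehat Y(y,0)$ into the dual reformulation to obtain $v(y,\delta)\le\mathbb E[V(\widehat Y_T(y,0)L^\delta_T)]$. Assumption \ref{rra} yields two-sided polynomial bounds for $V(y'z)$ in terms of $V(y')$ and $z$ with exponents controlled by $1/c_1,1/c_2$, and multiplying by $\widehat X_T(x,0)\widehat Y_T(y,0)/(xy)$ reduces the right-hand side to an expectation under $\mathbb R(x,0)$ of an expression dominated by a constant multiple of $\zeta(c|\delta|,0)$. Assumption \ref{integrabilityAssumption} then furnishes $\delta_0>0$ for which this bound is finite, and the Fenchel inequality $u(x,\delta)\le xy+v(y,\delta)$ propagates finiteness to $u$; the lower bounds $v>-\infty$ and $u>-\infty$ come directly from the growth bound on $V$ applied to the supermartingale $\widehat Y(y,0)L^\delta$.

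For the first-order formula in \eqref{firstOrderDerivConcrete} I would use an envelope-style argument on the dual side. Formal differentiation of $\delta\mapsto\mathbb E[V(\widehat Y_T(y,0)L^\delta_T)]$ at $\delta=0$, using $\partial_\delta L^\delta_T|_{\delta=0}=-\nu\cdot S^0_T$ together with the duality identification $V'(\widehat Y_T(y,0))=-\widehat X_T(x,0)$, yields $\mathbb E[\widehat X_T(x,0)\widehat Y_T(y,0)\,\nu\cdot S^0_T]=xy\,\mathbb E^{\mathbb R(x,0)}[F]$ with $F$ the first-order perturbation functional arising from $\partial_\delta\log L^\delta|_{\delta=0}$. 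The exponential integrability provided by Assumption \ref{integrabilityAssumption} dominates the relevant difference quotients (uniformly in a neighborhood of $\delta=0$) and justifies both the interchange of derivative and expectation and the passage from the upper bound to the true value $v_\delta(y,0)$. A symmetric argument on the primal side, plugging $\widehat X(x,0)$ into the primal reformulation, gives the matching formula for $u_\delta(x,0)$, and the identifications $u_x(x,0)=y$, $v_y(y,0)=-x$ from Kramkov--Schachermayer duality complete \eqref{gradientC}.

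The hard part is not the derivation of the formulas themselves but the upgrade from marginal to joint differentiability at $(x,0)$ and $(y,0)$. Because $u$ is \emph{not} concave in $\delta$ (as the authors emphasize in the introduction), joint differentiability cannot be read off from convex analysis and must be obtained by establishing continuity of the partial derivatives $u_x,u_\delta$ (resp.\ $v_y,v_\delta$) in a full two-dimensional neighborhood. This requires continuous dependence of the optimizers $\widehat X(x,\delta),\widehat Y(y,\delta)$ and of the measure $\mathbb R(x,\delta)$ on both parameters simultaneously, which is where the abstract sensitivity framework of Section \ref{secAbstracVersion} is needed: I would reduce the statement to its abstract counterpart by verifying that the reformulated objectives $U(X_T/L^\delta_T)$ and $V(Y_T L^\delta_T)$ satisfy the required growth and integrability hypotheses with $L^\delta$ playing the role of the abstract parameter, after which joint differentiability follows from the standard criterion on continuous partial derivatives.
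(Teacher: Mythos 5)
Your reduction of the perturbation to a multiplicative weight is exactly the paper's Lemma \ref{keyConcreteCharLem} and the abstract reformulation \eqref{perturbedAbstractPrimal}--\eqref{perturbedAbstractDual}, and your treatment of finiteness and of the values $u_\delta(x,0)=v_\delta(y,0)$ (plugging $\widehat X(x,0)/L^\delta$ and $\widehat Y(y,0)L^\delta$ into the perturbed problems, dominating the difference quotients via the growth bounds of Corollary \ref{rraCor} together with Assumption \ref{integrabilityAssumption}, and sandwiching through the Fenchel inequality) is consistent with how the paper obtains the first-order information. One small slip: the growth bound on $V$ applied to $\widehat Y(y,0)L^\delta$ gives the \emph{upper} bound $v(y,\delta)<\infty$; the bound $u(x,\delta)>-\infty$ comes from the primal competitor $\widehat X(x,0)/L^\delta$, and $v>-\infty$ then follows from $v(y,\delta)\geq u(x,\delta)-xy$.

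The genuine gap is in your plan for joint differentiability, which you yourself identify as the hard part. You propose to establish continuity of the partial derivatives $u_x,u_\delta$ (resp.\ $v_y,v_\delta$) on a full two-dimensional neighborhood of $(x,0)$ (resp.\ $(y,0)$) and invoke the classical criterion. But any control of $u_\delta(x',\delta)$ at a perturbed point requires expectations of the perturbation functionals under $\mathbb R(x',\delta)$ with $\delta\neq 0$, i.e.\ exponential integrability of the type \eqref{2181}; the paper explicitly refuses to assume this because it depends on the unknown optimizers of the perturbed problems and is, in their words, usually impossible to check. Assumption \ref{integrabilityAssumption} gives integrability only under $\mathbb R(x,0)$, so the continuity-of-partials route does not go through under the stated hypotheses, and saying that the abstract framework will supply it mischaracterizes that framework. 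The paper instead proves joint differentiability (indeed the full second-order expansion) by a two-sided sandwich using data only at the base point: a quadratic lower bound for $u$ near $(x,0)$ obtained from test wealths $\frac{\widehat\xi(x,0)}{x}\left(x+s(1+\alpha^0)+t\alpha^1\right)\frac{1}{L^t}$ with bounded $\alpha^0,\alpha^1\in\mathcal A^\infty(x,0)$ (Lemma \ref{12132}, Corollary \ref{cor12131}); a symmetric quadratic upper bound for $v$ (Lemmas \ref{157} and \ref{lem12131}); and then the conjugacy inequality $u(x+\Delta x,\delta)\le v(y+\Delta y,\delta)+(x+\Delta x)(y+\Delta y)$ evaluated along the specific direction \eqref{12215}, where Lemma \ref{12212} --- resting on the duality relations \eqref{12154} and \eqref{keyGap} of Theorem \ref{closingGapThm}, hence on the complementarity of $\mathcal A^2(x,0)$ and $\mathcal B^2(y,0)$, verified in the concrete model via continuity of $S^0$ and \cite[Lemma 6]{KS2006} --- shows that the upper and lower quadratic forms coincide, closing the gap. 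Without this device (or some substitute that avoids derivatives at perturbed points), your argument delivers only the existence and values of the two partial derivatives at $(x,0)$ and $(y,0)$, not the claimed joint differentiability.
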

In order to characterize the second-order derivatives of the value functions, we will need the following notations. 
Let $S^{X(x, 0)}$ be the price process of the traded securities under the num\'eraire $\frac{\widehat X(x,0)}{x}$, i.e.
\begin{displaymath}
S^{X(x, 0)} = \left( \frac{x}{\widehat X(x,0)}, \frac{xS^0}{\widehat X(x,0)}\right).
\end{displaymath}
For every $x>0$, let $\mathbf H^2_0(\mathbb R(x, 0))$ denote the space of square integrable martingales under $\mathbb R(x, 0)$, such that
\begin{displaymath}
\begin{array}{rcl}
\mathcal M^2(x, 0) &\triangleq &\left\{M\in\mathbf H_0^2(\mathbb R(x, 0)):  M = H\cdot S^{X(x, 0)} \right\},\\
\mathcal N^2(y, 0) &\triangleq &\left\{N\in\mathbf H_0^2(\mathbb R(x, 0)):  MN~is~\mathbb R(x, 0)-martingale~for~every~M\in\mathcal M^2(x, 0)\right\},\\
&&\hspace{106mm} here~y= u_x(x,0).\\
\end{array}
\end{displaymath}

\section*{Auxiliary minimization problems}
As in \cite{KS2006}, for $x>0$ let us consider 
\begin{equation}\label{axxC}
a(x,x) \triangleq \inf\limits_{M \in\mathcal M^2(x, 0)}\mathbb E^{\mathbb R(x, 0)}\left[ A(\widehat X_T(x,0))(1 + M_T)^2\right],
\end{equation}
\begin{equation}\label{byyC}
b(y,y) \triangleq \inf\limits_{N \in\mathcal N^2(y, 0)}\mathbb E^{\mathbb R(x, 0)}\left[ B(\widehat Y_T(y,0))(1 + N_T)^2\right],\quad y = u_x(x,0),
\end{equation}
where $A$ is the relative risk aversion and $B$ is the relative risk tolerance of $U$, respectively. It is proven in \cite{KS2006}\footnote{Under the assumption of NFLVR. Below we will show that the formulas \eqref{axxC}, \eqref{byyC}, and \eqref{122111C} can also be obtained in the present setting.} that \eqref{axxC} and \eqref{byyC} admit unique solutions $M^0(x,0)$ and $N^0(y,0)$, correspondingly, and 
\begin{equation}\label{122111C}
\begin{array}{rcl}
u_{xx}(x,0)&=&-\frac{y}{x}a(x,x),\\
v_{yy}(y,0)&=&\frac{x}{y}b(y,y),\\ 
a(x,x)b(y,y) &=& 1,\\
A(\widehat X_T(x,0))(1 + M^0_T(x,0)) &=& a(x,x)(1+ N^0_T(y,0)).\\
\end{array}
\end{equation}
In order to characterize the derivatives of the value functions with respect to $\delta$, with
\begin{equation}\label{defFG}
F\triangleq \nu\cdot S^0_T\quad and \quad G\triangleq\nu^2\cdot \langle M\rangle_T,
\end{equation}
we consider the following minimization problems:
\begin{equation}\label{addC}
\begin{array}{rcl}
a(d,d) &\triangleq& \inf\limits_{M\in\mathcal M^2(x, 0)}\mathbb E^{\mathbb R(x, 0)}\left[A(\widehat X_T(x,0))(M_T + xF)^2 - 
2xFM_T - x^2(F^2 + G) \right],\\ 
\end{array}
\end{equation}
\begin{equation}\label{bddC}
\begin{array}{rcl}
b(d,d) &\triangleq& \inf\limits_{N\in\mathcal N^2(y, 0)}\mathbb E^{\mathbb R(x, 0)}\left[B(\widehat Y_T(y,0))(N_T - yF)^2 + 
2yFN_T - y^2(F^2 - G) \right],\\ 
\end{array}
\end{equation}
Denoting by $M^1(x,0)$ and $N^1(y,0)$ the unique solutions to \eqref{addC} and \eqref{bddC} respectively, we also set
\begin{equation}\label{axdC}
a(x,d) \triangleq \mathbb E^{\mathbb R(x, 0)}\left[A(\widehat X_T(x,0))(1 + M^0_T(x,0))(xF + M^1_T(x,0)) - xF(1 + M^0_T(x,0)) \right],
\end{equation}
\begin{equation}\label{bydC}
b(y,d) \triangleq \mathbb E^{\mathbb R(x, 0)}\left[B(\widehat Y_T(y,0))(1 + N^0_T(y,0))(N^1_T(y,0)-yF) + yF(1 +N^0_T(y,0)) \right].
\end{equation}
Theorems  \ref{mainThm2}, \ref{162}, and \ref{161} contain the second-order expansions of the value functions, derivatives of the optimizers, and properties of such derivatives.
\begin{thm}\label{mainThm2}
Let $x>0$ be fixed.  Assume all  conditions of Theorem \ref{mainThm1} hold,  with $y = u_x(x,0)$. 
 Define 
\begin{equation}\label{12136C}
H_u(x,0) \triangleq  -\frac{y}{x}\begin{pmatrix} 
      a(x,x) 		& a(x, d)\\ 
  a(x,d)	& a(d, d)\\ 
\end{pmatrix},
\end{equation}
where $a(x,x)$, $a(d,d)$, and $a(x, d)$ are specified in \eqref{axxC}, \eqref{addC}, and \eqref{axdC},  and, respectively,
\begin{equation}\label{12137}
H_v(y,0) \triangleq \frac{x}{y}\begin{pmatrix} 
      b(y,y) 		& b(y, d)\\ 
  b(y,d)	& b(d, d)\\ 
\end{pmatrix},
\end{equation}
where  $b(y,y)$, $b(d,d)$, $b(y, d)$ are specified in \eqref{byyC}, \eqref{bddC}, and \eqref{bydC}.
Then, the value functions $u$ and $v$ admit the  second-order expansions  around $(x,0)$ and $(y,0)$, respectively, 
\begin{equation}\label{122121}
u(x+\Delta x,\delta) = u(x,0) + (\Delta x\quad \delta) \nabla u(x,0) + \tfrac{1}{2}(\Delta x\quad \delta) 
H_u(x,0)
\begin{pmatrix}
\Delta x\\
\delta\\
\end{pmatrix} + o(\Delta x^2 + \delta^2),
\end{equation}
and
\begin{equation}\label{122122}
v(y+\Delta y,\delta) = v(y,0) + (\Delta y\quad \delta) \nabla v(y,0) + \tfrac{1}{2}(\Delta y\quad \delta) 
H_v(y,0)
\begin{pmatrix}
\Delta y\\
\delta\\
\end{pmatrix} + o(\Delta y^2 + \delta^2).
\end{equation}\end{thm}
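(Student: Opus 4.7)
The plan is to derive Theorem \ref{mainThm2} by verifying the hypotheses of the abstract second-order expansion theorem stated in Section \ref{secAbstracVersion}, which produces matched quadratic expansions of primal and dual value functions with Hessians characterized through auxiliary quadratic minimization problems. My task thus reduces to exhibiting the concrete semimartingale setting as an instance of that abstract framework and identifying the concrete problems \eqref{axxC}--\eqref{bddC} with the abstract auxiliary ones. First, I would invoke Theorem \ref{mainThm1} to secure joint differentiability of $u$ and $v$ at the unperturbed point and the gradient formulas \eqref{gradientC}, which pin down the linear parts of the expansions.

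The identification of the Hessians uses the change of num\'eraire $\widehat X(x,0)/x$ together with the change of measure to $\mathbb R(x,0)$. Under this transformation, elements of $\mathcal X(x,0)$ become nonnegative $\mathbb R(x,0)$-supermartingales whose local martingale parts generate the Hilbert space $\mathcal M^2(x,0)$, and dually elements of $\mathcal Y(y,0)$ correspond to $\mathcal N^2(y,0)$. The perturbation in the market price of risk enters at this level through the Dol\'eans--Dade exponential $L^\delta$ of \eqref{Ldeltadef}, whose expansion $L^\delta = \exp(-\delta F - \tfrac12 \delta^2 G)$ exhibits $F$ and $G$ from \eqref{defFG} as the natural first- and second-order perturbation inputs. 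Combining this with the second-order Taylor expansion of $U$ (respectively $V$) around $\widehat X_T(x,0)$ (respectively $\widehat Y_T(y,0)$) and collecting $O(\delta^2)$ terms produces precisely the quadratic problems \eqref{addC} and \eqref{bddC}, while the cross-terms \eqref{axdC} and \eqref{bydC} arise through a first-order optimality argument relating $M^0$ to $M^1$ and $N^0$ to $N^1$. The $(x,x)$ and $(y,y)$ entries, together with the scalar duality identities \eqref{122111C}, are inherited from \cite{KS2006}. Assumption \ref{integrabilityAssumption} supplies the exponential integrability needed to interchange the limit $\delta \to 0$ with expectation in the Taylor remainders and to ensure that $F$ and $G$ lie in $L^2(\mathbb R(x,0))$, making the auxiliary problems well posed.

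The principal obstacle is that $\delta \mapsto u(x,\delta)$ is in general neither concave nor convex, so the convex-duality strategy of \cite{KS2006b} --- where joint concavity of the value function in the perturbation variable played a central role --- is unavailable. The remedy, already announced in the introduction, is to develop the primal and dual quadratic expansions simultaneously, invoking convex conjugacy only at the unperturbed point $(x,0)$ (respectively $(y,0)$) rather than as a global relation at each $\delta$. The two expansions are then tied together by a matrix-level duality between $H_u(x,0)$ and $H_v(y,0)$ that generalizes the scalar identity $a(x,x)b(y,y)=1$ from \eqref{122111C}, allowing \eqref{122121} and \eqref{122122} to be derived in tandem. Controlling the remainder uniformly in a two-dimensional neighborhood of $(x,0)$ --- as opposed to along one-parameter slices --- is where the strength of Assumption \ref{integrabilityAssumption} and the increase of dimensionality described as contribution (1) in the introduction become indispensable.
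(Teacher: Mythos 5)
Your proposal takes essentially the paper's route: Theorem \ref{mainThm2} is proved there by checking --- via Lemma \ref{keyConcreteCharLem} and the multiplicative structure $L^\delta=\mathcal E(-\delta\nu\cdot S^0)_T$, continuity of $S^0$, and Assumption \ref{integrabilityAssumption} --- that the concrete model satisfies the abstract Assumptions \ref{bipolar}, \ref{integrabilityAs}, and \ref{complimentarity}, and then citing the abstract Theorem \ref{seconOrderThmA}, whose proof is exactly the simultaneous one-sided primal and dual quadratic expansions tied together by the duality identities of Theorem \ref{closingGapThm} and Lemma \ref{12212}, as you describe. One small correction to your description: the sandwich step does use the Fenchel (weak-duality) inequality $u(x+\Delta x,\delta)\le v(y+\Delta y,\delta)+(x+\Delta x)(y+\Delta y)$ at every small perturbed $\delta$, with $\Delta y$ chosen as in \eqref{12215}, not convex conjugacy only at the unperturbed point as you assert.
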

\begin{rem}\label{rem:415}Although we only have second order expansions, we may abuse the language and call  $H_u(x,0)$ and $H_v(y,0)$ the Hessians of $u$ and $v$, without having twice differentiability. This   causes no confusion, see the discussion e.g., in \cite{Lewis02}. The meaning of partial derivatives $u_{xx} (x,0), u_{x\delta}(x,0)$ and so on then becomes apparent by identifying entries in the Hessian matrices.
\end{rem}
\begin{thm}\label{162}
Let $x>0$ be fixed, the assumptions of Theorem \ref{mainThm1} hold, and $y = u_x(x,0)$. Then, we have
\begin{equation}\label{12154c}
\begin{pmatrix}
a(x,x) & 0\\
a(x,d) &-\frac{x}{y} \\
\end{pmatrix}
\begin{pmatrix}
b(y,y) & 0\\
b(y,d) & -\frac{y}{x} \\
\end{pmatrix} = I_2,
\end{equation}
where $I_2$ denotes two-by-two identity matrix. 
Moreover,
\begin{equation}\label{keyGapc}
\frac{y}{x}a(d,d) + \frac{x}{y}b(d,d) = a(x,d)b(y,d),
\end{equation}
\begin{equation}\label{4141}
U''(\widehat X_T(x,0))\widehat X^0_T(x,0)\begin{pmatrix}
M^0_T(x,0) + 1\\
M^1_T(x,0) + xF
\end{pmatrix} = 
-\begin{pmatrix}
a(x,x) & 0 \\
a(x,d) & -\frac{x}{y} 
\end{pmatrix}
\widehat Y^0_T(y,0)
\begin{pmatrix}
N^0_T(y,0) + 1\\
N^1_T(y,0) - yF
\end{pmatrix},
\end{equation}
\begin{equation}\nonumber
V''(\widehat Y_T(y,0))\widehat Y_T(y,0)\begin{pmatrix} 1 + N^0_T(y,0) \\ -yF + N^1_T(y,0)\\\end{pmatrix} = 
\begin{pmatrix}
b(y,y)& 0\\b(y,d) &- \frac{y}{x}\end{pmatrix}\widehat X_T(x,0)
\begin{pmatrix}
1 + M^0_T(x,0) \\ xF + M^1_T(x,0)\\
\end{pmatrix}.
\end{equation}
and the product of any of $\widehat X(x,0)$, $\widehat X(x,0)M^0(x,0)$, $\widehat X(x,0)M^1(x,0)$ and any of $\widehat Y(y,0)$, $\widehat Y(y,0)N^0(y,0)$, $\widehat Y(y,0)N^1(y,0)$ is a martingale under $\mathbb P$, where $M^0_T(x,0)$, $M^0_T(x,0)$, $N^0_T(y,0)$, and $N^1_T(y,0)$ are the solutions to 
\eqref{axxC}, \eqref{addC}, \eqref{byyC}, and \eqref{bddC}, correspondingly.

\end{thm}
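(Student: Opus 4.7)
The overall plan is to derive all four conclusions from a single source: the Legendre--Fenchel duality between the primal $u$ and dual $v$ in the $(x,y)$-variable at each $\delta$, combined with the pointwise first-order condition $\widehat X_T(x,\delta) = -V'(\widehat Y_T(u_x(x,\delta),\delta))$ linking the primal and dual terminal optimizers. Theorem \ref{mainThm2} has already identified $H_u(x,0)$ and $H_v(y,0)$ as the matrices of second-order coefficients in the Taylor expansions of $u$ and $v$; the present theorem amounts to algebraic identities among their entries, which I would extract by relating suitable directional derivatives of the optimizers to the minimizers $M^0, M^1, N^0, N^1$ of the auxiliary quadratic problems.

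For the matrix identity \eqref{12154c}, I would differentiate the conjugacy identity $v_y(u_x(x,\delta),\delta) = -x$ at $(x,0)$. The $x$-derivative gives $v_{yy}(y,0) u_{xx}(x,0) = -1$, which, using the Hessian entries from \eqref{12136C} and \eqref{12137}, becomes $a(x,x) b(y,y) = 1$, matching \eqref{122111C}. The $\delta$-derivative along the same curve gives $v_{y\delta}(y,0) + v_{yy}(y,0) u_{x\delta}(x,0) = 0$, which translates into $b(y,d) = (y/x)\, a(x,d)\, b(y,y)$, i.e., $a(x,d) b(y,y) + (-x/y) b(y,d) = 0$, the off-diagonal entry of the product in \eqref{12154c}. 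Together with the trivial diagonal/off-diagonal structure of the triangular factors, this is equivalent to the asserted matrix identity. The gap identity \eqref{keyGapc} then follows from the Legendre relation $v_{\delta\delta}(y,0) = u_{\delta\delta}(x,0) - u_{x\delta}^2(x,0)/u_{xx}(x,0)$; substituting $a(d,d), b(d,d), a(x,d), b(y,d)$ and simplifying with $a(x,x) b(y,y) = 1$ yields $(y/x) a(d,d) + (x/y) b(d,d) = (y/x) a(x,d)^2 / a(x,x) = a(x,d) b(y,d)$.

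For the pathwise identity \eqref{4141}, I would differentiate the FOC $\widehat X_T(x,\delta) = -V'(\widehat Y_T(u_x(x,\delta),\delta))$ at $(x,0)$. Differentiation in $x$ gives the top row, since the $L^2(\mathbb R(x,0))$-limits of the normalized differences of $\widehat X_T$ and $\widehat Y_T$ are, respectively, $(1 + M^0_T(x,0))$ and $(1 + N^0_T(y,0))$, and $-V''(\widehat Y_T) \widehat Y_T \cdot u_{xx}(x,0) / [U''(\widehat X_T) \widehat X_T] = a(x,x)$ by \eqref{122111C}. Differentiation in $\delta$ along the curve $y = u_x(x,\delta)$, using the chain rule and identifying the $\delta$-increments of $\widehat X_T$ and $\widehat Y_T$ with $M^1_T(x,0) + xF$ and $N^1_T(y,0) - yF$ (the shifts absorbing the contribution of the perturbation of the market price of risk), yields the bottom row with coefficients $a(x,d)$ and $-x/y$; the dual identity is obtained symmetrically. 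The martingale property of each of the nine cross products is then immediate from $\widehat X(x,0) \widehat Y(y,0)/(xy)$ being the density of $\mathbb R(x,0)$ and hence a true $\mathbb P$-martingale, together with $M^i \in \mathcal M^2(x,0)$, $N^j \in \mathcal N^2(y,0)$, and the bi-orthogonality built into the very definition of $\mathcal N^2(y,0)$, which forces $M^i N^j$ to be an $\mathbb R(x,0)$-martingale.

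The principal obstacle will be the rigorous justification of the termwise differentiation of $\widehat X_T(x,\delta)$ and $\widehat Y_T(y,\delta)$ in $\delta$ at $\delta = 0$. Because $u$ is not concave in $\delta$ (nor $v$ convex in $\delta$), envelope-type or implicit-function arguments for concave functionals do not apply, and one cannot simply pass difference quotients to the limit by monotone-convergence-style reasoning. Instead, I would invoke the abstract results of section \ref{secAbstracVersion}, which establish the $L^2(\mathbb R(x,0))$-convergence of the normalized difference quotients of $\widehat X$ and $\widehat Y$ to the minimizers of the auxiliary quadratic problems \eqref{addC} and \eqref{bddC}, crucially using Assumption \ref{integrabilityAssumption} (which controls $\zeta(c,0)$ under $\mathbb R(x,0)$ and hence provides the uniform integrability needed to exchange limits and $V'$) together with the two-sided bound on the relative risk aversion in Assumption \ref{rra}.
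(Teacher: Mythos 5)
Your route has a genuine structural gap. First, it is circular relative to what is actually available: you take as input the identification of the entries of $H_u(x,0)$ and $H_v(y,0)$ from Theorem \ref{mainThm2}, but in this paper that expansion is itself \emph{deduced from} the identities you are asked to prove --- the abstract form of Theorem \ref{162}, namely Theorem \ref{closingGapThm}, is proved first, and its conclusions \eqref{12154} and \eqref{keyGap} are exactly what Lemma \ref{12212} and Lemma \ref{5161} use to close the gap between the primal lower bound and the dual upper bound and thereby obtain \eqref{122121}--\eqref{122122} with those Hessian entries. Second, even granting the expansions as a black box, your argument differentiates the marginal conjugacy relation $v_y(u_x(x,\delta),\delta)=-x$ in $\delta$ and the pointwise first-order condition $\widehat X_T(x,\delta)=-V'(\widehat Y_T(u_x(x,\delta),\delta))$ in $(x,\delta)$. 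This requires differentiability of the gradient maps and of the optimizers in $\delta$ at $0$, which is not available: the paper only has second-order \emph{expansions}, not twice differentiability (Remark \ref{rem:415} makes this explicit), $u(x,\cdot)$ is not concave in $\delta$ so envelope/implicit-function arguments for concave families do not apply, and the convergence of the normalized difference quotients of $\widehat X_T,\widehat Y_T$ is established only in $\mathbb P$-probability (Theorem \ref{134}), not in $\mathbf L^2(\mathbb R(x,0))$ as you assert --- and that theorem is itself proved downstream of the expansion theorem, so it cannot be invoked here either.

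The paper's actual proof is static and variational, requiring no differentiation in $\delta$ at all: it combines the relation $A(\xi)(1+\alpha)=a(x,x)(1+\beta)$, $B(\eta)(1+\beta)=b(y,y)(1+\alpha)$ from \cite[Lemma 2]{KS2006} (see \eqref{12241}) with the first-order (calculus-of-variations) conditions \eqref{12242} for the quadratic problems \eqref{add} and \eqref{bdd}, and then uses the complementarity/orthogonality of $\mathcal A^2(x,0)$ and $\mathcal B^2(y,0)$ to identify the Lagrange-type constant as $a(x,d)$. This yields the pathwise identities \eqref{keyGap2} and \eqref{12251} (the abstract forms of \eqref{4141} and its dual), from which \eqref{12154} follows since $A(\xi)=1/B(\eta)$, and \eqref{keyGap} follows by a purely algebraic computation ($T_2=0$ by orthogonality, $T_1=a(x,d)b(y,d)$). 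Your treatment of the final martingale assertion (density process of $\mathbb R(x,0)$ plus $M^iN^j$ being $\mathbb R(x,0)$-martingales by the definition of $\mathcal N^2$) is sound, but the first three identities need the variational argument, or at least some independent justification of the differentiations you perform.
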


\begin{rem}
Continuing the discussion in Remark \ref{rem:415},
\eqref{12154c} implies that
\begin{displaymath}
\begin{pmatrix}
u_{xx}(x,0) & 0 \\
u_{x\delta}(x,0) & 1 \\
\end{pmatrix}\begin{pmatrix}
v_{yy}(y,0) & 0\\
v_{y\delta}(y,0) & -1\\
\end{pmatrix}= -I_2,
\end{displaymath}
where
\begin{displaymath}
 \begin{pmatrix}
u_{xx}(x,0) & 0 \\
u_{x\delta}(x,0) & 1 \\
\end{pmatrix} = -\frac{y}{x}\begin{pmatrix}
a(x,x) & 0\\
a(x,d) &-\frac{x}{y} \\
\end{pmatrix}\quad and
\quad
\begin{pmatrix}
v_{yy}(y,0) & 0\\
v_{y\delta}(y,0) & -1\\
\end{pmatrix} = \frac{x}{y}\begin{pmatrix}
b(y,y) & 0\\
b(y,d) & -\frac{y}{x} \\
\end{pmatrix}.
\end{displaymath}
Likewise, \eqref{keyGapc} gives
\begin{equation}\nonumber
-u_{\delta\delta}(x,0) + v_{\delta\delta}(y,0) = -u_{x\delta}(x,0)v_{y\delta}(y,0).
\end{equation}
\end{rem}
\begin{thm}\label{161}
Let $x>0$ be fixed, the assumptions of Theorem \ref{mainThm1} hold, and $y = u_x(x,0)$. Then the terminal values of the wealth processes $M^0(x,0)$ and $M^1(x,0)$, which are the solutions to \eqref{axxC} and \eqref{addC}, respectively,  satisfy
\begin{equation}\label{1251c}
\begin{array}{rcl}
\lim\limits_{|\Delta x| + |\delta| \to 0}\frac{1}{|\Delta x| + |\delta|}\left|
\widehat X_T(x + \Delta x, \delta) - \frac{\widehat X_T(x,0)}{x}\left(x + \Delta x (1 + M^0_T(x,0)) + \delta M^1_T(x,0) \right)\frac{1}{L^{\delta}} \right|&=& 0,\\
\end{array}
\end{equation}
where the convergence takes place in $\mathbb P$-probability and $L^\delta$'s are defined in \eqref{Ldeltadef}. Likewise, let $N^0_T(y,0)$ and $N^1_T(y,0)$, which are solutions to \eqref{byyC} and \eqref{bddC}, correspondingly, satisfy
\begin{equation}\label{1252}
\begin{array}{rcl}
\lim\limits_{|\Delta y| + |\delta |\to 0}\frac{1}{|\Delta y| + |\delta|}\left|
\widehat Y_T(y + \Delta y, \delta) - \frac{\widehat Y_T(y,0)}{y}\left(y + \Delta y(1 + N^0_T(y,0) )+ \delta N^1_T(y,0) \right)L^\delta \right|&=& 0,\\
\end{array}
\end{equation}
where the convergence takes place in $\mathbb P$-probability.
\end{thm}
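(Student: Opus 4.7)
The plan is to derive the two approximations \eqref{1251c} and \eqref{1252} from the second-order expansion of the value functions established in Theorem \ref{mainThm2}, the first-order optimality link $\widehat X_T(x',\delta) = -V'(\widehat Y_T(u_x(x',\delta),\delta))$, and the matrix identity \eqref{4141}. I would establish the dual expansion \eqref{1252} first and then deduce the primal expansion \eqref{1251c} from it by Taylor expanding $-V'$.

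For \eqref{1252}, the natural candidate for $\widehat Y_T(y+\Delta y,\delta)$ is
\[
\widetilde Y^{\Delta y,\delta}_T := \frac{\widehat Y_T(y,0)}{y}\bigl(y + \Delta y(1+N^0_T(y,0)) + \delta N^1_T(y,0)\bigr)\,L^\delta,
\]
since $N^0(y,0),N^1(y,0)\in\mathcal N^2(y,0)$ are the first-order corrections to the dual optimizer in the $y$- and $\delta$-directions under the numéraire $\widehat X(x,0)/x$ and measure $\mathbb R(x,0)$, while $L^\delta$ implements the Girsanov-type transition from model $0$ to model $\delta$. A computation of $\mathbb E[V(\widetilde Y^{\Delta y,\delta}_T)]$ via a second-order Taylor expansion of $V$ around $\widehat Y_T(y,0)$ should reproduce the right-hand side of the quadratic expansion \eqref{122122} up to $o(|\Delta y|^2+\delta^2)$. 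Then strict convexity of $V$ (from $B \geq 1/c_2$ in Assumption \ref{rra}), together with $\widetilde Y^{\Delta y,\delta}$ being approximately dual-feasible modulo $o(|\Delta y|+|\delta|)$, forces $\widehat Y_T(y+\Delta y,\delta) - \widetilde Y^{\Delta y,\delta}_T \to 0$ in $\mathbb P$-probability at the claimed rate.

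For \eqref{1251c}, insert $u_x(x+\Delta x,\delta) = y + u_{xx}(x,0)\Delta x + u_{x\delta}(x,0)\delta + o(|\Delta x|+|\delta|)$ into $-V'(\widehat Y_T(u_x(x+\Delta x,\delta),\delta))$, replace $\widehat Y_T(u_x(x+\Delta x,\delta),\delta)$ by its expansion \eqref{1252}, and first-order Taylor-expand $-V'$ around $\widehat Y_T(y,0)$. Using $L^\delta - 1 = -\delta F + o(\delta)$ and the entries of $H_u(x,0)$ from \eqref{12136C}, the resulting perturbation splits into multiples of $V''(\widehat Y_T(y,0))\widehat Y_T(y,0)(1+N^0_T)$ and $V''(\widehat Y_T(y,0))\widehat Y_T(y,0)(N^1_T - yF)$. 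Substituting the identity \eqref{4141} together with the inverse relations $a(x,x)b(y,y) = 1$ and $a(x,d)b(y,y) = \frac{x}{y}b(y,d)$ from \eqref{12154c} collapses the expression to
\[
\frac{\widehat X_T(x,0)}{x}\bigl(x + \Delta x(1+M^0_T(x,0)) + \delta(xF + M^1_T(x,0))\bigr),
\]
which coincides with the claimed form $\tfrac{\widehat X_T(x,0)}{x}(x + \Delta x(1+M^0_T) + \delta M^1_T)/L^\delta$ to first order, since $1/L^\delta = 1 + \delta F + o(\delta)$ and the extra $\delta xF$ contribution matches the first-order expansion of $1/L^\delta$ applied to the leading term $\widehat X_T(x,0)$.

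The main technical hurdle will be upgrading these expected-value computations to convergence in $\mathbb P$-probability at the rate $o(|\Delta x|+|\delta|)$. Since $v$ is not jointly convex in $(y,\delta)$, classical convex-duality compactness does not apply directly. I expect this is handled by combining the uniform integrability provided by Assumption \ref{integrabilityAssumption} (via finite moments of $\zeta(c,\delta)$), the uniform two-sided bound on $B$ from Assumption \ref{rra} (which turns $L^1$-closeness of $V(\widehat Y_T(y+\Delta y,\delta))$ to $V(\widetilde Y^{\Delta y,\delta}_T)$ into pointwise control of the ratio $\widehat Y_T(y+\Delta y,\delta)/\widetilde Y^{\Delta y,\delta}_T$), and the abstract version of Theorem \ref{161} stated in Section \ref{secAbstracVersion}, to which the concrete statement reduces once feasibility of the candidates $\widetilde Y^{\Delta y,\delta}$ is verified up to first order.
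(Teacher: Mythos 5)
Your dual-side mechanism (compare $\mathbb E[V(\text{candidate})]$ with $v(y+\Delta y,\delta)$ and use strict convexity of $V$ to turn value closeness into pathwise closeness) is indeed the engine of the paper's proof of the abstract Theorem \ref{134}, to which Theorem \ref{161} is reduced. But there is a genuine gap in how you build the candidate: you plug the actual optimizers $N^0(y,0)$, $N^1(y,0)$ of \eqref{byyC}, \eqref{bddC} into $\widetilde Y^{\Delta y,\delta}$, and these are only elements of the $\mathbf L^2(\mathbb R(x,0))$-closure $\mathcal N^2(y,0)$, hence unbounded; for any $\delta\neq 0$ the factor $y+\Delta y(1+N^0_T)+\delta N^1_T$ is negative with positive probability, so $\widetilde Y^{\Delta y,\delta}_T$ need not lie in the dual domain at all. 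The inequality that does the work, $\mathbb E\bigl[\theta(\eta^{cand}-\widehat Y_T(y+\Delta y,\delta))^2\bigr]\le \mathbb E[V(\eta^{cand})]-v(y+\Delta y,\delta)$, requires the cross term $\mathbb E\bigl[V'(\widehat Y_T(y+\Delta y,\delta))(\eta^{cand}-\widehat Y_T(y+\Delta y,\delta))\bigr]$ to be nonnegative, which rests on \emph{exact} feasibility of the candidate (so that $\mathbb E[\widehat X_T(x',\delta)\,\eta^{cand}]\le x'y'$); ``approximately feasible modulo $o(|\Delta y|+|\delta|)$'' is not enough, because the whole estimate is at the quadratic scale $o(\Delta y^2+\delta^2)$. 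The paper avoids this by taking \emph{bounded} $\beta^0,\beta^1\in\mathcal B^\infty(y,0)$ (exactly feasible candidates, and also comparable to $\widehat Y_T(y,0)$, which is what makes the random constant $\theta=\inf_{\phi_1\le t\le\phi_2}V''(t)$ strictly positive and makes the expansion of $\mathbb E[V(\cdot)]$ in Lemma \ref{157} provable), and then lets $\beta^0,\beta^1$ approximate $N^0,N^1$ in $\mathbf L^2(\mathbb R(x,0))$, making $\Vert H_{\bar w}-H_v(y,0)\Vert$ arbitrarily small. Your sketch needs this two-step (bounded candidates, then approximation) structure; with the optimizers inserted directly it does not go through.

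The primal half is also not supported as written. You deduce \eqref{1251c} from \eqref{1252} via $\widehat X_T(x+\Delta x,\delta)=-V'\bigl(\widehat Y_T(u_x(x+\Delta x,\delta),\delta)\bigr)$ together with the expansion $u_x(x+\Delta x,\delta)=y+u_{xx}(x,0)\Delta x+u_{x\delta}(x,0)\delta+o(|\Delta x|+|\delta|)$. Nothing in the paper gives this expansion of the \emph{gradient}: Theorem \ref{mainThm2} provides only a second-order expansion of $u$ at the single point $(x,0)$ (twice differentiability is explicitly not claimed, cf.\ Remark \ref{rem:415}), and since $u$ is not jointly concave in $(x,\delta)$ one cannot pass from an expansion of the function to an expansion of $u_x$ along perturbed $\delta$. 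The paper does not transfer the dual statement to the primal; it proves \eqref{1251c} by repeating the same variational argument on the primal side with bounded $\alpha^0,\alpha^1\in\mathcal A^\infty(x,0)$, strict concavity of $U$ (via Assumption \ref{rra}), and the primal feasibility/duality inequality $\mathbb E[U'(\widehat X_T(x',\delta))(\xi^{cand}-\widehat X_T(x',\delta))]\le 0$. If you rework both halves along these lines, your outline becomes the paper's proof of Theorem \ref{134}, from which the concrete statement follows by the identifications already made in Section \ref{secProofC}.
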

One can obtain the following corollary.
\begin{cor}\label{2151}
Let $x>0$ be fixed, the assumptions of Theorem \ref{mainThm1} hold,  and $y = u_x(x,0)$. Then, if we define
$$X'_T(x, 0)\triangleq \frac{\widehat X_T(x,0)}{x} (1 + M^0_T(x,0)),\ \ Y'_T(y,0)\triangleq 
\frac{\widehat Y_T(y,0)}{y}(1 + N^0_T(y,0) ),$$ 
and
$$X^d_T(x,0)\triangleq  \frac{\widehat X_T(x,0)}{x} (M^1_T(x,0) +xF),\ \ Y^d_T(y,0)\triangleq 
\frac{\widehat Y_T(y,0)}{y}(N^1_T(y,0) - yF),$$
we have
\begin{equation}\nonumber
\begin{array}{rcl}
\lim\limits_{|\Delta x| + |\delta| \to 0}\frac{1}{|\Delta x| + |\delta|}\left|
\widehat X_T(x + \Delta x, \delta) -\widehat X_T(x,0)- \Delta x X'_T(x,0) -\delta X^d_T(x,0)\right|&=& 0,\\
\end{array}
\end{equation}
\begin{equation}\nonumber
\begin{array}{rcl}
\lim\limits_{|\Delta y| + |\delta |\to 0}\frac{1}{|\Delta y| + |\delta|}\left|
\widehat Y_T(y + \Delta y, \delta) - \widehat Y_T(y,0)-\Delta y Y'_T(y,0)-\delta Y^d_T(y,0)\right|&=& 0,\\
\end{array}
\end{equation}
where the convergence takes place in $\mathbb P$-probability.
\end{cor}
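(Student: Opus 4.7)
The plan is to derive Corollary \ref{2151} directly from Theorem \ref{161} by expanding the Dol\'eans-Dade exponential $L^\delta$ to first order in $\delta$ and absorbing the cross terms into an $o(|\Delta x|+|\delta|)$ remainder.

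First I would establish first-order expansions of $L^\delta$ and $1/L^\delta$. Since $M$ is continuous and $S^0 = \lambda\cdot\langle M\rangle + M$, we have $\langle S^0\rangle = \langle M\rangle$, so
$$L^\delta = \exp\!\left(-\delta F - \tfrac{1}{2}\delta^2 G\right) = 1 - \delta F + o(\delta),\qquad \frac{1}{L^\delta} = 1 + \delta F + o(\delta),$$
both in $\mathbb P$-probability as $\delta \to 0$, where $F$ and $G$ are as in \eqref{defFG}.

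Next, set $A \triangleq \widehat X_T(x,0)/x$ and $K \triangleq x + \Delta x(1+M^0_T(x,0)) + \delta M^1_T(x,0)$. Using the definitions of $X'_T(x,0)$ and $X^d_T(x,0)$, a direct computation yields
$$\widehat X_T(x,0) + \Delta x\, X'_T(x,0) + \delta\, X^d_T(x,0) = A\bigl(K + \delta x F\bigr),$$
so that the additive residual equals
$$\frac{AK}{L^\delta} - A(K + \delta x F) = A\Bigl[K\!\left(\tfrac{1}{L^\delta}-1\right) - \delta x F\Bigr] = A\bigl[(K-x)\delta F + K\, R_\delta\bigr],$$
where $R_\delta \triangleq 1/L^\delta - 1 - \delta F = o(\delta)$ in probability. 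Writing $K - x = \Delta x(1+M^0_T(x,0)) + \delta M^1_T(x,0)$, the term $(K-x)\delta F$ expands to $\delta^2 M^1_T(x,0) F + \Delta x\,\delta\,(1+M^0_T(x,0)) F$; both $\delta^2$ and $|\Delta x\,\delta|$ are bounded by $\min(|\Delta x|,|\delta|)\cdot(|\Delta x|+|\delta|)$, hence $o(|\Delta x|+|\delta|)$ after multiplication by the a.s.-finite random variables. Since moreover $K \to x$ in probability, the full bracketed expression divided by $|\Delta x|+|\delta|$ tends to $0$ in probability. Applying the triangle inequality together with \eqref{1251c} gives the primal claim; the dual claim follows identically, using $L^\delta = 1 - \delta F + o(\delta)$ and $-yF$ in place of $xF$ combined with \eqref{1252}.

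The main (rather mild) obstacle is the passage from the multiplicative formulation in Theorem \ref{161} to the additive one asked for in the corollary. Because the convergence in \eqref{1251c}--\eqref{1252} is only in probability, one cannot appeal to moment bounds; instead, one uses the a.s.\ finiteness of the ``constant'' random variables $A, F, M^0_T(x,0), M^1_T(x,0)$ (respectively $N^0_T(y,0), N^1_T(y,0)$) together with the fact that $\delta^2$ and $\Delta x\,\delta$ are both $o(|\Delta x|+|\delta|)$. Beyond Theorem \ref{161} and the explicit form of the continuous Dol\'eans-Dade exponential, no further machinery is required.
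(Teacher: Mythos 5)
Your argument is correct and is precisely the route the paper intends: Corollary \ref{2151} is stated without proof as an immediate consequence of Theorem \ref{161} (mirroring the abstract corollary following Theorem \ref{134}), obtained exactly as you do by expanding $1/L^\delta = 1+\delta F+o(\delta)$ (respectively $L^\delta = 1-\delta F+o(\delta)$) pointwise a.s.\ and absorbing the cross terms in $\Delta x\,\delta$ and $\delta^2$, multiplied by a.s.\ finite random variables, into the $o(|\Delta x|+|\delta|)$ remainder before invoking \eqref{1251c} and the triangle inequality. The only blemish is the claimed bound $\delta^2\le \min(|\Delta x|,|\delta|)\,(|\Delta x|+|\delta|)$, which fails when $|\Delta x|<|\delta|$; the trivial bound $\delta^2\le|\delta|\,(|\Delta x|+|\delta|)$ is all you need, and with it the rest of your estimate goes through verbatim.
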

\begin{rem}
Even though Corollary \ref{2151} gives a more explicit form of the derivatives of the terminal wealth, an approximation given in \eqref{1251c} turns out to be more useful in applications.
\end{rem}

\section{Approximation of the optimal trading strategies}
\label{approxTradingStrategies}
Below in this section we will suppose that $x>0$ is fixed.
Let us denote \begin{equation}\label{4231}
M^R \triangleq S^0 - \widehat \pi(x,0)\cdot \langle M\rangle,
\end{equation}
where $\widehat \pi(x,0) = (\widehat \pi_t(x,0))_{t\in[0,T]}$ is the optimal {\it proportion} 
invested in stock corresponding the initial wealth $x$ and $\delta =0$.
Note that for every predictable pair of processes $G^1$ and $G^2$, such that both integrals $G^1\cdot \left(\frac{1}{\widehat X(1,0)}\right)$ and $G^2\cdot \left(\frac{S^0}{\widehat X(1,0)}\right)$ are well-defined, by direct computations, we can find a process $G$, such that $$G^1\cdot \left(\frac{1}{\widehat X(1,0)}\right) + G^2\cdot\left( \frac{S^0}{\widehat X(1,0)} \right)= G\cdot M^R.$$ 

Let $\gamma^0$ and $\gamma^1$ be such that $$\gamma^0\cdot M^R = \frac{M^0(x,0)}{x}\quad and \quad \gamma^1\cdot M^R = \frac{M^1(x,0)}{x}.$$
We need to define the following families of stopping times.
\begin{equation}\nonumber
\begin{array}{rcl}
\sigma_\varepsilon &\triangleq &\inf\left\{ t\in [0,T]:~|M^0_t(x,0)|\geq \frac{x}{\varepsilon}~or~\langle M^0(x,0)\rangle_t\geq \frac{x}{\varepsilon}\right\}, \\
\tau_\varepsilon &\triangleq &\inf\left\{ t\in [0,T]:~|M^1_t(x,0)|\geq \frac{x}{\varepsilon}~or~\langle M^1(x,0)\rangle_t\geq \frac{x}{\varepsilon}\right\},\quad\varepsilon>0, \\
\end{array}
\end{equation} 
we also set
\begin{equation}\nonumber
\gamma^{0,\varepsilon} = \gamma^01_{\{[0,\sigma_\varepsilon]\}}\quad and \quad \gamma^{1,\varepsilon} = \gamma^11_{\{[0,\tau_\varepsilon]\}},\quad \varepsilon>0.
\end{equation}
Finally, for every $(\Delta x, \delta, \varepsilon)\in(-x,\infty)\times\mathbb R\times (0,\infty)$, let us define
\begin{equation}\label{4232}
 X^{\Delta x, \delta, \varepsilon} \triangleq (x + \Delta x)\mathcal E\left(\left(\widehat \pi(x,0) + \Delta x\gamma^{0,\varepsilon} + \delta(\nu+ \gamma^{1,\varepsilon})\right)\cdot S^\delta\right).
\end{equation} 

\begin{thm}
\label{thmCorOptimizer}
Assume that $x>0$ is fixed and the assumptions of Theorem \ref{mainThm1} hold. 
Then, there exists a function $\varepsilon = \varepsilon(\Delta x, \delta)$, $(\Delta x,\delta)\in(-x,\infty)\times\mathbb R$, such that
\begin{equation}\nonumber
\mathbb E\left[ U\left(X^{\Delta x, \delta, \varepsilon(\Delta x, \delta)}_T\right) \right] = u(x + \Delta x, \delta) - o(\Delta x^2 + \delta^2),
\end{equation}
where $X^{\Delta x, \delta, \varepsilon}$ is defined in \eqref{4232}.
\end{thm}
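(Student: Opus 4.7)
The plan is to compare $\mathbb E[U(X^{\Delta x,\delta,\varepsilon}_T)]$ with the quadratic expansion of $u(x+\Delta x,\delta)$ from Theorem \ref{mainThm2} by factorizing $X^{\Delta x,\delta,\varepsilon}_T$ pathwise and matching it, up to second order, against the first-order expansion of $\widehat X_T(x+\Delta x,\delta)$ established in Theorem \ref{161}.

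First, using $S^\delta = S^0+\delta\nu\cdot\langle M\rangle$ and the product rule $\mathcal E(X)\mathcal E(Y) = \mathcal E(X+Y+[X,Y])$ for continuous semimartingales, applied twice, one obtains
$$X^{\Delta x,\delta,\varepsilon}_T = \frac{x+\Delta x}{x}\,\frac{\widehat X_T(x,0)}{L^\delta}\,\mathcal E\!\left(\frac{\Delta x\,M^0(x,0)^{\sigma_\varepsilon} + \delta\,M^1(x,0)^{\tau_\varepsilon}}{x}\right)_{\!\!T},$$
by first cancelling the $\delta\nu$ part of the exponential against $L^\delta$ and then separating $\widehat X(x,0) = x\,\mathcal E(\widehat\pi(x,0)\cdot S^0)$, using that $\gamma^{i,\varepsilon}\cdot M^R$ equals $M^i(x,0)^{\sigma_\varepsilon}/x$ and $M^1(x,0)^{\tau_\varepsilon}/x$ by the definitions of $\gamma^{0,\varepsilon},\gamma^{1,\varepsilon}$.

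Next, on $A_\varepsilon:=\{\sigma_\varepsilon=T\}\cap\{\tau_\varepsilon=T\}$ the stopping is inactive, and Taylor-expanding the stochastic exponential yields
$$X^{\Delta x,\delta,\varepsilon}_T = \frac{\widehat X_T(x,0)}{x\,L^\delta}\bigl(x+\Delta x(1+M^0_T(x,0))+\delta\,M^1_T(x,0)\bigr) + \rho_\varepsilon(\Delta x,\delta),$$
$\mathbb P$-a.s.\ on $A_\varepsilon$, with a pointwise remainder satisfying $|\rho_\varepsilon(\Delta x,\delta)|=o(|\Delta x|+|\delta|)$ in $\mathbb P$-probability thanks to the $x/\varepsilon$-bound on the stopped processes and their quadratic variations. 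Combined with \eqref{1251c}, this identifies $X^{\Delta x,\delta,\varepsilon}_T$ with $\widehat X_T(x+\Delta x,\delta)$ up to first order on $A_\varepsilon$. Taylor-expanding $U$ to second order around $\widehat X_T(x,0)$, integrating, and invoking the $\mathbb P$-martingale properties of the various products $\widehat X\widehat Y$, $\widehat X M^i\,\widehat Y$, $\widehat X\,\widehat Y N^j$ from Theorem \ref{162} together with the change of measure to $\mathbb R(x,0)$, the linear part reproduces $\Delta x\,y + \delta\,u_\delta(x,0)$, while the quadratic part recovers $\tfrac12(\Delta x,\,\delta)\,H_u(x,0)\,(\Delta x,\,\delta)^\top$ via the characterizations \eqref{axxC}, \eqref{addC}, \eqref{axdC} of $M^0(x,0),M^1(x,0)$ as minimizers. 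Hence, up to the third-order Taylor remainder (controlled via Assumption \ref{rra}) and the integration over $A_\varepsilon^c$, $\mathbb E[U(X^{\Delta x,\delta,\varepsilon}_T)]$ matches the quadratic expansion \eqref{122121} of $u(x+\Delta x,\delta)$.

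The main obstacle is the diagonal choice $\varepsilon=\varepsilon(\Delta x,\delta)\to 0$ as $(\Delta x,\delta)\to 0$: it must be slow enough that on $A_\varepsilon$ the pathwise remainder $\rho_\varepsilon$ and the Taylor remainder in $U$ are both $o(\Delta x^2+\delta^2)$ in expectation, yet fast enough that $\mathbb P(A_\varepsilon^c)$ multiplied by the worst-case integrability bound on $A_\varepsilon^c$---supplied by the exponential-moment control in Assumption \ref{integrabilityAssumption} and the boundedness of relative risk aversion---is also $o(\Delta x^2+\delta^2)$. This balancing of probabilistic decay of $\mathbb P(A_\varepsilon^c)$ against the moment growth of $U(\widehat X_T(x+\Delta x,\delta))$ and $U(X^{\Delta x,\delta,\varepsilon}_T)$, uniformly for small $(\Delta x,\delta)$, is the technical crux of the proof.
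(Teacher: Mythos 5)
Your opening factorization is correct and is in fact exactly what the paper uses: writing $X^{\Delta x,\delta,\varepsilon}_T=\frac{x+\Delta x}{x}\widehat X_T(x,0)\,\mathcal E\bigl((\Delta x\gamma^{0,\varepsilon}+\delta\gamma^{1,\varepsilon})\cdot M^R\bigr)_T\frac{1}{L^\delta}$ is the content of the function $\psi(\Delta x,\delta)$ in Lemma \ref{51913}, and the inequality $\mathbb E[U(X^{\Delta x,\delta,\varepsilon}_T)]\le u(x+\Delta x,\delta)$ is free because the candidate is admissible, so only a matching lower bound against \eqref{122121} is needed. The gap is in how you propose to obtain that lower bound. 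First, the appeal to Theorem \ref{161} is a detour that does no work: a first-order approximation of $\widehat X_T(x+\Delta x,\delta)$ \emph{in probability} does not transfer to a second-order statement about expected utilities, and indeed your subsequent Taylor expansion of $U$ around $\widehat X_T(x,0)$ never uses it. Second, and decisively, the step you yourself label ``the technical crux'' --- the construction of $\varepsilon(\Delta x,\delta)$ --- is not carried out, and the bookkeeping you set up makes it genuinely problematic. By expanding only on $A_\varepsilon=\{\sigma_\varepsilon=T\}\cap\{\tau_\varepsilon=T\}$ and bounding $A_\varepsilon^c$ by ``worst-case integrability,'' you destroy the cancellation that makes the scheme work: restricted to $A_\varepsilon$, the first-order terms $\mathbb E^{\mathbb R(x,0)}[(1+M^0_T)\mathbf 1_{A_\varepsilon}]$, $\mathbb E^{\mathbb R(x,0)}[(M^1_T+xF)\mathbf 1_{A_\varepsilon}]$ no longer reproduce the exact gradient, so the $A_\varepsilon^c$ contribution enters at order $(|\Delta x|+|\delta|)\,m(\varepsilon)$ with $m(\varepsilon)=\mathbb E^{\mathbb R(x,0)}[(\cdot)\mathbf 1_{A_\varepsilon^c}]$. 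Making this $o(\Delta x^2+\delta^2)$ forces $\varepsilon(\Delta x,\delta)\to 0$ \emph{fast} relative to $|\Delta x|+|\delta|$, while the Taylor remainder on $A_\varepsilon$ --- whose only available domination is through the $x/\varepsilon$ thresholds, since no exponential moments of $M^0_T$, $M^1_T$ are assumed --- is controlled only for $|(\Delta x,\delta)|$ small \emph{depending on} $\varepsilon$, forcing $\varepsilon$ to go to zero slowly. Nothing in your proposal shows these two windows overlap, and there is no a priori relation between them.

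The paper's proof avoids the dichotomy rather than balancing it. In Lemma \ref{51913} the expectation is never split: one expands $w(\Delta x,\delta)=\mathbb E[U(\widehat X_T(x,0)\psi(\Delta x,\delta))]$ over all of $\Omega$, where the stopped integrals $\gamma^{0,\varepsilon}\cdot M^R$ and $\gamma^{1,\varepsilon}\cdot M^R$ still have zero $\mathbb R(x,0)$-expectation; hence the gradient of $w$ is \emph{exactly} $\nabla u(x,0)$ for every $\varepsilon>0$, and the stopping error shows up only in the second-order coefficients, an approximate Hessian $H^\varepsilon_u(x,0)$ with $H^\varepsilon_u(x,0)\to H_u(x,0)$ as $\varepsilon\to 0$. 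This leaves a single one-sided requirement --- let $\varepsilon(\Delta x,\delta)\to 0$ no faster than the $\varepsilon$-dependent neighborhood of validity of the expansion allows --- which is then met by the explicit construction of $g$, $\phi(\varepsilon)$, $r(\varepsilon)$ and $\varepsilon(\delta,\Delta x)$ in the proof of Theorem \ref{thmCorOptimizer}. To repair your argument you should either keep the whole-space expectation (recovering the paper's scheme, with the stopped variables appearing in the second-order terms and a quantified, locally uniform remainder bound for each fixed $\varepsilon$), or else supply the missing quantitative comparison between the shrinking radius of validity on $A_\varepsilon$ and the decay of $m(\varepsilon)$ --- which, as stated, you have not done and which does not follow from Assumption \ref{integrabilityAssumption} alone.
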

\begin{rem}
Theorem \ref{thmCorOptimizer} shows how to correct the optimal proportion in order to match the primal value function up to the second order {\it jointly in $(\Delta x,\delta)$}.
\end{rem}
\begin{rem}
Proportions have a nicer representation of the corrections to optimal trading strategies in terms of the quadratic optimization problems \eqref{axxC}  and \eqref{addC} because the optimal wealth process was used as num\'eraire, i.e., $\hat{X}(x,0)/x$  has a multiplicative structure.  The result in Theorem \ref{thmCorOptimizer} compliments the results in \cite{KS2006}  and (in a  different additive random endowment framework) those in \cite{KS2006b} in the context of a one-dimensional and continuous stock model.
\end{rem}
\section{Abstract version}\label{secAbstracVersion}

\section*{Abstract version for $0$-model}

We begin with the formulation of the abstract version for $0$-model. As in \cite{Mostovyi2015}, let $(\Omega, \mathcal F, \mathbb P)$ be a measure space and we define the sets $\mathcal C$ and $\mathcal D$ to be subsets of $\mathbf L^0_{+}$ that satisfy the following assumption. Note that Assumption \ref{bipolar} is the abstract version of no unbounded profit with bounded risk condition \eqref{NUPBR}.
\begin{as}\label{bipolar}
Both $\mathcal C$ and $\mathcal D$ contain a stricly positive element and
$$\xi\in\mathcal C\quad iff\quad \mathbb E\left[\xi\eta\right]\leq 1\quad for~every~\eta\in\mathcal D,$$
as well as
$$\eta\in\mathcal D\quad iff\quad \mathbb E\left[\xi\eta\right]\leq 1\quad for~every~\xi\in\mathcal C.$$
\end{as}
We also set $\mathcal C(x, 0) \triangleq x\mathcal C$ and $\mathcal D(x, 0) \triangleq x\mathcal D$, $x>0$.
Now we can state the abstract primal and dual problems as
\begin{equation}\label{abstractPrimal}
u(x,0) \triangleq \sup\limits_{\xi\in\mathcal C(x, 0)}\mathbb E\left[ U(\xi)\right],\quad x>0,
\end{equation}
\begin{equation}\label{abstractDual}
v(y,0) \triangleq \inf\limits_{\eta\in\mathcal D(y, 0)}\mathbb E\left[ V(\eta)\right],\quad y>0.
\end{equation}
Under finiteness of both primal and dual value functions on $\mathbb R$, existence and uniqueness of solutions to \eqref{abstractPrimal} and \eqref{abstractDual} follow from \cite[Theorem 3.2]{Mostovyi2015}. Likewise, with a deterministic utility function that has reasonable asymptotic elasticity, if $u(x, 0)<\infty$ for some $x>0$, standard conclusions of the utility maximization theory also follow from the abstract theorems in \cite{KS} (see the discussion in \cite[Remark 2.5]{MostovyiNUPBR}).
\section*{Abstract version for $\delta$-models}
For some random variables $F$ and $G\geq 0$, let us set
\begin{equation}\label{Ldeltadef}
L^\delta \triangleq \exp\left( -(\delta F + \tfrac{1}{2}\delta^2 G)\right),
\end{equation}
\begin{equation}\label{bipolarL}
\mathcal C(x,\delta) \triangleq \mathcal C(x,0)\frac{1}{L^\delta}\quad and \quad\mathcal D(y,\delta) \triangleq \mathcal D(y,0){L^\delta},\quad \delta \in\mathbb R.
\end{equation}
Now, we can state the abstract versions of the perturbed optimization problems.
\begin{equation}\label{perturbedAbstractPrimal}
u(x,\delta) \triangleq \sup\limits_{\xi\in\mathcal C(x, \delta)}\mathbb E\left[ U(\xi)\right] = 
\sup\limits_{\xi\in\mathcal C(x,0)}\mathbb E\left[ U\left(\xi \frac{1}{L^\delta}\right)\right] ,\quad (x,\delta) \in(0,\infty)\times\mathbb R,
\end{equation}
\begin{equation}\label{perturbedAbstractDual}
v(y,\delta) \triangleq \inf\limits_{\eta\in\mathcal D(y, \delta)}\mathbb E\left[V(\eta)\right] = 
\inf\limits_{\eta\in\mathcal D(y,0)}\mathbb E\left[ V\left(\eta {L^\delta}\right)\right] ,\quad (y,\delta) \in(0,\infty)\times\mathbb R.
\end{equation}
Under an appropriate integrability assumption specified below, existence and uniqueness of solutions to \eqref{perturbedAbstractPrimal} and \eqref{perturbedAbstractDual} as well as conjugacy relations between $u(\cdot, \delta)$ and $v(\cdot, \delta)$ for every $\delta$ sufficiently close to $0$ will follow from \cite[Theorem 3.2]{Mostovyi2015}.

\section*{Condition on perturbations}

Let $\xi(x,\delta)$ and $\eta(y,\delta)$ denote the solutions to \eqref{perturbedAbstractPrimal} and \eqref{perturbedAbstractDual}, respectively, if such solutions exist. By $\mathbb R(x,\delta)$ we denote the probability measure on $(\Omega, \mathcal F)$, whose Radon-Nikodym derivative with respect to $\mathbb P$ is given by
\begin{equation}\label{abstractR}
\frac{d\mathbb R(x,\delta)}{d\mathbb P}\triangleq \frac{\xi(x,\delta)\eta(y,\delta)}{xy},
\end{equation}
where $x>0$, $\delta \in\mathbb R$, and $y = u_x(x,\delta)$.
\begin{as}\label{integrabilityAs}
Let there exists $c>0$, such that 
$$ \mathbb {E}^{\mathbb R(x,0)}\left[ \exp\left( c(|F| + G)\right)\right]<\infty.$$
\end{as}
Note that, $\mathbb R(x,0)$ is well-defined for every $x>0$.

\section*{Expansion theorems}
\section*{Auxiliary sets $\mathcal A$ and $\mathcal B$}
As in \cite{KS2006}, for every $x>0$ and $\delta \in\mathbb R$, we denote by $\mathcal A^\infty(x, \delta)$ the family of bounded random variables $\alpha$, such that $\xi(x,\delta)(1+c\alpha)$ and $\xi(x,\delta)(1 - c\alpha)$ belong to $\mathcal C(x,\delta)$ for some constant $c = c(\alpha)>0$, that is

\begin{equation}\label{Ainfty}
\mathcal A^\infty(x,\delta)\triangleq\left\{\alpha\in\mathbf L^\infty:\xi(x,\delta)(1\pm c\alpha)\in\mathcal C(x,\delta)~for~some~c>0\right\}.
\end{equation}Likewise, for $y>0$ and $\delta \in\mathbb R$, we set
\begin{equation}\label{Binfty}
\mathcal B^\infty(y,\delta)\triangleq\left\{\beta\in\mathbf L^\infty:\eta(y,\delta)(1\pm c\beta)\in\mathcal D(y,\delta)~for~some~c>0\right\}.
\end{equation}
It follows from the Assumption \ref{bipolar} that for every $x>0$, $\mathcal A^\infty(x,\delta)$ and $\mathcal B^\infty(u_x(x,\delta), \delta)$ are orthogonal linear subspaces of $\mathbf L^2_0(\mathbb R(x,\delta)).$

Let us denote by $\mathcal A^2(x,\delta)$ and $\mathcal B^2(y,\delta)$ the respective closures of $\mathcal A^\infty(x,\delta)$ and $\mathcal B^\infty(y,\delta)$ in $\mathbf L^2_0(\mathbb R(x,\delta))$. One can see that $\mathcal A^2(x,\delta)$ and $\mathcal B^2(y,\delta)$ are closed orthogonal linear subspaces of $\mathbf L^2(\mathbb R(x,\delta))$. In order to make these sets related to the concrete versions of the expansion theorems, we need the following assumption.
\begin{as}\label{complimentarity}
For every $\delta \in \mathbb R$ and $x>0$, with $y = u_x(x,\delta)$, the sets $\mathcal A^2(x,\delta)$ and $\mathcal B^2(y,\delta)$ are complimentary linear subspaces in $\mathbf L^2(\mathbb R(x,\delta))$, i.e.
\begin{equation}\label{complimentarityEq}
\begin{array}{rclclc}
\alpha \in \mathcal A^2(x,\delta) &iff& \alpha\in\mathbf L^2_0(\mathbb R(x,\delta)) &and& \mathbb E^{\mathbb R(x,0)}\left[\alpha\beta\right] = 0,& for~every~\beta\in\mathcal B^2(y,\delta),\\
\beta \in \mathcal B^2(y,\delta) &iff& \beta\in\mathbf L^2_0(\mathbb R(x,\delta)) &and& \mathbb E^{\mathbb R(x,0)}\left[\alpha\beta\right] = 0,& for~every~\alpha\in\mathcal A^2(x,\delta).\\
\end{array}
\end{equation}
\end{as}

The following theorem shows joint continuity, and differentiability, and is a consequence of the second-order expansion.
\begin{thm}\label{byproduct}
Let $x>0$ be fixed. Suppose that assumptions \ref{rra}, \ref{bipolar}, \ref{integrabilityAs}, and \ref{complimentarity} hold, $u(z,0)<\infty$ for some $z>0$, and $y=u_x(x,0)$, which is well-defined by the abstract theorems in \cite{KS}.  Then there exists $\delta_0>0$ such that for every $\delta \in(-\delta _0,\delta_0)$, we have
\begin{equation}\label{1244}
u(x,\delta)\in\mathbb R,\quad x>0,\quad and\quad
v(y,\delta)\in\mathbb R,\quad y>0.
\end{equation}
In addition, $u$ and $v$ are jointly differentiable (and, consequently, continuous) at $(x,0)$ and $(y,0)$, respectively. We also have
\begin{equation}\label{gradient}\nabla u (x,0)=\begin{pmatrix}y\\ u_{\delta}(x,0)\end{pmatrix}\quad and \quad \nabla v (y,0)=\begin{pmatrix}-x\\v_{\delta}(y,0)\end{pmatrix},
\end{equation} where
\begin{displaymath}
u_\delta(x,0) =v_\delta(y,0)= xy\mathbb E^{\mathbb R(x,0)}\left[ F\right].
\end{displaymath}
\end{thm}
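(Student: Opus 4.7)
My plan exploits the multiplicative structure $\mathcal{D}(y,\delta) = \mathcal{D}(y,0) L^\delta$ and $\mathcal{C}(x,\delta) = \mathcal{C}(x,0)/L^\delta$ from \eqref{bipolarL}, which lets me treat $L^\delta$ as a random re-weighting of the unperturbed feasible elements. I first establish the finiteness \eqref{1244} for $\delta$ in a neighborhood of $0$ by using test elements: $\eta(y,0) L^\delta \in \mathcal{D}(y,\delta)$ bounds $v(y,\delta) \le \mathbb E[V(\eta(y,0) L^\delta)]$ from above, and $\xi(x,0)/L^\delta \in \mathcal{C}(x,\delta)$ bounds $u(x,\delta)$ from below. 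The relative-risk-aversion bounds in Assumption \ref{rra} give polynomial control of $V(\lambda z)$ in $\lambda$ (in the spirit of Lemma \ref{11303}), which combined with the exponential integrability in Assumption \ref{integrabilityAs} makes the bounding expectations finite for $|\delta|$ small. The abstract results in \cite{Mostovyi2015} then deliver existence and uniqueness of the perturbed optimizers $\xi(x,\delta)$, $\eta(y,\delta)$ as well as conjugacy of $u(\cdot,\delta)$, $v(\cdot,\delta)$ on $(0,\infty)$.

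For the $\delta$-derivative at $\delta = 0$, I set $\varphi(\delta) \triangleq \mathbb E[V(\eta(y,0) L^\delta)]$, so that $v(y,\delta) \le \varphi(\delta)$ with equality at $\delta = 0$. Integrability from Assumption \ref{integrabilityAs} permits differentiation under the expectation, and $\partial_\delta L^\delta|_{\delta = 0} = -F$ gives $\varphi'(0) = -\mathbb E[V'(\eta(y,0)) \eta(y,0) F]$. The Fenchel first-order condition $\xi(x,0) = -V'(\eta(y,0))$ (valid because $y = u_x(x,0)$) converts this into $\varphi'(0) = \mathbb E[\xi(x,0) \eta(y,0) F] = xy\,\mathbb E^{\mathbb R(x,0)}[F]$, which yields $\limsup_{\delta \downarrow 0}(v(y,\delta) - v(y,0))/\delta \le xy\,\mathbb E^{\mathbb R(x,0)}[F]$ and the reverse for $\delta \uparrow 0$. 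The analogous test element $\xi(x,0)/L^\delta$ on the primal side gives $\liminf_{\delta \downarrow 0}(u(x,\delta) - u(x,0))/\delta \ge xy\,\mathbb E^{\mathbb R(x,0)}[F]$, and the conjugacy inequality $u(x,\delta) - u(x,0) \le v(y,\delta) - v(y,0)$ (obtained from $u(x,\delta) \le v(y,\delta) + xy$ at the $0$-optimal $y$) furnishes the matching reverse inequality. Combining these pins down $u_\delta(x,0) = v_\delta(y,0) = xy\,\mathbb E^{\mathbb R(x,0)}[F]$.

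The spatial derivatives $u_x(x,0) = y$ and $v_y(y,0) = -x$ are standard from \cite{KS}. To upgrade coordinate-wise differentiability to joint differentiability at $(x,0)$, $(y,0)$, I use a two-sided sandwich. The upper bound $v(y+\Delta y, \delta) \le \mathbb E[V((y + \Delta y)\eta(y,0) L^\delta / y)]$ uses the test element $(y + \Delta y)\eta(y,0) L^\delta / y \in \mathcal D(y + \Delta y, \delta)$ and, by dominated convergence with the same integrability, is jointly $C^1$ in $(\Delta y, \delta)$ at the origin, expanding as $v(y,0) - x\,\Delta y + v_\delta(y,0)\,\delta + o(|\Delta y| + |\delta|)$. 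The matching lower bound follows from $v(y + \Delta y, \delta) \ge u(x, \delta) - x(y + \Delta y)$ for the fixed $x$ with $u_x(x,0) = y$, once $u(x,\delta)$ is expanded in $\delta$ by the primal test element $\xi(x,0)/L^\delta$ as above. The resulting $o(|\Delta y| + |\delta|)$ remainder on both sides yields \eqref{gradient}.

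The main obstacle is the absence of joint convexity/concavity of $v$ (and $u$) in $(y,\delta)$, which blocks any direct application of convex-analytic differentiability results. My sandwich resolves this precisely because the bounding expressions are deterministic smooth functions of $(\Delta y, \delta)$ built from the $0$-optimizers and the explicit exponential $L^\delta$, while the exponential integrability in Assumption \ref{integrabilityAs}, together with the risk-aversion bounds of Assumption \ref{rra}, is the workhorse that legitimizes every interchange of limit and expectation. Assumption \ref{complimentarity}, though more central to the second-order expansions, guarantees the orthogonal decomposition underlying the conjugacy relations used when transferring perturbations between primal and dual.
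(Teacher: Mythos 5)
Your proposal is correct, but it is not the route the paper takes. The paper obtains Theorem \ref{byproduct} as a byproduct of the full second-order machinery: its proof simply invokes Lemma \ref{5161}, which sandwiches $u$ and $v$ between the two-parameter \emph{second-order} test-element expansions (Lemma \ref{12132} and Corollary \ref{cor12131} for the primal lower bound, Lemmas \ref{157} and \ref{lem12131} for the dual upper bound) and then shows the two quadratic forms coincide via Lemma \ref{12212}; that matching rests on the duality relations \eqref{122111}, \eqref{12154}, \eqref{keyGap} for the auxiliary quadratic problems \eqref{axx}--\eqref{byd}, and hence on Assumption \ref{complimentarity}. You instead run a purely first-order argument: test elements $\eta(y,0)L^\delta$ and $\xi(x,0)/L^\delta$, the risk-aversion comparisons of Lemma \ref{11303}/Corollary \ref{rraCor} plus the exponential moment of Assumption \ref{integrabilityAs} to justify finiteness and differentiation under the expectation, and the weak-duality inequality $u(x',\delta)\leq v(y',\delta)+x'y'$ (valid because $\xi\eta=\xi_0\eta_0$ under the multiplicative structure \eqref{bipolarL}) to close the sandwich, first for $u_\delta=v_\delta$ and then, with the two-parameter test elements $\tfrac{x+\Delta x}{x}\xi(x,0)/L^\delta$ and $\tfrac{y+\Delta y}{y}\eta(y,0)L^\delta$, for joint differentiability. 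What your route buys is exactly what the paper's remark after the theorem alludes to: the first-order statement does not need Assumption \ref{complimentarity} or the auxiliary quadratic problems at all, and your argument is more elementary and self-contained. What the paper's route buys is economy and strength: since the second-order expansion is needed anyway for Theorem \ref{seconOrderThmA}, Theorem \ref{byproduct} drops out for free, with quadratic (rather than merely $o(|\Delta x|+|\delta|)$) error control. Two small points to make explicit if you write this up: the missing halves of \eqref{1244} ($u<\infty$ and $v>-\infty$, then all $x>0$, $y>0$) come from weak duality together with concavity/convexity in the first variable; and the differentiation of $\varphi(\delta)=\mathbb E\left[V(\eta(y,0)L^\delta)\right]$ should be justified by dominating $-V'(\eta L^\beta)\eta L^\beta(|F|+G)$ by $xy\,\tfrac{d\mathbb R(x,0)}{d\mathbb P}\max\left((L^\beta)^{1-1/c_1},L^\beta\right)(|F|+G)$ and applying H\"older under $\mathbb R(x,0)$ with Assumption \ref{integrabilityAs}, uniformly in small $\beta$.
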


\begin{rem}
It is possible to prove Theorem \ref{byproduct} without Assumption \ref{complimentarity}. We do not present such a proof for brevity of exposition.
\end{rem}

\section*{Auxiliary minimization problems}
As in \cite{KS2006}, for $x>0$, let us consider 
\begin{equation}\label{axx}
a(x,x) \triangleq \inf\limits_{\alpha \in\mathcal A^2(x,0)}\mathbb E^{\mathbb R(x,0)}\left[ A(\xi(x,0))(1 + \alpha)^2\right],
\end{equation}
\begin{equation}\label{byy}
b(y,y) \triangleq \inf\limits_{\beta \in\mathcal B^2(y,0)}\mathbb E^{\mathbb R(x,0)}\left[ B(\eta(y,0))(1 + \beta)^2\right],\quad y = u_x(x,0),
\end{equation}
where $A$ is the relative risk aversion and $B$ is the relative risk tolerance of $U$, respectively. It is proven in \cite{KS2006} that 
\begin{equation}\label{122111}
\begin{array}{rcl}
u_{xx}(x,0)&=&-\frac{y}{x}a(x,x),\\
v_{yy}(y,0)&=&\frac{x}{y}b(y,y),\\ 
a(x,x)b(y,y) &=& 1,\\
A(\eta(x,0))(1 + \alpha(x,0)) &=& a(x,x)(1+ \beta(y,0)),\\
\end{array}
\end{equation}
where $\alpha(x,0)$ and $\beta(y,0)$ are the unique solutions to \eqref{axx} and \eqref{byy} respectively. In order to characterize derivatives of the value functions with respect to $\delta$, we consider the following minimization problems:
\begin{equation}\label{add}
\begin{array}{rcl}
a(d,d) &\triangleq& \inf\limits_{\alpha\in\mathcal A^2(x,0)}\mathbb E^{\mathbb R(x,0)}\left[A(\xi(x,0))(\alpha + xF)^2 - 
2xF\alpha - x^2(F^2 + G) \right],\\ 
\end{array}
\end{equation}
\begin{equation}\label{bdd}
\begin{array}{rcl}
b(d,d) &\triangleq& \inf\limits_{\beta\in\mathcal B^2(y,0)}\mathbb E^{\mathbb R(x,0)}\left[B(\eta(y,0))(\beta - yF)^2 + 
2yF\beta - y^2(F^2 - G) \right],\\ 
\end{array}
\end{equation}
Denoting by $\alpha_{d}(x,0)$ and $\beta_{d}(y,0)$ the unique solutions to \eqref{add} and \eqref{bdd} respectively, we also set
\begin{equation}\label{axd}
a(x,d) \triangleq \mathbb E^{\mathbb R(x,0)}\left[A(\xi(x,0))(1 + \alpha(x,0))(xF + \alpha_{d}(x,0)) - xF(1 + \alpha(x,0)) \right],
\end{equation}
\begin{equation}\label{byd}
b(y,d) \triangleq \mathbb E^{\mathbb R(x,0)}\left[B(\eta(y,0))(1 + \beta(y,0))(-yF + \beta_{d}(y,0)) + yF(1 + \beta(y,0)) \right].
\end{equation}
We are ready to state the following theorem.
\begin{thm}\label{seconOrderThmA} Let $x>0$ be fixed, the conditions of Theorem \ref{byproduct} hold, and $y = u_x(x,0)$. 
Define 
\begin{equation}\label{12136}
H_u(x,0)\triangleq  -\frac{y}{x}\begin{pmatrix} 
      a(x,x) 		& a(x, d)\\ 
  a(x,d)	& a(d,d)\\ 
\end{pmatrix},
\end{equation}
where $a(x,x)$, $a(d,d)$, and $a(x, d)$ are specified in \eqref{axx}, \eqref{add}, and \eqref{axd}, respectively; and 
\begin{equation}\label{12137}
H_v(y,0) \triangleq  \frac{x}{y}\begin{pmatrix} 
      b(y,y) 		& b(y, d)\\ 
  b(y,d)	& b(d,d)\\ 
\end{pmatrix},
\end{equation}
and in turn, $b(y,y)$, $b(d,d)$, $b(y,d)$ are specified in \eqref{byy}, \eqref{bdd}, and \eqref{byd}, correspondingly. 
Using the formula for the gradients \eqref{gradient}, 
the second-order expansions of the value functions are given by 
\begin{equation}\label{122121}
u(x+\Delta x,\delta) = u(x,0) + (\Delta x\quad \delta) \nabla u(x,0) + \tfrac{1}{2}(\Delta x\quad \delta) 
H_u(x,0)
\begin{pmatrix}
\Delta x\\
\delta\\
\end{pmatrix} + o(\Delta x^2 + \delta^2),
\end{equation}
and
\begin{equation}\label{122122}
v(y+\Delta y,\delta) = v(y,0) + (\Delta y\quad \delta) \nabla v(y,0) + \tfrac{1}{2}(\Delta y\quad \delta) 
H_v(y,0)
\begin{pmatrix}
\Delta y\\
\delta\\
\end{pmatrix} + o(\Delta y^2 + \delta^2).
\end{equation}
\end{thm}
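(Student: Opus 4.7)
The plan is to establish the two quadratic expansions by combining a direct construction of near-optimal trial random variables (giving a lower bound on $u$ and an upper bound on $v$) with the abstract conjugacy $u(x,\delta)=\inf_{y'>0}(v(y',\delta)+xy')$ to close any remaining gap. The non-concavity of $u(x,\cdot)$ and $v(y,\cdot)$ in $\delta$ forces this asymmetric scheme, in the same spirit as the additive-endowment setup of \cite{KS2006b}, and Assumption \ref{complimentarity} will be needed to align the primal and dual minimizers.

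Step 1 (primal lower bound). For bounded $\alpha,\gamma\in\mathcal A^\infty(x,0)$ and $(\Delta x,\delta)$ small, the random variable
\[
\xi^{\alpha,\gamma,\Delta x,\delta}\triangleq \xi(x,0)\Bigl(1+\tfrac{\Delta x\,\alpha+\delta\,\gamma}{x}\Bigr)\tfrac{1}{L^\delta}
\]
belongs to $\mathcal C(x+\Delta x,\delta)$ by \eqref{Ainfty} and \eqref{bipolarL}. Taylor-expand $U$ to second order around $\xi(x,0)$, use $\eta(y,0)=U'(\xi(x,0))$ together with $d\mathbb R(x,0)/d\mathbb P=\xi(x,0)\eta(y,0)/(xy)$ to convert expectations to $\mathbb E^{\mathbb R(x,0)}$, and exploit $1/L^\delta = 1+\delta F+\tfrac{1}{2}\delta^2(F^2+G)+o(\delta^2)$. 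After collecting terms, the quadratic coefficients in $(\Delta x,\delta)$ are precisely the $\mathbb E^{\mathbb R(x,0)}$-expressions appearing in \eqref{axx}, \eqref{add}, \eqref{axd}; the Taylor remainder is dominated uniformly via Assumption \ref{integrabilityAs} (to integrate the $\delta$-derivatives of $1/L^\delta$) and the two-sided bound on $A(\cdot)$ from Assumption \ref{rra}. Taking supremum over $(\alpha,\gamma)$ and approximating the $\mathcal A^2$-minimizers $(\alpha(x,0),\alpha_d(x,0))$ of \eqref{axx}, \eqref{add} by elements of $\mathcal A^\infty$ (dense in $\mathcal A^2$ by construction) yields
\[
u(x+\Delta x,\delta)\ge u(x,0)+y\,\Delta x+xy\,\mathbb E^{\mathbb R(x,0)}[F]\,\delta+\tfrac{1}{2}(\Delta x,\delta)H_u(x,0)(\Delta x,\delta)^{T}+o(\Delta x^2+\delta^2).
\]

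Step 2 (dual upper bound). Symmetrically, for bounded $\beta,\tilde\beta\in\mathcal B^\infty(y,0)$, set
\[
\eta^{\beta,\tilde\beta,\Delta y,\delta}\triangleq \eta(y,0)\Bigl(1+\tfrac{\Delta y\,\beta+\delta\,\tilde\beta}{y}\Bigr)L^\delta\in\mathcal D(y+\Delta y,\delta),
\]
and expand $L^\delta=1-\delta F+\tfrac{1}{2}\delta^2(F^2-G)+o(\delta^2)$. Taylor-expanding $V$ around $\eta(y,0)$, using $B(\eta)=1/A(\xi)$ for $\xi=-V'(\eta)$, and then taking infimum over $(\beta,\tilde\beta)$ (approximating the $\mathcal B^2$-minimizers $(\beta(y,0),\beta_d(y,0))$ of \eqref{byy}, \eqref{bdd} by bounded elements) produces the quadratic coefficients in \eqref{byy}, \eqref{bdd}, \eqref{byd} and gives
\[
v(y+\Delta y,\delta)\le v(y,0)-x\,\Delta y+xy\,\mathbb E^{\mathbb R(x,0)}[F]\,\delta+\tfrac{1}{2}(\Delta y,\delta)H_v(y,0)(\Delta y,\delta)^{T}+o(\Delta y^2+\delta^2).
\]

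Step 3 (closing the gap via conjugacy). The finiteness \eqref{1244} plus \cite[Theorem 3.2]{Mostovyi2015} give the pointwise conjugacy $u(x+\Delta x,\delta)=\inf_{y'>0}(v(y',\delta)+(x+\Delta x)y')$ for $\delta$ near $0$. Taking $y'=y+\Delta y^*(\Delta x,\delta)$, with $\Delta y^*$ the minimizer of the quadratic obtained by substituting the Step 2 upper bound, yields a matching upper bound for $u(x+\Delta x,\delta)$, provided the Legendre transform of the quadratic form with matrix $H_v(y,0)$ equals the quadratic form with matrix $H_u(x,0)$. This algebraic matching reduces to the abstract analogues of \eqref{12154c} and \eqref{keyGapc}, which in turn follow from the first-order optimality relations for $\alpha(x,0),\alpha_d(x,0),\beta(y,0),\beta_d(y,0)$ in their variational problems combined with the orthogonality provided by Assumption \ref{complimentarity}; the identity $A(\xi(x,0))(1+\alpha(x,0))=a(x,x)(1+\beta(y,0))$ from \eqref{122111} plays a pivotal role, as does its analogue for the $d$-minimizers. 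The dual expansion \eqref{122122} is obtained by swapping primal and dual roles and running the same argument.

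Main obstacle. The central difficulty is that the optimizers $\alpha(x,0),\alpha_d(x,0)$ and $\beta(y,0),\beta_d(y,0)$ are only in the closures $\mathcal A^2,\mathcal B^2$ while feasibility of the trial candidates in Steps 1--2 requires bounded perturbations in $\mathcal A^\infty,\mathcal B^\infty$. The hard part is to design an approximation/truncation scheme which, together with Assumption \ref{integrabilityAs} and the bounded relative risk aversion in \ref{rra}, keeps the Taylor remainder $o(\Delta x^2+\delta^2)$ and $o(\Delta y^2+\delta^2)$ uniformly as $(\Delta x,\delta)\to 0$ and $(\Delta y,\delta)\to 0$. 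A secondary hurdle is the exact algebraic matching of the two Hessian-like matrices via Legendre duality in Step 3, which is the abstract counterpart of the relations collected in Theorem \ref{162}.
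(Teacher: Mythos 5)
Your proposal follows essentially the same route as the paper's proof: one-sided expansions along bounded perturbation directions (the paper's Lemma \ref{12132} and Corollary \ref{cor12131} for the primal lower bound, Lemmas \ref{157} and \ref{lem12131} for the dual upper bound), followed by closing the gap through conjugacy after matching the two quadratic forms via the algebraic identities of Theorem \ref{closingGapThm}; your minimizing $\Delta y^*$ is exactly the paper's choice \eqref{12215} in Lemma \ref{12212}, and the final assembly is the paper's Lemma \ref{5161}.

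Two remarks. First, a slip in your trial elements: as written, $\xi(x,0)\bigl(1+\tfrac{\Delta x\,\alpha+\delta\,\gamma}{x}\bigr)\tfrac{1}{L^\delta}$ is \emph{not} in $\mathcal C(x+\Delta x,\delta)$ when $\Delta x<0$ (its price against the dual optimizer equals $x$, which exceeds $x+\Delta x$), and its $\Delta x^2$-coefficient would be $\mathbb E^{\mathbb R(x,0)}\left[A(\xi(x,0))\alpha^2\right]$ rather than $a(x,x)$. The correct candidate, as in Lemma \ref{12132}, is $\tfrac{\xi(x,0)}{x}\bigl(x+\Delta x(1+\alpha)+\delta\gamma\bigr)\tfrac{1}{L^\delta}$: the coefficient of $\Delta x$ must be $1+\alpha$ to track the scaling of the optimizer with wealth; the analogous correction applies to your dual candidate in Step 2. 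Since the coefficients you then quote are those of \eqref{axx}, \eqref{add}, \eqref{axd}, this is clearly what you intend, but the feasibility claim as stated would fail. Second, the ``main obstacle'' you single out largely dissolves in this scheme: because only one-sided bounds are needed, the expansion is proved for fixed bounded $\alpha^0,\alpha^1\in\mathcal A^\infty(x,0)$ (with the remainder controlled exactly as you indicate, via Assumption \ref{integrabilityAs}, Corollary \ref{rraCor} and dominated convergence), and the passage to the $\mathcal A^2$/$\mathcal B^2$ minimizers occurs only at the level of the limiting quadratic forms, by $\mathbf L^2(\mathbb R(x,0))$-continuity of the functionals in \eqref{axx}--\eqref{byd}; no uniform-in-$(\Delta x,\delta)$ truncation along the (possibly unbounded) optimal directions is required.
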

\section*{Derivatives of the optimizers}
\begin{thm}\label{closingGapThm}
Let $x>0$ be fixed, the conditions of Theorem \ref{byproduct} hold, and $y = u_x(x,0)$. Let \textcolor{black}{$\xi =  \xi(x,0)$ and $\eta =  \eta(y,0)$ denote the solutions to \eqref{abstractPrimal} and \eqref{abstractDual}, $\alpha = \widehat \alpha(x,0)$, $\beta = \widehat \beta(y,0)$, $\alpha_{d} = \widehat \alpha_{d}(x,0)$, and $\beta_{d} = \widehat \beta_{d}(y,0)$ denote the solutions to \eqref{axx}, \eqref{byy}, \eqref{axd}, and \eqref{byd}, respectively.} Then, we have
\begin{equation}\label{12154}
\begin{pmatrix}
a(x,x) & 0\\
a(x,d) &-\frac{x}{y} \\
\end{pmatrix}
\begin{pmatrix}
b(y,y) & 0\\
b(y,d) & -\frac{y}{x} \\
\end{pmatrix} = I_2.
\end{equation}
Moreover,
\begin{equation}\label{keyGap}
\frac{y}{x}a(d,d) + \frac{x}{y}b(d,d) = a(x,d)b(y,d)
\end{equation}
and
\begin{equation}\label{keyGap2}
A(\xi)\begin{pmatrix} 1 + \alpha \\ xF + \alpha_{d}\\\end{pmatrix} = 
\begin{pmatrix}
a(x,x)& 0\\a(x,d) &- \frac{x}{y}\end{pmatrix}
\begin{pmatrix}
1 + \beta \\ -yF + \beta_{d}\\
\end{pmatrix},
\end{equation}
equivalently
\begin{equation}\label{12251}
B(\eta)\begin{pmatrix} 1 + \beta \\ -yF + \beta_{d}\\\end{pmatrix} = 
\begin{pmatrix}
b(y,y)& 0\\b(y,d) &- \frac{y}{x}\end{pmatrix}
\begin{pmatrix}
1 + \alpha \\ xF + \alpha_{d}\\
\end{pmatrix}.
\end{equation}
\end{thm}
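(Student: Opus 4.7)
The plan is to establish the four conclusions of Theorem \ref{closingGapThm} in the order \eqref{12154}, then \eqref{keyGap2} and \eqref{12251}, and finally \eqref{keyGap}. The main ingredients are the first-order optimality conditions (FOCs) for the four quadratic problems \eqref{axx}, \eqref{byy}, \eqref{add}, \eqref{bdd}; the relation $A(\xi) B(\eta) = 1$ holding at the primal-dual pair $\eta = U'(\xi)$; the identity $A(\xi)(1+\alpha) = a(x,x)(1+\beta)$ (and its dual form $B(\eta)(1+\beta) = b(y,y)(1+\alpha)$ obtained from $A(\xi)B(\eta)=1$ and $a(x,x)b(y,y) = 1$) supplied by \eqref{122111}; and the complementarity of $\mathcal{A}^2(x,0)$ and $\mathcal{B}^2(y,0)$ from Assumption \ref{complimentarity}. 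Variational analysis of \eqref{add} and \eqref{bdd} combined with complementarity yields elements $\beta_d' \in \mathcal{B}^2(y,0)$ and $\alpha_d' \in \mathcal{A}^2(x,0)$ with $A(\xi)(\alpha_d + xF) - xF = a(x,d) + \beta_d'$ and $B(\eta)(\beta_d - yF) + yF = b(y,d) + \alpha_d'$, where the constants are identified by taking $\mathbb{E}^{\mathbb R(x,0)}$ of each side and using \eqref{axd} and \eqref{byd}.

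For \eqref{12154}, it suffices to prove $\tfrac{y}{x}\, a(x,d) = a(x,x)\, b(y,d)$ (combined with $a(x,x)\, b(y,y) = 1$). Substituting $A(\xi)(1+\alpha) = a(x,x)(1+\beta)$ into \eqref{axd} and using $\mathbb{E}^{\mathbb R(x,0)}[\alpha_d] = \mathbb{E}^{\mathbb R(x,0)}[\beta\alpha_d] = 0$ yields $a(x,d) = x\, \mathbb{E}^{\mathbb R(x,0)}[F(A(\xi)-1)(1+\alpha)]$. Analogously, using $B(\eta)(1+\beta) = b(y,y)(1+\alpha)$ inside \eqref{byd} gives $b(y,d) = y\, b(y,y)\, \mathbb{E}^{\mathbb R(x,0)}[F(A(\xi)-1)(1+\alpha)]$. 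Comparing the two expressions yields the scalar identity.

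The heart of the proof is the identification of the residuals to establish \eqref{keyGap2}. Since a direct attempt to pin down $\beta_d'$ and $\alpha_d'$ individually runs into a circular equivalence (modulo \eqref{12154}), I instead propose an explicit candidate for $\beta_d$ itself, namely $\tilde\beta_d \triangleq \tfrac{y}{x}\, a(x,d)(1+\beta) + yF(1 - A(\xi)) - \tfrac{y}{x}\, A(\xi)\, \alpha_d$, and verify two properties: (a) $\tilde\beta_d \in \mathcal{B}^2(y,0)$ and (b) $\tilde\beta_d$ satisfies the FOC of \eqref{bdd}. For (a), taking the inner product against any $\tilde\alpha \in \mathcal{A}^2(x,0)$, the cross-term $\mathbb{E}^{\mathbb R(x,0)}[A(\xi)\alpha_d \tilde\alpha] = -x\, \mathbb{E}^{\mathbb R(x,0)}[F(A(\xi)-1)\tilde\alpha]$ coming from the FOC of \eqref{add} cancels exactly against the $yF(1-A(\xi))$ contribution, while the $(1+\beta)$ term is orthogonal to $\tilde\alpha$; the check $\mathbb{E}^{\mathbb R(x,0)}[\tilde\beta_d] = 0$ uses the expression for $a(x,d)$ derived above. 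For (b), computing $B(\eta)\tilde\beta_d$ via $B(\eta)(1+\beta) = b(y,y)(1+\alpha)$, $B(\eta)A(\xi) = 1$, and \eqref{12154} in the rearranged form $\tfrac{y}{x}\, a(x,d)\, b(y,y) = b(y,d)$, one obtains $B(\eta)\tilde\beta_d = b(y,d)(1+\alpha) + yF(B(\eta)-1) - \tfrac{y}{x}\alpha_d$; its inner product with any $\tilde\beta \in \mathcal{B}^2(y,0)$ collapses via orthogonality to $y\, \mathbb{E}^{\mathbb R(x,0)}[F(B(\eta)-1)\tilde\beta]$, which is precisely the FOC of \eqref{bdd}. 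Uniqueness of the minimizer in \eqref{bdd} then forces $\beta_d = \tilde\beta_d$, and \eqref{12251} (equivalent to \eqref{keyGap2} via $A(\xi)B(\eta) = 1$) follows algebraically.

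Finally, \eqref{keyGap} is obtained from the second-order expansions \eqref{122121} and \eqref{122122} supplied by Theorem \ref{seconOrderThmA} together with the conjugacy $v(y,\delta) = \sup_x (u(x,\delta) - xy)$, valid for $\delta$ near $0$ by \eqref{1244}. Expanding this Legendre relation to second order in $\delta$ via an implicit function argument on $u_x(x^\ast(\delta),\delta) = y$ yields, at the level of the Hessian coefficients, the identity $\tfrac{x}{y}\, b(d,d) + \tfrac{y}{x}\, a(d,d) = \tfrac{y}{x}\, a(x,d)^2 / a(x,x)$; applying \eqref{12154} rewrites the right-hand side as $a(x,d)\, b(y,d)$, completing \eqref{keyGap}. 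The principal obstacle is the identification step for \eqref{keyGap2}: the candidate $\tilde\beta_d$ has to be guessed from the desired identity, and its verification crucially combines the FOCs of both \eqref{add} and \eqref{bdd} simultaneously with the complementarity Assumption \ref{complimentarity}.
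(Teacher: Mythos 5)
Your treatment of \eqref{12154}, \eqref{keyGap2} and \eqref{12251} is correct and is a legitimate variant of the paper's argument: the paper also starts from the first-order conditions of \eqref{add} and \eqref{bdd}, written with residuals in the complementary subspaces plus constants, but it identifies the residuals by a uniqueness argument for \eqref{add} (showing that the shifted residual $\tilde{\tilde\alpha}$ equals $-\tfrac{y}{x}\alpha_d$) and then pins down the constant by pairing with $1+\alpha$ and using orthogonality. Your route — first deriving the scalar identity $\tfrac{y}{x}a(x,d)=a(x,x)b(y,d)$ directly from \eqref{axd}, \eqref{byd}, the relation $A(\xi)(1+\alpha)=a(x,x)(1+\beta)$ and orthogonality, then guessing the candidate $\tilde\beta_d$ and verifying membership in $\mathcal B^2(y,0)$ and the first-order condition of \eqref{bdd} — achieves the same identification, and the computations do close (the mean-zero check for $\tilde\beta_d$ follows from the FOC of \eqref{add} tested against $\alpha$ combined with $A(\xi)(1+\alpha)=a(x,x)(1+\beta)$); uniqueness of the convex quadratic minimizer then gives $\beta_d=\tilde\beta_d$.

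The genuine gap is your proof of \eqref{keyGap}. You deduce it from the second-order expansions \eqref{122121}--\eqref{122122} of Theorem \ref{seconOrderThmA} via conjugacy, but in this paper those expansions are not available at this point: since $u(\cdot,\cdot)$ is not jointly concave in $(x,\delta)$, all that is established before Theorem \ref{closingGapThm} are one-sided bounds (Corollary \ref{cor12131} gives a lower bound for $u$ with $H_u(x,0)$, Lemma \ref{lem12131} an upper bound for $v$ with $H_v(y,0)$). The expansions themselves are proved only in Lemma \ref{5161}, whose key step, Lemma \ref{12212}, uses \eqref{keyGap} and \eqref{12154} to show that the dual upper bound transported through conjugacy matches the primal lower bound. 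So invoking Theorem \ref{seconOrderThmA} to prove \eqref{keyGap} is circular; \eqref{keyGap} is precisely the ingredient that closes the duality gap and makes the expansion theorem true. (Even granting the expansions, your implicit-function manipulation of $u_x(x^\ast(\delta),\delta)=y$ presumes genuine second derivatives, whereas only second-order expansions are claimed, cf.\ Remark \ref{rem:415}.) The paper instead proves \eqref{keyGap} by a purely algebraic computation: substitute \eqref{keyGap2} into the quadratic functionals \eqref{add} and \eqref{bdd} evaluated at $\alpha_d$ and $\beta_d$, add the two expressions, and use orthogonality of $\mathcal A^2(x,0)$ and $\mathcal B^2(y,0)$ together with \eqref{12154} and \eqref{byd} to identify the cross term $T_1=a(x,d)b(y,d)$ and to show the remaining terms $T_2$ cancel. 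Replacing your final step by this computation, which uses only ingredients you have already established, repairs the proof.
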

\begin{thm}\label{134}
Let the conditions of Theorem \ref{byproduct} hold and  $x>0$ be fixed. Then the random variables $\alpha$ and $\alpha_{d}$, which are solutions to \eqref{axx} and \eqref{add}, respectively,  are the partial derivatives of the solution $\widehat \xi(x,0)$ to \eqref{perturbedAbstractPrimal} evaluated at $(x,0)$, that is
\begin{equation}\label{1251}
\begin{array}{rcl}
\lim\limits_{|\Delta x| + |\delta| \to 0}\frac{1}{|\Delta x| + |\delta|}\left|
\widehat \xi(x + \Delta x, \delta) - \frac{\widehat\xi(x,0)}{x}\left(x + \Delta x(1 + \alpha(x,0)) + \delta \alpha_{d}(x,0) \right)\frac{1}{L^{\delta}} \right|&=& 0,\\
\end{array}
\end{equation}
where the convergence takes place in $\mathbb P$-probability. Likewise, let $\beta$ and $\beta_{d}$, which are solutions to \eqref{byy} and \eqref{bdd}, correspondingly, are the partial derivatives of the solution $\widehat\eta(y,0)$ to \eqref{perturbedAbstractDual} evaluated at $(y,0)$, where $y = u_x(x,0)$, in the sense that 
\begin{equation}\label{1252}
\begin{array}{rcl}
\lim\limits_{|\Delta y| + |\delta |\to 0}\frac{1}{|\Delta y| + |\delta|}\left|
\widehat \eta(y + \Delta y, \delta) - \frac{\widehat\eta(y,0)}{y}\left(y + \Delta y(1 + \beta(y,0)) + \delta \beta_{d}(y,0) \right){L^{\delta}} \right|&=& 0,\\
\end{array}
\end{equation}
where the convergence takes place in $\mathbb P$-probability. 
\end{thm}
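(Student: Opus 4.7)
The plan is to establish the dual claim \eqref{1252} first, and then derive the primal claim \eqref{1251} from it via the pointwise conjugacy $\widehat\xi = -V'(\widehat\eta)$. Exploiting $\mathcal D(y+\Delta y,\delta)=\mathcal D(y+\Delta y,0)L^\delta$, factor out the perturbation and write $\widehat\eta(y+\Delta y,\delta)=\zeta(\Delta y,\delta)L^\delta$, where $\zeta(\Delta y,\delta)$ is the unique minimizer of $\zeta\mapsto \mathbb E[V(\zeta L^\delta)]$ over $\mathcal D(y+\Delta y,0)$. Guided by the auxiliary problems \eqref{byy} and \eqref{bdd}, introduce the candidate
$$
\tilde\zeta_{\Delta y,\delta}\,\triangleq\,\frac{\widehat\eta(y,0)}{y}\bigl(y+\Delta y(1+\beta(y,0))+\delta\,\beta_d(y,0)\bigr),
$$
so that \eqref{1252} reduces to showing $\zeta(\Delta y,\delta)-\tilde\zeta_{\Delta y,\delta}=o(|\Delta y|+|\delta|)$ in $\mathbb P$-probability.

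The first main step is a \emph{matching} calculation: expand $\mathbb E[V(\tilde\zeta_{\Delta y,\delta}L^\delta)]$ by a Taylor expansion of $V$ around $\widehat\eta(y,0)$ and verify that it reproduces the quadratic expansion \eqref{122122} of $v$ at $(y,0)$. The linear term is handled by the identity $-V'(\widehat\eta(y,0))\widehat\eta(y,0) = \widehat\xi(x,0)\widehat\eta(y,0) = xy\, d\mathbb R(x,0)/d\mathbb P$ (which pulls expectations under $\mathbb R(x,0)$), while the quadratic term collapses to $\tfrac12(\Delta y,\delta)H_v(y,0)(\Delta y,\delta)^\top$ upon invoking the variational definitions \eqref{byy}, \eqref{bdd}, \eqref{byd} together with the orthogonality of $\mathcal A^2(x,0)$ and $\mathcal B^2(y,0)$ from Assumption \ref{complimentarity}. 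A technical nuisance is that $\tilde\zeta_{\Delta y,\delta}$ need not lie in $\mathcal D(y+\Delta y,0)$; replacing $\beta(y,0)$ and $\beta_d(y,0)$ by bounded $L^2(\mathbb R(x,0))$-approximants from $\mathcal B^\infty(y,0)$ cures this at the cost of an $o(|\Delta y|+|\delta|)$ perturbation. Comparing with the second-order expansion from Theorem \ref{seconOrderThmA} yields the value-function gap
$$
\mathbb E[V(\tilde\zeta_{\Delta y,\delta}L^\delta)] - \mathbb E[V(\zeta(\Delta y,\delta)L^\delta)] \,=\, o(\Delta y^2+\delta^2).
$$

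The second main step is to convert this vanishing gap into $L^2$-closeness of optimizers. The bound $B(y)\leq c_2^{-1}$ translates into the pointwise inequality $V''(\eta)\geq c_2^{-1}(-V'(\eta))/\eta$, so that on a high-probability set where the relevant quantities are comparable to $\widehat\eta(y,0)$ and $L^\delta$ is close to $1$, quantitative strict convexity of $V$ gives
$$
\mathbb E[V(\tilde\zeta_{\Delta y,\delta}L^\delta)] - \mathbb E[V(\zeta(\Delta y,\delta)L^\delta)] \,\geq\, c\,\mathbb E\!\left[\bigl(\tilde\zeta_{\Delta y,\delta}-\zeta(\Delta y,\delta)\bigr)^2\,\widehat\eta(y,0)L^\delta\right]
$$
for some $c>0$. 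Combined with the matching step, this forces the normalized mean-square difference to vanish, and hence convergence in $\mathbb P$-probability. Multiplying by $L^\delta$ yields \eqref{1252}.

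Finally, \eqref{1251} is obtained by writing $\widehat\xi(x+\Delta x,\delta) = -V'(\widehat\eta(y',\delta))$ with $y' = u_x(x+\Delta x,\delta) = y + u_{xx}(x,0)\Delta x + u_{x\delta}(x,0)\delta + o(|\Delta x|+|\delta|)$ (joint differentiability from Theorem \ref{byproduct}), substituting the already proven first-order expansion of $\widehat\eta$, and using the matrix identity \eqref{keyGap2} to reassemble the coefficients as $\tfrac{\widehat\xi(x,0)}{x}(x+\Delta x(1+\alpha(x,0))+\delta\alpha_d(x,0))/L^\delta$. The main obstacle is the localization underlying the strict-convexity step: the set on which $V''$ is uniformly bounded below must simultaneously absorb the deviations of $L^\delta$ from $1$ and of the candidates from $\widehat\eta(y,0)$, with negligible tails. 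Assumption \ref{integrabilityAs}, combined with the uniform bounds on $A$ and $B$ and the supermartingale property of elements of $\mathcal D(y,0)$, is what provides the required uniform control of the tails and makes this conversion possible.
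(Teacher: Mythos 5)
Your treatment of the dual claim \eqref{1252} is essentially the paper's own argument: build the comparison element $\frac{\widehat\eta(y,0)}{y}\left(y+\Delta y(1+\beta^0)+\delta\beta^1\right)L^{\delta}\in\mathcal D(y+\Delta y,\delta)$ from \emph{bounded} approximants $\beta^0,\beta^1\in\mathcal B^\infty(y,0)$, match its value against the expansion of $v$ from Theorem \ref{seconOrderThmA} (this is Lemma \ref{157}), use the first-order optimality/duality inequality $\mathbb E[V'(\widehat\eta^n)(\eta^n-\widehat\eta^n)]\ge 0$ together with a strictly positive random curvature weight to turn the $o(\Delta y^2+\delta^2)$ value gap into closeness of optimizers, and finally let the approximants tend to $\beta(y,0),\beta_d(y,0)$ in $\mathbf L^2(\mathbb R(x,0))$. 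Two small slips there (the curvature bound comes from $B\ge 1/c_2$, not $B\le 1/c_2$, and your weight $\widehat\eta(y,0)L^\delta$ should be a quantity of the order of $V''$ near $\widehat\eta(y,0)$, as in the paper's $\theta=\inf_{\phi_1\le t\le\phi_2}V''(t)$), and the localization "comparable to $\widehat\eta(y,0)$" needs the preliminary fact that $\widehat\eta(y^n,\delta^n)\to\widehat\eta(y,0)$ in probability, which the paper proves in a separate lemma; these are repairable.

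The genuine gap is in your derivation of the primal claim \eqref{1251}. You write $\widehat\xi(x+\Delta x,\delta)=-V'(\widehat\eta(y',\delta))$ with $y'=u_x(x+\Delta x,\delta)$ and then use the expansion $y'=y+u_{xx}(x,0)\Delta x+u_{x\delta}(x,0)\delta+o(|\Delta x|+|\delta|)$, attributing it to "joint differentiability from Theorem \ref{byproduct}". Theorem \ref{byproduct} only gives first-order differentiability of $u$ at $(x,0)$, and Theorem \ref{seconOrderThmA} gives a pointwise second-order \emph{expansion} of $u$, explicitly \emph{without} twice differentiability (see Remark \ref{rem:415}); differentiability of the gradient map $(x,\delta)\mapsto u_x(x,\delta)$ at $(x,0)$ is a strictly stronger property, which the paper only obtains under the restrictive semiconcavity condition \eqref{2181}. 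A pointwise second-order expansion of a (partially) concave function does not imply a first-order expansion of its $x$-derivative, and since $u$ is not jointly concave in $(x,\delta)$ no convexity argument rescues this. So the chain-rule route from \eqref{1252} to \eqref{1251} is not justified under the stated hypotheses. The fix — and what the paper does — is to prove \eqref{1251} directly and symmetrically: take the primal comparison element $\frac{\widehat\xi(x,0)}{x}\left(x+\Delta x(1+\alpha^0)+\delta\alpha^1\right)\frac{1}{L^{\delta}}\in\mathcal C(x+\Delta x,\delta)$ with $\alpha^0,\alpha^1\in\mathcal A^\infty(x,0)$, match via Lemma \ref{12132} and Theorem \ref{seconOrderThmA}, and use concavity of $U$ with the bounds on $A$ exactly as on the dual side, rather than transferring the dual result through $u_x$.
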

From Theorem \ref{134}, we obtain the following Corollary.
\begin{cor} Under the conditions of Theorem \ref{134}, \eqref{1251} is equivalent to 
\begin{equation}\nonumber
\begin{array}{rcl}
\lim\limits_{|\Delta x| + |\delta| \to 0}\frac{1}{|\Delta x| + |\delta|}\left|
 \xi(x + \Delta x, \delta) - \xi(x,0)- \frac{\xi(x,0)}{x}\left(\Delta x(\alpha(x,0) + 1) + \delta (\alpha_{d}(x,0) + xF)\right) \right|&=& 0.\\
\end{array}
\end{equation}
Likewise, \eqref{1252} holds if and only if
\begin{equation}\nonumber
\begin{array}{rcl}
\lim\limits_{|\Delta y| + |\delta |\to 0}\frac{1}{|\Delta y| + |\delta|}\left|
 \eta(y + \Delta y, \delta) - \eta(y,0)-\frac{\eta(y,0)}{y}\left(\Delta y(\beta(y,0) + 1) + \delta (\beta_{d}(y,0) - yF) \right)\right|&=& 0,\\
\end{array}
\end{equation}
where the convergence takes place in $\mathbb P$-probability. 
\end{cor}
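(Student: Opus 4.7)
The plan is to show that the corollary follows from Theorem \ref{134} via a pointwise Taylor expansion of $L^\delta$ (respectively $1/L^\delta$) around $\delta = 0$, combined with the triangle inequality. The key observation is that the two approximants of $\widehat\xi(x+\Delta x,\delta)$ appearing in \eqref{1251} and in the corollary differ by quantities of order $o(|\Delta x|+|\delta|)$ in $\mathbb{P}$-probability, so convergence of one expression is equivalent to convergence of the other.

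Concretely, since $L^\delta = \exp(-(\delta F + \tfrac12 \delta^2 G))$, Taylor's formula applied pointwise in $\omega$ yields
\begin{equation}\nonumber
\frac{1}{L^\delta} = 1 + \delta F + r(\delta),
\end{equation}
where the random remainder satisfies $r(\delta) = \tfrac12 \delta^2(F^2+G) + O(\delta^3)$ pointwise and in particular $r(\delta)/\delta \to 0$ almost surely as $\delta \to 0$. Setting $K \triangleq \tfrac{\widehat\xi(x,0)}{x}\bigl(x + \Delta x(1+\alpha(x,0)) + \delta \alpha_d(x,0)\bigr)$ for the common factor, a direct rearrangement gives
\begin{equation}\nonumber
K \cdot \tfrac{1}{L^\delta} - \Bigl[\xi(x,0) + \tfrac{\xi(x,0)}{x}\bigl(\Delta x(\alpha(x,0)+1) + \delta(\alpha_d(x,0) + xF)\bigr)\Bigr] = \delta F\bigl(K-\xi(x,0)\bigr) + K\,r(\delta).
\end{equation}
Here $K - \xi(x,0) = \tfrac{\xi(x,0)}{x}\bigl(\Delta x(1+\alpha(x,0)) + \delta \alpha_d(x,0)\bigr) = O(|\Delta x|+|\delta|)$ pointwise in $\omega$, so the first term on the right is $O(|\delta|(|\Delta x|+|\delta|))$, while $K\,r(\delta) = O(\delta^2)$ pointwise. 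Dividing by $|\Delta x|+|\delta|$ and using $|\delta|/(|\Delta x|+|\delta|)\le 1$, the right-hand side tends to $0$ almost surely, and hence in $\mathbb{P}$-probability. The equivalence then follows at once by the triangle inequality applied to $\bigl|\widehat\xi(x+\Delta x,\delta) - (\text{either approximant})\bigr|$.

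The dual half is entirely analogous: one expands $L^\delta = 1 - \delta F + \widetilde r(\delta)$ with $\widetilde r(\delta)/\delta \to 0$ almost surely, and the opposite sign precisely accounts for the $-yF$ appearing in the corollary's dual formula versus $+xF$ in the primal one. The main point to verify, and essentially the only place the argument could conceivably fail, is that the pointwise Taylor expansion produces the desired convergence in probability despite $F$, $G$, and $\xi(x,0)$ being generally unbounded; however, since all manipulations are pointwise in $\omega$ with deterministic scaling factor $|\Delta x|+|\delta|$, almost-sure convergence to $0$ is immediate and automatically yields convergence in $\mathbb{P}$-probability. No input beyond Theorem \ref{134} and elementary Taylor bookkeeping is required.
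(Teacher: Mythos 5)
Your proof is correct and follows what the paper intends: the corollary is stated there as an immediate consequence of Theorem \ref{134}, with the implicit argument being exactly your pointwise expansion of $L^{\pm\delta}$ to first order and a triangle-inequality comparison of the two approximants, the discrepancy being $o(|\Delta x|+|\delta|)$ almost surely and hence in $\mathbb{P}$-probability. Nothing further is needed.
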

\section*{Proofs} We begin the proofs with technical lemmas.
\begin{lem}\label{11301}
Let Assumption \ref{rra} hold and ${d}\in\left(\max\left(\exp(-1/c_2), \exp(-c_1)\right), 1\right]$. Then for every $x>0$, we have
\begin{displaymath}\begin{array}{rcl}
U'({d} x) &\leq& \frac{1}{1 + c_2\log({d})}U'(x),\\
-V'({d} x) &\leq& \frac{1}{1 +\frac{1}{c_1}\log({d})}(-V'(x)).\\
\end{array}\end{displaymath}
\end{lem}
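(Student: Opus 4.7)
The strategy is to derive multiplicative bounds on $U'$ and $-V'$ via the logarithmic derivative, and then upgrade them to the form stated in the lemma using the elementary inequality $e^t\geq 1+t$.

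Since $\frac{d}{ds}\log U'(sx) = xU''(sx)/U'(sx) = -A(sx)/s$, integrating from $s=d$ to $s=1$ and using $A\leq c_2$ from Assumption \ref{rra} yields
\[
\log\frac{U'(dx)}{U'(x)} \;=\; \int_d^1 \frac{A(sx)}{s}\,ds \;\leq\; -c_2\log d,
\]
so that $U'(dx)\leq d^{-c_2}U'(x)$. Applying the same calculation to $-V'$ in place of $U'$, and using that the relative risk tolerance $B(y)=-V''(y)y/V'(y)$ satisfies $B\leq 1/c_1$ (because $B(y)=1/A(x)$ at $y=U'(x)$, as recorded just after \eqref{dualFunction}), gives analogously $-V'(dx)\leq d^{-1/c_1}(-V'(x))$.

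It remains to compare $d^{-c_2}$ with $1/(1+c_2\log d)$. Setting $t=c_2\log d\leq 0$, the inequality $d^{-c_2}\leq 1/(1+c_2\log d)$ is equivalent, after multiplying through by the positive quantity $1+c_2\log d$, to $e^t\geq 1+t$, which is standard. The positivity of $1+c_2\log d$ is precisely the condition $d>\exp(-1/c_2)$. An identical argument with $t=(1/c_1)\log d$ produces $d^{-1/c_1}\leq 1/(1+(1/c_1)\log d)$, valid exactly when $d>\exp(-c_1)$. Combining the two steps yields both inequalities of the lemma.

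The proof poses no serious obstacle; the only point worth stressing is that the two threshold values in the hypothesis on $d$ are sharp in the sense that they are exactly the range on which the denominators on the right-hand side remain strictly positive, so that the comparison $e^t\geq 1+t$ can be invoked. Note also that the intermediate bounds $U'(dx)\leq d^{-c_2}U'(x)$ and $-V'(dx)\leq d^{-1/c_1}(-V'(x))$ are valid for all $d\in(0,1]$ and are in fact strictly stronger; the form in the lemma is stated this way because it is what is needed in the neighborhood $d\to 1$ used later in the paper.
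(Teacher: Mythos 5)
Your proof is correct, but it takes a genuinely different route from the paper's. The paper argues additively: it writes $U'(dx)-U'(x)=\int_d^1(-U''(tx))x\,dt$, bounds the integrand by $c_2U'(tx)/t$ via Assumption \ref{rra}, uses monotonicity of $U'$ to replace $U'(tx)$ by $U'(dx)$, and then solves $U'(dx)\left(1+c_2\log d\right)\le U'(x)$ directly, never passing through a power bound. You instead integrate the logarithmic derivative $\tfrac{d}{ds}\log U'(sx)=-A(sx)/s$ to get the multiplicative bound $U'(dx)\le d^{-c_2}U'(x)$ (and $-V'(dx)\le d^{-1/c_1}(-V'(x))$ via $B\le 1/c_1$, which the paper records after \eqref{dualFunction}), and then convert to the stated form with $e^t\ge 1+t$, the hypothesis on $d$ being exactly what keeps the denominators positive. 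Two remarks on what each route buys. First, your intermediate estimate is precisely the paper's Lemma \ref{11303}, which the paper obtains \emph{later from} Lemma \ref{11301} through Corollary \ref{11302} and a limiting argument with powers $d^k$ and $d\uparrow1$; so your argument proves the stronger statement at once and would shortcut that subsequent development, at the cost of proving more than Lemma \ref{11301} itself requires. Second, your appeal to the fundamental theorem of calculus for $s\mapsto\log U'(sx)$ is legitimate even though $U''$ is not assumed continuous: boundedness of $A$ makes this map Lipschitz on $[d,1]$, hence absolutely continuous, which is the same (implicit) justification the paper needs for its own integral representation of $U'(dx)-U'(x)$. So neither route demands more regularity than the other; the paper's is marginally more elementary, yours is more direct and yields the sharper exponent bound immediately.
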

\begin{proof}
Let us fix an arbitrary $x>0$ and  ${d}\in\left(\max\left(\exp(-1/c_2), \exp(-c_1)\right), 1\right]$. Then using Assumption \ref{rra} and monotonicity of $U'$, we get
\begin{displaymath}
\begin{array}{rcl}
U'({d} x) - U'(x) &=& \int_{{d}}^1(-U''(tx))xdt \\ 
&=& \int_{{d}}^1(-U''(tx))tx\frac{dt}{t} \\ 
&\leq& c_2\int_{{d}}^1U'(tx)\frac{dt}{t} \\ 
&\leq& c_2U'({d} x)(-\log({d})). \\ 
\end{array}
\end{displaymath}
Therefore, we obtain 
\[ U'({d} x) (1 + c_2\log({d})) \leq U'(x),\]
This implies the first assertion of the lemma. The other one can be shown entirely similarly.
\end{proof}
\begin{cor}\label{11302}
Under the conditions of Lemma \ref{11301}, for every $k\in\mathbb N,$ we have
\begin{displaymath}\begin{array}{rcl}
 U'({d}^k x)&\leq& \frac{1}{(1 + c_2\log ({d}))^k}U'(x),\\
  -V'({d}^k x)&\leq& \frac{1}{(1 + \frac{1}{c_1}\log ({d}))^k}(-V'(x)).\\
 \end{array}\end{displaymath}
\end{cor}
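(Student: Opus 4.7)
The plan is to prove the corollary by a straightforward induction on $k$, using Lemma \ref{11301} as both the base case and the inductive engine. The key observation is that the hypothesis on $d$ in Lemma \ref{11301} is a condition on $d$ alone, independent of the point $x>0$ at which the derivative is evaluated, so the lemma may be reapplied at any positive argument.

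For the first inequality: the base case $k=1$ is exactly the first assertion of Lemma \ref{11301}. For the inductive step, assume the inequality holds for some $k\in\mathbb{N}$. Write $U'(d^{k+1}x) = U'\bigl(d \cdot (d^k x)\bigr)$ and apply Lemma \ref{11301} with $x$ replaced by $d^k x > 0$ (permissible since $d$ satisfies the required bound) to obtain
\begin{displaymath}
U'(d^{k+1}x) \leq \frac{1}{1+c_2\log(d)}\, U'(d^k x).
\end{displaymath}
Then invoke the induction hypothesis on $U'(d^k x)$ and multiply the two estimates; note that $1+c_2\log(d)>0$ by the lower bound on $d$, so the inequality direction is preserved. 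This yields the claimed bound with exponent $k+1$.

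The argument for $-V'(d^k x)$ is entirely analogous, using the second assertion of Lemma \ref{11301} and the fact that $1+\tfrac{1}{c_1}\log(d)>0$ since $d > \exp(-c_1)$, so that division and multiplication preserve the direction of inequalities. There is no real obstacle here; the only thing to verify carefully is the positivity of the factors $1+c_2\log(d)$ and $1+\tfrac{1}{c_1}\log(d)$, which follows directly from the assumption $d\in(\max(\exp(-1/c_2),\exp(-c_1)),1]$. Both statements then follow by a one-line induction, so the corollary is essentially a compound-interest iteration of Lemma \ref{11301}.
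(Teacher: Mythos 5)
Your induction is correct and is exactly the intended argument: the corollary is just the $k$-fold iteration of Lemma \ref{11301} (the paper states it without proof for this reason), and you rightly note the only point needing care, namely that $1+c_2\log(d)>0$ and $1+\tfrac{1}{c_1}\log(d)>0$ follow from $d>\max(\exp(-1/c_2),\exp(-c_1))$, so multiplying the estimates preserves the inequalities. Nothing further is needed.
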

Below 
$1_{E}$ denotes the indicator function of a set $E$.
\begin{lem}\label{11303}
Let Assumption \ref{rra} holds. Then for every $z\in(0,1]$ and $x>0$, we have
\begin{displaymath}\begin{array}{rcl}
U'(zx)&\leq &z^{-c_2}U'(x),\\
-V'(zx)&\leq &z^{-\frac{1}{c_1}}(-V'(x)).\\
\end{array}\end{displaymath}
\end{lem}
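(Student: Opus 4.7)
The plan is to read Assumption \ref{rra} as a pointwise lower bound on the logarithmic derivative of $U'$ in the scaling variable and then integrate. Concretely, fix $x>0$; since $U'>0$ and $A(t)=-tU''(t)/U'(t)\leq c_2$, we have
\[
\frac{d}{dt}\log U'(t)=\frac{U''(t)}{U'(t)}\geq -\frac{c_2}{t},\qquad t>0.
\]
For $z\in(0,1]$, integrating from $zx$ to $x$ gives $\log U'(x)-\log U'(zx)\geq -c_2\int_{zx}^{x}dt/t=c_2\log z$, and exponentiating yields the first inequality $U'(zx)\leq z^{-c_2}U'(x)$.

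The dual inequality follows from the same scheme applied to $\log(-V'(\cdot))$. Since $V$ is convex and strictly decreasing on $(0,\infty)$ we have $V''>0$ and $V'<0$, while the conjugacy relation $B(y)=1/A\!\left((U')^{-1}(y)\right)$ already recorded in the paper, combined with Assumption \ref{rra}, gives $-tV''(t)/V'(t)\leq 1/c_1$. Hence
\[
\frac{d}{dt}\log\bigl(-V'(t)\bigr)=\frac{V''(t)}{V'(t)}\geq -\frac{1}{c_1 t},
\]
and integrating from $zx$ to $x$ produces $-V'(zx)\leq z^{-1/c_1}\bigl(-V'(x)\bigr)$, as required.

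One might instead try to derive the lemma by iterating Lemma \ref{11301} via Corollary \ref{11302}, but the bound $(1+c_2\log d)^{-k}$ there is strictly larger than $d^{-c_2 k}$ (since $\log(1+v)\leq v$ for $v\leq 0$), so iteration only reaches the desired power bound after a limiting procedure $d_n\to 1^-$, $k_n\to\infty$ with $d_n^{k_n}\to z$ together with continuity of $U'$ and $V'$. The direct ODE-integration argument above bypasses this limit and is the route I would take; I do not foresee any genuine obstacle beyond carefully tracking signs in the dual inequality.
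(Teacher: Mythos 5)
Your proof is correct, but it follows a genuinely different route from the paper. You integrate the differential inequality $\frac{d}{dt}\log U'(t)=-A(t)/t\geq -c_2/t$ (equivalently, observe that $t\mapsto \log U'(t)+c_2\log t$ is nondecreasing) and do the same for $\log(-V'(\cdot))$ using $B\leq 1/c_1$; the paper instead proves only the discrete estimate of Lemma \ref{11301} by an integral representation of $U'(dx)-U'(x)$, iterates it in Corollary \ref{11302}, covers $(0,1]$ by the intervals $(d^k,d^{k-1}]$ to get a bound of the form $a_1(d)z^{-a_2(d)}U'(x)$, and only recovers the sharp exponent $z^{-c_2}$ by letting $d\uparrow 1$ at the end -- exactly the limiting procedure you anticipate in your closing remark. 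Your argument is shorter and gives the sharp constant in one step; the only point worth making explicit is the regularity needed to ``integrate'': Assumption \ref{rra} gives twice differentiability but not continuity of $U''$, so one should note that on any compact subinterval $[zx,x]\subset(0,\infty)$ the pointwise derivative $U''/U'=-A(t)/t$ is bounded, hence $\log U'$ is Lipschitz there and the fundamental theorem of calculus (or simply the mean value theorem applied to $\log U'(t)+c_2\log t$) applies; the same remark covers the dual inequality, where, as in the paper, one also uses that $U'$ maps $(0,\infty)$ onto $(0,\infty)$ so that the relation $B(y)=1/A(x)$ with $y=U'(x)$ covers all $y>0$. With that observation recorded, your proof is complete and, if anything, more elementary than the paper's, while the paper's approach buys only self-containedness (it uses nothing beyond monotonicity of $U'$ and a one-variable integral identity).
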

\begin{proof} Let us fix an arbitrary ${d} \in (\exp(-1/c_2), 1)$. Using monotonicity of $U'$ and Corollary \ref{11302}, for every $z\in(0,1]$ and $x>0$, we get
\begin{equation}\label{1211}
\begin{array}{rcl}
U'(zx) &=& \sum\limits_{k = 1}^\infty U'(zx) 1_{\{z\in({d}^k,{d}^{k-1}]\}} \\
&\leq& \sum\limits_{k = 1}^\infty U'({d}^k x) 1_{\{z\in({d}^k,{d}^{k-1}]\}} \\
&\leq& U'(x)\sum\limits_{k = 1}^\infty \frac{1}{\left(1 + c_2\log({d}) \right)^k} 1_{\{z\in({d}^k,{d}^{k-1}]\}}.\\
\end{array}
\end{equation}
Let us set $$a_1({d}) \triangleq \frac{1}{1 + c_2\log({d}) } >1\quad and\quad  
a_2({d}) \triangleq \frac{\log\left(1 + c_2\log({d})\right)}{\log({d})} =- \frac{\log(a_1({d}))}{\log({d})}>0.$$
As $a_1({d})>1$ and  for every $k\in\mathbb N$
$${d}^k<z\leq {d}^{k-1}
\quad is~equivalent~to\quad\frac{\log(z)}{\log({d})}<k\leq \frac{\log(z)}{\log({d})} + 1,$$
we deduce that for every $z\in(0,1]$, we have
\begin{equation}\label{1221}
\begin{array}{rcl}
\frac{1}{\left(1 + c_2\log({d}) \right)^k}1_{\{z\in({d}^k,{d}^{k-1}]\}}& \leq & a_1({d})a_1({d})^{\frac{\log(z)}{\log({d})}}1_{\{z\in({d}^k,{d}^{k-1}]\}}
\\
&=&a_1({d})\left(a_1({d})^{\frac{1}{\log({d})}}\right)^{\log(z)}1_{\{z\in({d}^k,{d}^{k-1}]\}}
\\
&=&  a_1({d})z^{-a_2({d})}1_{\{z\in({d}^k,{d}^{k-1}]\}}.
\end{array}
\end{equation}
Plugging \eqref{1221} in \eqref{1211}, we get
\begin{displaymath}
U'(zx) \leq U'(x)\sum\limits_{k = 1}^\infty a_1({d})z^{-a_2({d})}1_{\{z\in({d}^k,{d}^{k-1}]\}} =  a_1({d})z^{-a_2({d})}U'(x),\quad for~every~z\in(0,1]~and~x>0.
\end{displaymath}
As $\lim\limits_{{d} \uparrow 1}a_1({d}) = 1$ and 
\begin{displaymath}
\lim\limits_{{d} \uparrow 1}a_2({d}) = \lim\limits_{{d} \uparrow 1}\frac{\log(1 + c_2\log({d}))}{\log({d})} = \lim\limits_{y\uparrow 0}\frac{\log(1 + c_2y)}{y} = \lim\limits_{y\uparrow 0}\frac{c_2}{1 + c_2y} = c_2, 
\end{displaymath}
taking the limit in the latter inequality, we obtain that
\begin{displaymath}
U'(zx) \leq \lim\limits_{{d}\uparrow 1}  a_1({d})z^{-a_2({d})}U'(x) = z^{-c_2}U'(x),
\end{displaymath}
for every $z\in(0,1]$ and $x>0$. 
The other assertion can be proven similarly. 
This completes the proof of the lemma.
\end{proof}
\begin{cor}\label{rraCor}
Under Assumption \ref{rra}, for every $z>0$ and $x>0$, we have
\begin{displaymath}\begin{array}{rcccl}
U'(zx) &\leq& \max\left(z^{-c_2},1\right)U'(x)&\leq& \left(z^{-c_2}+1\right)U'(x),\\
-V'(zx) &\leq& \max\left(z^{-\frac{1}{c_1}},1\right)(-V'(x))&\leq &\left(z^{-\frac{1}{c_1}}+1\right)(-V'(x)).\\
\end{array}\end{displaymath}
\end{cor}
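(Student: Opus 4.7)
The plan is to split on whether $z \in (0,1]$ or $z > 1$ and use Lemma \ref{11303} on the first region, monotonicity of $U'$ (resp.\ $-V'$) on the second region, and then relax $\max(\cdot,1)$ to $\cdot + 1$ by elementary estimate.

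More precisely, I would first fix $x>0$ and argue for $U'$. For $z\in(0,1]$, Lemma \ref{11303} gives directly $U'(zx)\leq z^{-c_2}U'(x)$; since $z\leq 1$ and $c_2>0$, one has $z^{-c_2}\geq 1$, hence $z^{-c_2}=\max(z^{-c_2},1)$. For $z>1$, I would instead invoke that $U$ is strictly concave (Assumption \ref{rra}), so $U'$ is strictly decreasing, giving $U'(zx)\leq U'(x)$; in this range $z^{-c_2}<1$, so $\max(z^{-c_2},1)=1$, and the inequality $U'(zx)\leq \max(z^{-c_2},1)U'(x)$ again holds. Combining, the first inequality in the displayed formula is valid for all $z>0$. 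The second inequality is the trivial bound $\max(a,1)\leq a+1$ for $a\geq 0$, applied with $a=z^{-c_2}$.

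The argument for $-V'$ is completely parallel: Lemma \ref{11303} provides $-V'(zx)\leq z^{-1/c_1}(-V'(x))$ on $(0,1]$, while for $z>1$ the function $-V'$ is strictly decreasing (since $V$ is strictly convex, being the Legendre transform of the strictly concave $U$), so $-V'(zx)\leq -V'(x)$. The same $\max$-versus-sum observation then delivers both inequalities in the second row.

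There is no real obstacle here; the only thing to be slightly careful about is confirming the sign of the exponents (so that $z^{-c_2}\geq 1$ on $(0,1]$ and $<1$ on $(1,\infty)$), and remembering that the strict concavity of $U$ under Assumption \ref{rra} is what gives strict monotonicity of $U'$ in the regime $z>1$ not covered by Lemma \ref{11303}.
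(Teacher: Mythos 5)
Your proposal is correct and matches the paper's (implicit) argument: the corollary is stated without proof precisely because it follows from Lemma \ref{11303} on $z\in(0,1]$ together with monotonicity of $U'$ and $-V'$ on $z>1$, plus the trivial bound $\max(a,1)\leq a+1$. Nothing further is needed.
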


\section*{Proof of the second-order expansion}
\begin{lem}\label{12132}
Let $x>0$ be fixed and the conditions of Theorem \ref{byproduct} hold, and $y=u_x(x,0)$.
For arbitrary random variables $\alpha^0$ and $\alpha^1$ in $\mathcal A^\infty(x,0)$, let us define 
\begin{equation}\label{12131}
\begin{array}{rcl}
\psi(s,t) &\triangleq& \frac{1}{x}\left(x + s(1 + \alpha^0) + t\alpha^1\right)\frac{1}{L^t},\\
w(s,t) &\triangleq& \mathbb E\left[U(\xi \psi(s,t))\right],\quad (s,t)\in\mathbb R^2,\\
\end{array}
\end{equation}
where $\xi = \widehat \xi(x,0)$ is the solution to \eqref{perturbedAbstractPrimal} corresponding to $x>0$ and $\delta = 0$.
Then $w$ admits the following second-order expansion at $(0,0)$.
\begin{equation}\label{12133}
w(s,t) = w(0,0) +(s\quad t) \nabla w(0,0) + \tfrac{1}{2}(s\quad t) 
H_w
\begin{pmatrix}
s\\
t\\
\end{pmatrix} + o(s^2 + t^2),
\end{equation}
where \textcolor{black}{
\begin{displaymath}
\begin{array}{rcl}
w_s(0,0)&=&u_x(x,0),\\
w_t(0,0)&=&xy\mathbb E^{\mathbb R(x,0)}\left[ F\right],\\
\end{array}
\end{displaymath}}
and
\begin{displaymath}
H_w \triangleq \begin{pmatrix} 
  w_{ss}(0,0)     & w_{st}(0,0)\\ 
  w_{st}(0,0) 	& w_{tt}(0,0)\\ 
\end{pmatrix},
\end{displaymath}
 where the second-order partial derivatives of $w$ at $(0,0)$ are given by 
 \begin{displaymath}\begin{array}{rcl}
 w_{ss}(0,0) & = &-\frac{y}{x}\mathbb E^{\mathbb R(x)}\left[ A(\xi)(1 + \alpha^0)^2\right],
 \\
 w_{st}(0,0) & = &-\frac{y}{x}\mathbb E^{\mathbb R(x)}\left[A(\xi)(1 + \alpha^0)(xF + \alpha^1) - xF(1 + \alpha^0) \right],\\
 w_{tt}(0,0) & = &-\frac{y}{x}\mathbb E^{\mathbb R(x)}\left[A(\xi)(\alpha^1 + xF)^2 - 
2xF\alpha^1 - x^2(F^2 + G) \right].\\
 \end{array}\end{displaymath}
\end{lem}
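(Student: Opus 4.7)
The plan is to prove \eqref{12133} by a pointwise Taylor expansion of $r\mapsto U(\xi\psi(rs,rt))$ on $[0,1]$, followed by taking the $\mathbb{P}$-expectation and controlling the remainder uniformly in $(s,t)$ near $(0,0)$. Since $\alpha^0,\alpha^1\in\mathbf{L}^\infty$, differentiating the explicit formula \eqref{12131} for $\psi$ gives $\psi(0,0)=1$, $\psi_s(0,0)=(1+\alpha^0)/x$, $\psi_t(0,0)=\alpha^1/x+F$, $\psi_{ss}\equiv 0$, $\psi_{st}(0,0)=(1+\alpha^0)F/x$, and $\psi_{tt}(0,0)=2\alpha^1 F/x+F^2+G$. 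The first-order optimality condition for \eqref{abstractPrimal} reads $U'(\xi)=\eta(y,0)$, so the definition \eqref{abstractR} of $\mathbb{R}(x,0)$ converts $\mathbb{P}$-expectations weighted by $U'(\xi)\,\xi$ into $xy\,\mathbb{E}^{\mathbb{R}(x,0)}[\,\cdot\,]$; combined with $\mathbb{E}^{\mathbb{R}(x,0)}[\alpha^0]=0$ (which follows from $\alpha^0\in\mathcal{A}^\infty(x,0)$) and the Arrow--Pratt identity $U''(\xi)\xi=-A(\xi)\,U'(\xi)$, this identifies the first- and second-order expressions stated in the lemma with the derivatives of $w$ at $(0,0)$, provided we can justify interchanging differentiation and expectation.

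To effect this justification, I would write the Taylor formula with integral remainder
\[U(\xi\psi(s,t))=U(\xi)+f'(0)+\int_0^1(1-r)\,f''(r)\,dr,\qquad f(r)\triangleq U(\xi\psi(rs,rt)),\]
where, by the chain rule,
\[f''(r)=U''(\xi\psi)\,\xi^2\,(s\psi_s+t\psi_t)^2+U'(\xi\psi)\,\xi\,(s^2\psi_{ss}+2st\psi_{st}+t^2\psi_{tt}),\]
with all $\psi$-derivatives evaluated at $(rs,rt)$. Taking expectation in $\mathbb{P}$ and identifying $\mathbb{E}[f'(0)]$ and $\mathbb{E}[f''(0)]/2$ with the gradient and Hessian expressions from the lemma, \eqref{12133} reduces to showing
\[\mathbb{E}\!\left[\int_0^1(1-r)\bigl(f''(r)-f''(0)\bigr)\,dr\right]=o(s^2+t^2).\]

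The main obstacle is to produce a $\mathbb{P}$-integrable dominant of $|f''(r)|/(s^2+t^2)$ that is uniform in $r\in[0,1]$ and in $(s,t)$ in a fixed small ball around the origin, so that dominated convergence and Fubini apply. Because $\alpha^0$ and $\alpha^1$ are bounded, for $(s,t)$ sufficiently small the factor $x+rs(1+\alpha^0)+rt\alpha^1$ stays between two positive constants depending only on $\|\alpha^0\|_\infty$, $\|\alpha^1\|_\infty$, and $x$, so $\psi(rs,rt)$ is comparable to $1/L^{rt}=\exp(rtF+(rt)^2G/2)$ uniformly in $r$. Combining Corollary \ref{rraCor} applied to $U'(\xi\psi)/U'(\xi)$, the inequality $|U''(\xi\psi)|\,\xi^2\psi^2\leq c_2\,U'(\xi\psi)\,\xi\psi$ coming from Assumption \ref{rra}, and the polynomial structure of $\psi_s,\psi_t,\psi_{st},\psi_{tt}$, one bounds $|f''(r)|$ by $(s^2+t^2)$ times $U'(\xi)\,\xi$ times a polynomial in $|F|,G$ times a bounded power of $1/L^{rt}$. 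Passing to the measure $\mathbb{R}(x,0)$ absorbs the factor $U'(\xi)\,\xi$ into $xy$, and the integrability of what remains is secured by H\"older's inequality applied to the exponential moment provided by Assumption \ref{integrabilityAs}. Joint continuity of the integrand at $(s,t)=(0,0)$ then gives $f''(r)\to f''(0)$ pointwise for each $r\in[0,1]$, and dominated convergence yields the required $o(s^2+t^2)$ bound, completing the proof.
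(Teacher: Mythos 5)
Your proposal is correct and follows essentially the same route as the paper's proof: you parametrize along the segment $r\mapsto(rs,rt)$, expand $U(\xi\psi(rs,rt))$ to second order, and justify interchanging expectation and differentiation by an integrable dominant built from the boundedness of $\alpha^0,\alpha^1$, the relative risk-aversion bounds (Corollary \ref{rraCor} and $|U''(z)z|\leq c_2U'(z)$), the change of measure to $\mathbb{R}(x,0)$, and H\"older's inequality with the exponential moment of Assumption \ref{integrabilityAs}, exactly as the paper does via its bounds on $W'$, $W''$ and the difference quotients \eqref{1157} followed by dominated convergence. The only cosmetic difference is that you use Taylor's formula with integral remainder where the paper bounds difference quotients of $W$ and $W'$ directly; the estimates and the identification of the gradient and Hessian entries (using $U'(\xi)\xi\,d\mathbb{P}=xy\,d\mathbb{R}(x,0)$, $\mathbb{E}^{\mathbb{R}(x,0)}[\alpha^i]=0$, and the Arrow--Pratt identity) coincide.
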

\begin{proof} As $\alpha^0$ and $\alpha^1$ are in $\mathcal A^\infty$, there exists constant $\varepsilon \in(0,1)$, such that
\begin{equation}\label{11510}
|\alpha^0| + |\alpha^1|\leq \frac{x}{6\varepsilon} - 1, \quad \mathbb P-a.s.
\end{equation}
Let us fix an arbitrary $(s,t)\in B_{\varepsilon}(0,0)$ and define
\begin{displaymath}
\widetilde \psi (z) \triangleq \psi(zs, zt),\quad z\in (-1,1).
\end{displaymath}
Note that
\begin{equation}\label{1151}
\frac{2}{3}\leq \widetilde \psi(z) L^{zt}\leq \frac{4}{3}, \quad z\in (-1,1).
\end{equation}
As $$\psi_t(s, t) = \frac{\alpha^1}{xL^t} + \psi(s,t) \left(F+ tG\right)\quad and \quad 
\psi_s(s, t) = \frac{1 + \alpha^0}{xL^t},$$
we get
\begin{equation}\label{1155}
\begin{array}{c}
\widetilde \psi'(z) = \psi_s(sz, tz)s + \psi_t(sz, tz)t =\frac{1 + \alpha^0}{xL^{zt}}s + \left(\frac{\alpha^1}{xL^{zt}} + \widetilde \psi(z) \left(F+ ztG\right)\right)t.\\
\end{array}
\end{equation}
Similarly, since 
\begin{equation}\nonumber\begin{array}{rcl}
\psi_{tt}(s, t) &=& \frac{2\alpha^1}{xL^{t}}\left(F+ tG\right) + \psi(s,t) \left(\left(F+ tG\right)^2 + G\right),\\
\psi_{st}(s, t) &=& \frac{1 + \alpha^0}{xL^{t}}\left(F+tG\right),\quad and\quad\psi_{ss}(s, t) = 0,\\
\end{array}\end{equation}
we obtain
\begin{displaymath}
\begin{array}{rcl}
\widetilde \psi''(z) &=& \psi_{tt}(zs, zt)t^2 + 2\psi_{st}(zs, zt)ts + \psi_{ss}(zs, zt)s^2\\
 &=& \left(\frac{2\alpha^1}{xL^{zt}}\left(F+ ztG\right) + \tilde\psi(z) \left(\left(F+ ztG\right)^2 + G\right) \right)t^2 + 2 \frac{1 + \alpha^0}{xL^{zt}}\left(F+ztG\right)ts.\
\end{array}
\end{displaymath}
Setting $W(z) \triangleq U(\xi \widetilde \psi(z))$, $z\in(-1, 1)$, by direct computations, we get
\begin{equation}\label{1152}\begin{array}{rcl}
W'(z) &=& U'(\xi \widetilde \psi(z))\xi \widetilde \psi'(z), \\
W''(z) &=& U''(\xi \widetilde \psi(z))\left(\xi \widetilde \psi'(z)\right)^2 + U'(\xi \widetilde \psi(z))\xi \widetilde \psi''(z).\\
\end{array}
\end{equation}
Let us define 
$$a_2 \triangleq 2^{c_2 + 2}\quad and \quad J \triangleq 1+|F| + G.$$
From \eqref{1155} using \eqref{11510} and \eqref{1151}, we get
 $$|\widetilde \psi'(z) |\leq 2J\exp(\varepsilon J),\quad  \widetilde \psi(z) ^{-c_2} + 1\leq 2^{c_2+ 1} \exp(c_2 \varepsilon J),\quad z\in(-1, 1).$$
Therefore, from \eqref{1152} using Corollary \ref{rraCor}, we obtain
\begin{equation}\label{1133}
\begin{array}{rcl}
\sup\limits_{z\in (-1, 1)}|W'(z) | &\leq & \sup\limits_{z\in (-1, 1)}U'(\xi)\xi\left( (\widetilde \psi(z))^{-c_2}+1\right)\left|\widetilde \psi'(z)\right| \\
&\leq &a_2  U'(\xi)\xi J\exp((c_2 + 1)\varepsilon J)\\
&\leq &a_2  U'(\xi)\xi J\exp(a_2 \varepsilon J).\\
\end{array}
\end{equation}
Similarly, from \eqref{1152} applying  Assumtion \ref{rra} and Corrollary \ref{rraCor}, we deduce the existence of a constant $a_3 > 0$, such that  
\begin{equation}\label{1141}
\begin{array}{rcl}
\sup\limits_{z\in (-1, 1)}|W''(z) |&\leq &
a_3 U'(\xi)\xi J^2\exp(a_3\varepsilon J).\\
\end{array}
\end{equation}
Combining \eqref{1133} and \eqref{1141}, we obtain
\begin{equation}\nonumber
\sup\limits_{z\in (-1, 1)}\left(|W'(z) | + |W''(z) |\right)\leq
U'(\xi)\xi\left(
a_2   J\exp(a_2\varepsilon J)  +
a_3 J^2\exp(a_3\varepsilon J)\right).
\end{equation}
Consequently, as $1\leq J\leq J^2$, by setting $a_1 \triangleq \max(a_2, a_3)$, for every $z_1$ and $z_2$ in $(-1, 1)$, we get
\begin{equation}\label{1157}\begin{array}{rcl}
\left|\frac{W(z_1) - W(z_2)}{z_1 -z_2}\right| + \left|\frac{W'(z_1) - W'(z_2)}{z_1 -z_2}\right|&\leq& 
4a_1 U'(\xi)\xi
  J^2\exp(a_1\varepsilon J).\\
\end{array}\end{equation}
By passing to a smaller $\varepsilon$, if necessary, and by applying H\"older's inequality, we deduce from Assumption \ref{integrabilityAs} that the right-hand side of \eqref{1157} integrable. As the right-hand side of \eqref{1157} depends on $\varepsilon$ (and not on $(s,t)$), the assertion of the lemma follows from the dominated convergence theorem.

\end{proof}
\begin{cor}\label{cor12131}
 Let let $x>0$ be fixed, the conditions of Theorem \ref{byproduct} hold, and $y=u_x(x,0)$. Then, we have
\begin{equation}\label{12134}
u(x+\Delta x,\delta) \geq u(x,0)+ \Delta x y+  \delta xy\mathbb E^{\mathbb R(x,0)}\left[ F\right]+ \tfrac{1}{2}(\Delta x\quad \delta) 
H_u(x,0)
\begin{pmatrix}
\Delta x\\
\delta\\
\end{pmatrix} + o(\Delta x^2 + \delta^2),
\end{equation}
where  $H_u(x,0)$ is given by \eqref{12136}.
\end{cor}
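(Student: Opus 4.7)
The plan is to derive \eqref{12134} from Lemma \ref{12132} by an approximation argument: I would apply that lemma with $\alpha^0,\alpha^1\in\mathcal A^\infty(x,0)$ chosen to approximate, in $\mathbf L^2(\mathbb R(x,0))$, the unique minimizers $\widehat\alpha(x,0)$ of \eqref{axx} and $\widehat\alpha_d(x,0)$ of \eqref{add} respectively, and then let the approximation improve.

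The first step is to verify admissibility: for $\xi=\widehat\xi(x,0)$ and $(\Delta x,\delta)$ in a sufficiently small neighborhood of $(0,0)$, I would check that $\xi\psi(\Delta x,\delta)\in\mathcal C(x+\Delta x,\delta)$. This reduces, via the identity
$$\xi\psi(\Delta x,\delta)L^\delta = (x+\Delta x)\cdot\frac{\xi}{x}\left(1 + \frac{\Delta x}{x+\Delta x}\alpha^0 + \frac{\delta}{x+\Delta x}\alpha^1\right),$$
to expressing the bracketed factor as the convex combination $\tfrac12\tfrac{\xi}{x}\bigl(1 + \tfrac{2\Delta x}{x+\Delta x}\alpha^0\bigr) + \tfrac12\tfrac{\xi}{x}\bigl(1 + \tfrac{2\delta}{x+\Delta x}\alpha^1\bigr)$; for $(\Delta x,\delta)$ small enough the weights fall within the admissibility range of $\alpha^0$ and $\alpha^1$ (where $\xi(1\pm c\alpha^i)/x\in\mathcal C$), so convexity of $\mathcal C$ shows the factor lies in $\mathcal C$. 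This yields $u(x+\Delta x,\delta)\geq \mathbb E[U(\xi\psi(\Delta x,\delta))]=w(\Delta x,\delta)$, and Lemma \ref{12132} then produces a quadratic expansion of $w$ at $(0,0)$ with coefficient matrix $H_w(\alpha^0,\alpha^1)$.

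Next, I would use joint $\mathbf L^2(\mathbb R(x,0))$-continuity of each entry of $H_w(\alpha^0,\alpha^1)$ as a function of $(\alpha^0,\alpha^1)$, which holds because $A(\xi)\leq c_2$ by Assumption \ref{rra} and $F,G$ possess arbitrary $\mathbb R(x,0)$-moments under Assumption \ref{integrabilityAs} (combined with H\"older's inequality). Since $\mathcal A^\infty(x,0)$ is dense in $\mathcal A^2(x,0)$ in $\mathbf L^2(\mathbb R(x,0))$, for any $\varepsilon>0$ one may choose $\alpha^0,\alpha^1\in\mathcal A^\infty(x,0)$ so that every entry of $H_w(\alpha^0,\alpha^1)-H_u(x,0)$ has absolute value at most $\varepsilon$, whence
$$\tfrac12(\Delta x\quad \delta)\,H_w(\alpha^0,\alpha^1)\begin{pmatrix}\Delta x\\ \delta\end{pmatrix}\geq \tfrac12(\Delta x\quad \delta)\,H_u(x,0)\begin{pmatrix}\Delta x\\ \delta\end{pmatrix}-C\varepsilon(\Delta x^2+\delta^2)$$
for an absolute constant $C$.

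For this fixed pair the $o(\Delta x^2+\delta^2)$ remainder in the expansion of $w$ furnished by Lemma \ref{12132} is eventually dominated by $\varepsilon(\Delta x^2+\delta^2)$ on a neighborhood $U_\varepsilon$ of the origin; combining with $u(x+\Delta x,\delta)\geq w(\Delta x,\delta)$ then delivers \eqref{12134} on $U_\varepsilon$ with total error at most $(C+1)\varepsilon(\Delta x^2+\delta^2)$, and since $\varepsilon>0$ was arbitrary the corollary follows. The main obstacle is exactly this order-of-limits subtlety: the $o$-remainder in Lemma \ref{12132} is \emph{not} uniform in $(\alpha^0,\alpha^1)\in\mathcal A^\infty(x,0)$, so one must first commit to a sufficiently good approximating pair and only afterwards shrink the $(\Delta x,\delta)$-neighborhood --- a diagonal argument that has to be handled with some care.
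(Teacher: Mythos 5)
Your proposal is correct and follows essentially the same route as the paper, which simply invokes Lemma \ref{12132} together with the approximation of the minimizers of \eqref{axx} and \eqref{add} (elements of $\mathcal A^2(x,0)$) by elements of $\mathcal A^\infty(x,0)$. You merely make explicit the details the paper leaves implicit: the admissibility of $\xi\psi(\Delta x,\delta)$ in $\mathcal C(x+\Delta x,\delta)$, the $\mathbf L^2(\mathbb R(x,0))$-continuity of the entries of $H_w$, and the order of limits (first fix the approximating pair, then shrink the $(\Delta x,\delta)$-neighborhood), all of which are handled correctly.
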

\begin{proof}
The result follows from Lemma \ref{12132} via the approximation of the solutions to \eqref{axx} and \eqref{add}, which are the elements of $\mathcal A^2(x,0)$, by the elements of $\mathcal A^\infty(x,0)$.
\end{proof}
Similarly to Lemma \ref{12132} and Corollary \ref{cor12131}, we can establish the following results.
\begin{lem}\label{157}
Let $x>0$ be fixed, the conditions of Theorem \ref{byproduct} hold, and $y = u_x(x,0)$.
For arbitrary random variables $\beta^0$ and $\beta^1$ in $\mathcal B^\infty(y,0)$, let us define 
\begin{equation}\nonumber
\begin{array}{rcl}
\phi(s,t) &\triangleq& \frac{1}{y}\left(y + s(1 + \beta^0) + t\beta^1\right){L^t},\\
\bar w(s,t) &\triangleq& \mathbb E\left[V(\eta \phi(s,t))\right],\quad (s,t)\in\mathbb R^2,\\
\end{array}
\end{equation}
where $\eta = \widehat \eta(y,0)$ is the solution to \eqref{perturbedAbstractDual} corresponding to $y>0$ and $\delta = 0$. Then at $(0,0)$, $\bar w$ admits the following second-order expansion
\begin{equation}\nonumber
\bar w(s,t) = \bar w(0,0) +(s\quad t) \nabla \bar w(0,0) + \tfrac{1}{2}(s\quad t) 
H_{\bar w}
\begin{pmatrix}
s\\
t\\
\end{pmatrix} + o(s^2 + t^2),
\end{equation}
where \textcolor{black}{
\begin{displaymath}
\begin{array}{rcl}
\bar w_s(0,0)&=&v_y(y,0),\\
\bar w_t(0,0)&=&xy\mathbb E^{\mathbb R(x,0)}\left[ F\right],\\
\end{array}
\end{displaymath}}
and
\begin{displaymath}
H_{\bar w} \triangleq \begin{pmatrix} 
  \bar w_{ss}(0,0)     & \bar w_{st}(0,0)\\ 
  \bar w_{st}(0,0) 	& \bar w_{tt}(0,0)\\ 
\end{pmatrix},
\end{displaymath}
 where the second-order partial derivatives of $\bar w$ at $(0,0)$ are given by 
 \begin{displaymath}\begin{array}{rcl}
 \bar w_{ss}(0,0) & = &\frac{x}{y}\mathbb E^{\mathbb R(x,0)}\left[ B(\eta)(1 + \beta^0)^2\right],
 \\
 \bar w_{st}(0,0) & = &\frac{x}{y}\mathbb E^{\mathbb R(x,0)}\left[B(\eta)(1 + \beta^0)(-yF + \beta^1) +yF(1 + \beta^0) \right],\\
 \bar w_{tt}(0,0) & = &\frac{x}{y}\mathbb E^{\mathbb R(x,0)}\left[B(\eta)(\beta^1 - yF)^2 +
2yF\beta^1 - y^2(F^2 - G) \right].\\
 \end{array}\end{displaymath}
\end{lem}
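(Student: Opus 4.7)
The proof is entirely parallel to that of Lemma \ref{12132}, with $V,\eta,B,\phi,L^t$ taking the roles of $U,\xi,A,\psi,1/L^t$. My plan has three steps.

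First, since $\beta^0,\beta^1\in\mathcal B^\infty(y,0)$ are bounded, I pick $\varepsilon\in(0,1)$ such that $|\beta^0|+|\beta^1|\leq y/(6\varepsilon)-1$ almost surely. For fixed $(s,t)\in B_\varepsilon(0,0)$, I introduce the rescaled path $\widetilde\phi(z)\triangleq \phi(zs,zt)$, $z\in(-1,1)$, and compute $\widetilde\phi'(z),\widetilde\phi''(z)$ by direct differentiation, using $\partial_t L^t=-(F+tG)L^t$ and $\partial_t^2 L^t=((F+tG)^2-G)L^t$. For example,
\begin{displaymath}
\widetilde\phi'(z)=\tfrac{1+\beta^0}{y}L^{zt}s+\Bigl(\tfrac{\beta^1}{y}L^{zt}-\widetilde\phi(z)(F+ztG)\Bigr)t,
\end{displaymath}
with an analogous, slightly longer expression for $\widetilde\phi''(z)$. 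By the choice of $\varepsilon$, the ratio $\widetilde\phi(z)/L^{zt}$ stays between $2/3$ and $4/3$ for all $z\in(-1,1)$.

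Second, setting $\bar W(z)\triangleq V(\eta\widetilde\phi(z))$, I differentiate twice and express $\bar W'(z)$ and $\bar W''(z)$ through $V',V'',\widetilde\phi',\widetilde\phi''$. I then bound them using Assumption \ref{rra}, which gives $|V''(q)q|\leq (1/c_1)(-V'(q))$, together with Corollary \ref{rraCor} applied to $-V'$, which yields $-V'(zq)\leq(z^{-1/c_1}+1)(-V'(q))$. Writing $J\triangleq 1+|F|+G$, these two ingredients combine to produce a uniform bound of the form
\begin{displaymath}
\sup_{z\in(-1,1)}\bigl(|\bar W'(z)|+|\bar W''(z)|\bigr)\leq C\bigl(-V'(\eta)\bigr)\eta\, J^2\exp(C\varepsilon J),
\end{displaymath}
where $C>0$ depends only on $c_1$, $c_2$, and $\|\beta^0\|_\infty+\|\beta^1\|_\infty$.

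Third, since $(-V'(\eta))\eta=\xi\eta=xy\,d\mathbb R(x,0)/d\mathbb P$, the expectation of the right-hand side equals $xy\,\mathbb E^{\mathbb R(x,0)}[J^2\exp(C\varepsilon J)]$, and this is finite (after possibly shrinking $\varepsilon$) by H\"older's inequality and Assumption \ref{integrabilityAs}. The dominated convergence theorem then yields the second-order expansion exactly as in the primal case. The formulas for the partial derivatives at $(0,0)$ are obtained by evaluating $\widetilde\phi,\widetilde\phi',\widetilde\phi''$ at $z=0$ and using the identities $V'(\eta)=-\xi$, $V''(\eta)\eta=B(\eta)\xi$, together with $\mathbb E^{\mathbb R(x,0)}[\beta^i]=0$ (which holds because $\beta^i\in\mathbf L^2_0(\mathbb R(x,0))$). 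The only genuine subtlety, which I view as the main ``obstacle'', is that $L^{zt}$ now appears in the numerator of $\widetilde\phi$ rather than the denominator, which inverts the role of the growth estimate coming from Corollary \ref{rraCor}; this is precisely the reason the exponent $1/c_1$ rather than $c_2$ appears in the dominating bound, but the qualitative structure of the estimate, and hence the rest of the argument, is unchanged.
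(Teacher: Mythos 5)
Your proposal is correct and is exactly the route the paper takes: the paper proves Lemma \ref{157} only by noting it is "similar to Lemma \ref{12132}," and your argument is precisely that dual transcription (with the bounds $-V'(z\eta)\leq (z^{-1/c_1}+1)(-V'(\eta))$ and $V''(q)q\leq \tfrac{1}{c_1}(-V'(q))$ replacing the primal estimates, $-V'(\eta)\eta=\xi\eta$ giving the change to $\mathbb R(x,0)$, and domination via Assumption \ref{integrabilityAs} plus H\"older), with the derivative formulas checking out against the stated ones.
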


\begin{lem}\label{lem12131}
 Let let $x>0$ be fixed, the conditions of Theorem \ref{byproduct} hold, and $y = u_x(x,0)$. Then, we have
\begin{equation}\label{12135}
v(y+\Delta y,\delta) \leq v(y,0)-\Delta y x +\delta xy\mathbb E^{\mathbb R(x,0)}\left[ F\right]+ \tfrac{1}{2}(\Delta y\quad \delta) 
H_v(y,0)
\begin{pmatrix}
\Delta y\\
\delta\\
\end{pmatrix} + o(\Delta y^2 + \delta^2),
\end{equation}
where  $H_v(y,0)$ is given by \eqref{12137}.
\end{lem}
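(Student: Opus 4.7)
The plan is to mirror the proof of Corollary \ref{cor12131}, swapping the primal Lemma \ref{12132} for its dual twin Lemma \ref{157}. Since $v$ is an infimum, exhibiting feasible competitors will produce the desired upper bound in \eqref{12135} (rather than the lower bound obtained for $u$).

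First, I would fix arbitrary $\beta^0,\beta^1\in\mathcal B^\infty(y,0)$ and check feasibility: for $(s,t)$ in a sufficiently small neighborhood of $(0,0)$, the random variable $\eta\,\phi(s,t)$ lies in $\mathcal D(y+s,t)=(y+s)\mathcal D(1,0)L^t$. By definition of $\mathcal B^\infty$, there exists $c>0$ with $\tfrac{\eta}{y}(1\pm c\beta^i)\in\mathcal D(1,0)$ for $i=0,1$. Rewriting
\[
\eta\,\phi(s,t)=(y+s)\,\tfrac{\eta}{y}\Bigl(1+\tfrac{s\beta^0+t\beta^1}{y+s}\Bigr)L^t,
\]
the convex combination $\tfrac{1}{2}\tfrac{\eta}{y}(1+2a\beta^0)+\tfrac{1}{2}\tfrac{\eta}{y}(1+2b\beta^1)$ keeps $\tfrac{\eta}{y}(1+a\beta^0+b\beta^1)$ inside $\mathcal D(1,0)$ as long as $\max(|a|,|b|)\le c/2$, which holds once $|s|,|t|$ are small compared with $y$. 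Hence $v(y+s,t)\le \mathbb E[V(\eta\,\phi(s,t))]=\bar w(s,t)$, and Lemma \ref{157} together with Theorem \ref{byproduct} (identifying $\bar w_s(0,0)=-x$ and $\bar w_t(0,0)=xy\,\mathbb E^{\mathbb R(x,0)}[F]$) yields
\begin{equation*}
v(y+\Delta y,\delta)\le v(y,0)-\Delta y\,x+\delta\, xy\,\mathbb E^{\mathbb R(x,0)}[F]+\tfrac{1}{2}(\Delta y\;\delta)\,H_{\bar w}(\beta^0,\beta^1)\binom{\Delta y}{\delta}+o(\Delta y^2+\delta^2),
\end{equation*}
where $H_{\bar w}(\beta^0,\beta^1)$ is the Hessian from Lemma \ref{157}.

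Next, I would optimize over $(\beta^0,\beta^1)$. The unique minimizers $\widehat\beta(y,0)$ and $\widehat\beta_d(y,0)$ of \eqref{byy} and \eqref{bdd} belong to $\mathcal B^2(y,0)$, the $\mathbf L^2(\mathbb R(x,0))$-closure of $\mathcal B^\infty(y,0)$. Pick approximating sequences $(\beta^{0,n}),(\beta^{1,n})\subset\mathcal B^\infty(y,0)$ converging in $\mathbf L^2(\mathbb R(x,0))$ to $\widehat\beta(y,0)$ and $\widehat\beta_d(y,0)$ respectively. Since $B(\eta)\le 1/c_1$ is uniformly bounded by Assumption \ref{rra} and $F\in\mathbf L^2(\mathbb R(x,0))$ by Assumption \ref{integrabilityAs} together with Cauchy--Schwarz, each entry of $H_{\bar w}(\beta^{0,n},\beta^{1,n})$ converges to the corresponding entry of $H_v(y,0)=\tfrac{x}{y}\begin{pmatrix}b(y,y)&b(y,d)\\ b(y,d)&b(d,d)\end{pmatrix}$.

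The hard part, and the step where care is required, is the order-of-limits exchange: for each fixed $n$ the remainder is $o(\Delta y^2+\delta^2)$ but not uniformly in $n$. The clean way to handle this is a standard $\varepsilon$-argument. Given $\varepsilon>0$, choose $n=n(\varepsilon)$ with $H_{\bar w}(\beta^{0,n},\beta^{1,n})\preceq H_v(y,0)+\varepsilon I_2$ in the symmetric-matrix sense (possible by entrywise convergence). Applying the quadratic bound of the first paragraph with this $n$ and using $(\Delta y\;\delta)(\varepsilon I_2)\binom{\Delta y}{\delta}=\varepsilon(\Delta y^2+\delta^2)$ gives
\[
\limsup_{(\Delta y,\delta)\to(0,0)}\frac{v(y+\Delta y,\delta)-v(y,0)+\Delta y\,x-\delta xy\,\mathbb E^{\mathbb R(x,0)}[F]-\tfrac{1}{2}(\Delta y\;\delta)H_v(y,0)\binom{\Delta y}{\delta}}{\Delta y^2+\delta^2}\le\tfrac{\varepsilon}{2}.
\]
Letting $\varepsilon\downarrow 0$ yields the desired one-sided inequality \eqref{12135}.
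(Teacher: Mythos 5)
Your proposal is correct and follows essentially the same route as the paper, which proves this lemma by combining the dual expansion of Lemma \ref{157} with approximation of the minimizers of \eqref{byy} and \eqref{bdd} by elements of $\mathcal B^\infty(y,0)$, exactly as you do. The feasibility check and the $\varepsilon$-argument for exchanging the limits are details the paper leaves implicit, and you supply them correctly.
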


\section*{Closing the duality gap}
We begin from the proof of Theorem \ref{closingGapThm}.
\begin{proof}[Proof of Theorem \ref{closingGapThm}]
It follows from \cite[Lemma 2]{KS2006} that
\begin{equation}\label{12241}
\begin{array}{rcl}
A(\xi)(1 + \alpha) &=& a(x,x)(1 + \beta),\\
B(\eta)(1 + \beta) &=& b(y,y)(1 + \alpha).\\
\end{array}
\end{equation}
Using standard techniques from calculus of variations, we can show that the solutions to \eqref{add} and \eqref{bdd} satisfy
\begin{equation}\label{12242}
\begin{array}{rcl}
A(\xi)(\alpha_{d} + xF) - xF &=& c + \tilde{ \beta},\\
B(\eta)(\beta_{d} - yF) + yF &=& d +\tilde{ \alpha},\\
\end{array}
\end{equation}
where $\tilde{ \beta}\in\mathcal B^2(y,\delta)$, $\tilde{ \alpha}\in\mathcal A^2(x,\delta)$, and  $c$ and $d$ are some constants. We will characterize $\tilde{ \beta}$, $\tilde{ \alpha}$, and $d$ below. 
Let us set 
\begin{equation}\label{12244}
\tilde {\tilde{ \alpha}} \triangleq \tilde{ \alpha} - d \alpha\in\mathcal A^2(x,0).
\end{equation} 
It follows from the second equation in \eqref{12242} that 
\begin{equation}\begin{array}{rcl}
\beta_{d} - yF & =&A(\xi)\left( d - yF+ \tilde{ \alpha} \right)\\
& =&A(\xi)\left( d +d \alpha - yF+ \tilde{ \alpha} - d\alpha\right)\\
& =& d a(x,x) (1 + \beta)+ A(\xi)\left(- yF+ \tilde{\tilde{ \alpha}}\right),\\
\end{array}\end{equation}
where we have used \eqref{12241} in the last equality. Multiplying by $-\frac{x}{y}$, we obtain
\begin{displaymath}\begin{array}{rcl}
A(\xi)\left(xF - \frac{x}{y}\tilde{\tilde{ \alpha}}\right) &= &
-\frac{x}{y}(\beta_{d} - yF) -\frac{x}{y} d a(x,x) (1 + \beta),\\
\end{array}
\end{displaymath}
and thus 
\begin{displaymath}
A(\xi) \left(xF - \frac{x}{y}\tilde{\tilde{ \alpha}} \right) -xF= \tilde d + \tilde{\tilde{ \beta}},
\end{displaymath}
where
\begin{displaymath}
\tilde d = -\frac{x}{y}d a(x,x)\in\mathbb R\quad and \quad 
\tilde {\tilde{ \beta}} = -\frac{x}{y} d a(x,x) \beta  - \frac{x}{y}\beta_{d} \in\mathcal B^2(y,0).
\end{displaymath}
It follows from characterization of the unique solution to \eqref{add} given by \eqref{12242} that $$- \frac{x}{y}\tilde {\tilde{ \alpha}} = \alpha_{d},\quad equivalently \quad
\tilde{\tilde{ \alpha}} = -\frac{y}{x} \alpha_{d}.$$
From \eqref{12244}, we obtain
\begin{displaymath}
\tilde{ \alpha} = \tilde{\tilde{ \alpha}} + d\alpha = -\frac{y}{x}\alpha_{d} + d \alpha.
\end{displaymath}
Plugging this back into the second equality in \eqref{12242}, we get
\begin{displaymath}
B(\eta)(\beta_{d} - yF) = d(1 + \alpha) - \frac{y}{x}(\alpha_{d} + xF).
\end{displaymath}
Multiplying by $\frac{x}{y}A(\xi)$, we obtain
\begin{equation}\label{12246}
A(\xi) (\alpha_{d} + xF) = \frac{x}{y}d a(x,x) (1 + \beta) - \frac{x}{y}(\beta_{d} - yF).
\end{equation}
Setting $d' \triangleq \frac{x}{y}d a(x,x) 
$, we claim that
\begin{equation}\label{12245}
d' = a(x,d),
\end{equation}
where $a(x,d)$ is defined in \eqref{axd}.
Multiplying both sides of \eqref{12246} by $(1 + \alpha)$, taking expectation under $\mathbb R(x,0)$, and using orthogonality of the elements of $\mathcal A^2(x,0)$ and $\mathcal B^2(y,0)$, we get 
\begin{displaymath}
\begin{array}{rcl}
\mathbb E^{\mathbb R(x,0)}\left[A(\xi)(\alpha_{d} + xF)(1 + \alpha) \right]&=& d'\mathbb E^{\mathbb R(x,0)}\left[(1 + \beta)(1 + \alpha)\right] \\
&&- \frac{x}{y}\mathbb E^{\mathbb R(x,0)}\left[(\beta_{d} - yF)(1 + \alpha)\right]\\
&=& d'+\mathbb E^{\mathbb R(x,0)}\left[ xF(1 + \alpha)\right].\\
\end{array}
\end{displaymath}
Therefore,
\begin{displaymath}
d' = \mathbb E^{\mathbb R(x,0)}\left[A(\xi)(\alpha_{d} + xF)(1 + \alpha) -xF(1 + \alpha)  \right] = a(x,d),
\end{displaymath}
where in the last equality we have used \eqref{axd}. 
Thus, \eqref{12245} holds. Now, \eqref{12246} with $ \frac{x}{y}d a(x,x)  = a(x,d)$ and \eqref{12242} prove \eqref{keyGap2}. \eqref{12251} can be shown similarly. 
As $A(\xi) = \frac{1}{B(\eta)}$, \eqref{keyGap2} and \eqref{12251} imply \eqref{12154}. 

It remains to prove \eqref{keyGap}.
Let us set
\begin{displaymath}\begin{array}{rclcrcl}
\bar \beta &\triangleq & \beta + 1,& \quad & \bar \alpha & \triangleq & \alpha + 1,\\
\bar \beta_{d} &\triangleq & \beta_{d} - yF,& \quad & \bar \alpha_{d} & \triangleq & \alpha_{d} + xF.\\
\end{array}
\end{displaymath}
Then from \eqref{add} using \eqref{keyGap2}, we get
\begin{equation}\label{12248}\begin{array}{rcl}
\frac{y}{x} a(\delta,\delta) &= &
\mathbb E^{\mathbb R(x, 0)}\left[\frac{y}{x}a(x,d) \bar \beta\bar \alpha_{d} - \bar \beta_{d} \bar \alpha_{d} \right] \\
&&- \frac{y}{x}\mathbb E^{\mathbb R(x, 0)}\left[2xF \alpha_{d}\right] - xy\mathbb E^{\mathbb R(x, 0)}\left[F^2 + G\right].\\
\end{array}
\end{equation}
Likewise, from \eqref{bdd} via \eqref{12248}, we obtain
\begin{equation}\label{12249}
\frac{x}{y}b(d,d) = \mathbb E^{\mathbb R(x, 0)}\left[\frac{x}{y}b(y,d) \bar \alpha\bar \beta_{d} - \bar \beta_{d} \bar \alpha_{d} + 2\beta_{d} xF -xy(F^2 -G)\right].
\end{equation}
Let us define
\begin{equation}\nonumber
T_1 \triangleq \mathbb E^{\mathbb R(x, 0)}\left[
\frac{y}{x}a(x,d) \bar \beta\bar \alpha_{d} +\frac{x}{y}b(y,d) \bar \alpha\bar \beta_{d} \right],
\end{equation}
and 
\begin{equation}\nonumber
T_2 \triangleq \mathbb E^{\mathbb R(x, 0)}\left[
-2\bar \beta_{d} \bar \alpha_{d} - 2yF\alpha_{d} + 2 xF\beta_{d} - 2 xy F^2
 \right].
\end{equation}
Then, adding \eqref{12248} and \eqref{12249}, we deduce that
\begin{equation}\label{122410}
\begin{array}{rcl}
\frac{y}{x} a(\delta,\delta) +\frac{x}{y}b(d,d) &=&
T_1 + T_2.\\
\end{array}
\end{equation}
Let us rewrite $T_2$ as 
\begin{equation}\label{122412}
\begin{array}{rcl} T_2 &=& 
\mathbb E^{\mathbb R(x, 0)}\left[
-2\bar \beta_{d} \bar \alpha_{d} - 2yF\alpha_{d} + 2 xF\beta_{d} - 2 xy F^2\right]
\\
&=&
\mathbb E^{\mathbb R(x, 0)}\left[-2(\beta_{d} - yF)(\alpha_{d} + xF) - 2 yF\alpha_{d} + 2xF\beta_{d} - 2 xyF^2\right]
\\
&=& \mathbb E^{\mathbb R(x, 0)}\left[
-2\beta_{d}\alpha_{d} - 2\beta_{d}xF + 2 yF\alpha_{d} + 2 xyF^2 
 - 2 yF\alpha_{d} \right.\\
 &&\hspace{13mm}\left.+ 2xF\beta_{d} - 2 xyF^2\right]
= \mathbb E^{\mathbb R(x, 0)}\left[
-2\beta_{d}\alpha_{d} \right] = 0,
\\
\end{array} 
\end{equation}
as all the terms under the expectation cancel except for $-2\beta_{d}\alpha_{d}$, which has still $0$ expectation by orthogonality of $\mathcal A^2(x,0)$ and $\mathcal B^2(y,0)$.
Let us consider $T_1$. First, from \eqref{12154}, we get
\begin{equation}
\frac{x}{y}b(y,d)= \frac{a(x,d)}{a(x,x)} = a(x,d)b(y,y).
\end{equation}
Therefore, we can rewrite $T_1$ as 
\begin{equation}\label{122414}\begin{array}{rcl}
T_1 &=& \mathbb E^{\mathbb R(x,0)}\left[ \frac{y}{x}a(x,d) \bar \beta \bar \alpha_{d} + a(x,d)b(y,y)\bar \alpha \bar \beta_{d}\right]\\
&=& a(x,d) \mathbb E^{\mathbb R(x,0)}\left[ \frac{y}{x}\bar \beta \bar \alpha_{d} +b(y,y)\bar \alpha \bar \beta_{d}\right].\\
\end{array}
\end{equation}
On the other hand, from \eqref{byd} we can express $b(y,d)$ in terms of $\bar \beta$, $\bar \beta_{d}$, $\bar \alpha$, and $\bar \alpha_{d}$ as follows.
\begin{equation}\label{122411}
b(y,d) =
\mathbb E^{\mathbb R(x,0)}\left[B(\eta)\bar \beta_{d}\bar \beta + \frac{y}{x}\bar \beta \bar \alpha_{d}\right] =
\mathbb E^{\mathbb R(x,0)}\left[b(y,y)\bar \alpha\bar \beta_{d} + \frac{y}{x}\bar \beta \bar \alpha_{d}\right],
\end{equation}
where in the last equality we have used \eqref{12241}. Comparing \eqref{122411} with  \eqref{122414}, we get 
$$T_1 = a(x,d)b(y,d).$$
Plugging this into \eqref{122410} and using \eqref{122412}, we deduce that 
$$\frac{y}{x} a(d,d) +\frac{x}{y}b(d,d) =  a(x,d)b(y,d),$$
i.e. \eqref{keyGap} holds. This completes the proof of the lemma.
\end{proof}

\begin{lem}\label{12212}
Let $x>0$ be fixed, the conditions of Theorem \ref{byproduct} hold, and $y = u_x(x,0)$. Then, for
\begin{equation}\label{12215}
\Delta y =-\frac{y}{x b(y,y)}\left(\frac{x}{y}b(y,d)\delta +  \Delta x\right),
\end{equation}
 we have
\begin{equation}\label{12213}
\left(\begin{array}{c}\Delta y \\ \delta\end{array}\right)^T H_v(y,0) \left(\begin{array}{c}\Delta y \\ \delta\end{array}\right) + 2\Delta x\Delta y = 
\left(\begin{array}{c}\Delta x \\ \delta\end{array}\right)^T H_u(x,0) \left(\begin{array}{c}\Delta x \\ \delta\end{array}\right).
\end{equation}
\end{lem}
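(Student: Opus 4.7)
The identity to establish is purely algebraic once the relations from Theorem~\ref{closingGapThm} are in hand, and my plan is to reduce it to a direct coefficient comparison. I would begin by expanding both sides explicitly as quadratic forms in $(\Delta x, \delta)$ and $(\Delta y, \delta)$:
\begin{equation}\nonumber
\tfrac{y}{x}\,\text{RHS} = -\bigl( a(x,x)\Delta x^2 + 2 a(x,d)\Delta x\,\delta + a(d,d)\delta^2\bigr),
\end{equation}
\begin{equation}\nonumber
\tfrac{y}{x}\cdot\bigl(\text{LHS}\bigr) = b(y,y)\Delta y^2 + 2 b(y,d)\Delta y\,\delta + b(d,d)\delta^2 + \tfrac{2y}{x}\Delta x\,\Delta y.
\end{equation}
I would then substitute the prescribed value of $\Delta y$ from \eqref{12215}, which can be rewritten as $\Delta y = -\tfrac{b(y,d)}{b(y,y)}\delta - \tfrac{y}{x\,b(y,y)}\Delta x$, and treat the identity as a statement that the coefficients of $\Delta x^2$, $\Delta x\,\delta$, and $\delta^2$ agree on both sides.

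The next step is to invoke the three key relations coming from Theorem~\ref{closingGapThm}. The matrix equation \eqref{12154} yields two scalar identities,
\begin{equation}\nonumber
a(x,x)\,b(y,y) = 1 \quad\text{and}\quad a(x,d) = \tfrac{x}{y}\,\tfrac{b(y,d)}{b(y,y)},
\end{equation}
while \eqref{keyGap} gives $\tfrac{y}{x}a(d,d) + \tfrac{x}{y}b(d,d) = a(x,d)\,b(y,d)$. The coefficient of $\Delta x^2$ on the LHS (after substitution) reduces to $\tfrac{y^2}{x^2\,b(y,y)} - \tfrac{2y}{x\,b(y,y)}\cdot\tfrac{x}{y}\cdot\tfrac{y}{x} = -\tfrac{y}{x\,b(y,y)}$, which matches $-\tfrac{y}{x}a(x,x)$ by the first relation. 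The coefficient of $\Delta x\,\delta$ involves a clean cancellation: the contribution from $b(y,y)\Delta y^2$ exactly cancels that from $2b(y,d)\Delta y\,\delta$, and the surviving term from $2\Delta x\,\Delta y$ matches $-\tfrac{2y}{x}a(x,d)$ via the second relation. Finally, the coefficient of $\delta^2$ reduces to $b(d,d) - \tfrac{b(y,d)^2}{b(y,y)}$, which, after multiplying by $\tfrac{x}{y}$ and using the second relation in the form $a(x,d)\,b(y,d) = \tfrac{x}{y}\cdot\tfrac{b(y,d)^2}{b(y,y)}$, matches $-\tfrac{y}{x}a(d,d)$ by \eqref{keyGap}.

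I do not anticipate any genuine obstacle: the formula \eqref{12215} is engineered precisely so that the substitution produces the linear combination of $a(x,x)$, $a(x,d)$, $a(d,d)$ on the LHS dictated by the duality relations. The only mild subtlety is bookkeeping, so I would organize the verification as three short coefficient-by-coefficient checks (one for each of $\Delta x^2$, $\Delta x\,\delta$, $\delta^2$), each of which uses exactly one of the three scalar identities extracted above. This makes the role of \eqref{12154} and \eqref{keyGap} transparent and confirms that the statement is really a Legendre-type change-of-variables identity between the Hessians $H_u(x,0)$ and $H_v(y,0)$.
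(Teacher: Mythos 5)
Your proposal is correct and follows essentially the same route as the paper: the paper's proof likewise reduces \eqref{12213} to a purely algebraic identity (by completing the square in the prescribed $\Delta y$, its \eqref{12214}) and verifies it using exactly the three relations you cite, namely $a(x,x)b(y,y)=1$, $a(x,d)=\frac{x}{y}\frac{b(y,d)}{b(y,y)}$ from \eqref{12154}, and \eqref{keyGap}, so your coefficient-by-coefficient check is just a reorganization of that computation. The only blemish is a normalization slip (your first display equals $\frac{x}{y}$ times the right-hand side rather than $\frac{y}{x}$ times it, and the intermediate $\Delta x^2$ bookkeeping mixes the two scalings), but with a consistent prefactor all three coefficient matches come out exactly as you claim.
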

\begin{proof}
First, note that $b(y,y)>0$ in  \eqref{12215}.
By direct computations, proving \eqref{12213} is equivalent to establishing the following equality.
\begin{equation}\label{12214}
-\frac{y}{x b(y,y)} \left( \frac{x}{y}b(y,d)\delta +  \Delta x\right)^2 = 
\left(\begin{array}{c}\Delta x \\ \delta\end{array}\right)^T H_u(x,0) \left(\begin{array}{c}\Delta x \\ \delta\end{array}\right) 
- \frac{x}{y}b(d,d)\delta^2.
\end{equation}
Now, let us consider the right-hand side or \eqref{12214}. By direct computations, it can be rewritten as follows.
\begin{equation}\label{12216}\begin{array}{c}
 -\frac{y}{x}\Delta x^2 a(x,x) + 2\Delta x \delta \left(- \frac{y}{x} a(x,d)\right) 
- \delta^2\left(\frac{y}{x}a(d,d) + \frac{x}{y}b(d,d)\right)=\\ 
 -\frac{{y}}{{x}b(y,y)}\Delta x^2
+ 2\Delta x \delta \left(- \frac{y}{x} a(x,d)\right) 
- \delta^2a(x,d)b(y,d),\\ 
\end{array}\end{equation}
where the last equality follows from \eqref{122111} and \eqref{keyGap}. 
We deduce from \eqref{12154} that
\begin{equation}\label{122110}
a(x,d) = \frac{x}{y}\frac{b(y,d)}{b(y,y)}.
\end{equation}
Pluggin \eqref{122110} into  \eqref{12216}, we can rewrite the right-hand side of \eqref{12216} as 
\begin{displaymath}\begin{array}{c}
-\frac{{y}}{{x}b(y,y)}\Delta x^2 
- 2\Delta x \delta \frac{b(y,d)}{b(y,y)}
- \delta^2\frac{x}{y}\frac{(b(y,d))^2}{b(y,y)}
=-\frac{y}{x b(y,y)}\left(\Delta x  + \frac{x}{y}b(y,d)\delta \right)^2,
\end{array}\end{displaymath}
which is precisely the left-hand side of \eqref{12214}. We have just shown that \eqref{12214} holds. By the argument preceding \eqref{12214}, this implies that \eqref{12213} is valid as well. This completes the proof of the lemma.
\end{proof}

\begin{lem}\label{5161}
Let $x>0$ be fixed, the conditions of Theorem \ref{byproduct} hold, and $y = u_x(x,0)$.  Then, we have
\begin{equation}\label{5162}
u(x+\Delta x,\delta) = u(x,0)+ \Delta x y+  \delta xy\mathbb E^{\mathbb R(x,0)}\left[ F\right]+ \tfrac{1}{2}(\Delta x\quad \delta) 
H_u(x,0)
\begin{pmatrix}
\Delta x\\
\delta\\
\end{pmatrix} + o(\Delta x^2 + \delta^2),
\end{equation}
where  $H_u(x,0)$ is given by \eqref{12136}.
Likewise
\begin{equation}\label{5163}
v(y+\Delta y,\delta) = v(y,0)-\Delta y x +\delta xy\mathbb E^{\mathbb R(x,0)}\left[ F\right]+ \tfrac{1}{2}(\Delta y\quad \delta) 
H_v(y,0)
\begin{pmatrix}
\Delta y\\
\delta\\
\end{pmatrix} + o(\Delta y^2 + \delta^2),
\end{equation}
where  $H_v(y,0)$ is given by \eqref{12137}.
\end{lem}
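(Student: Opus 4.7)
The plan is to turn the two one-sided quadratic bounds established earlier (lower bound for $u$ from Corollary~\ref{cor12131} and upper bound for $v$ from Lemma~\ref{lem12131}) into the matching two-sided expansions \eqref{5162} and \eqref{5163}, using the conjugacy inequality $u(x',\delta)\leq v(y',\delta)+x'y'$ (available via the abstract results of \cite{KS} and \cite{Mostovyi2015}) and the quadratic reconciliation identity \eqref{12213} proved in Lemma~\ref{12212}. The key observation is that the specific affine dependence of $\Delta y$ on $(\Delta x,\delta)$ prescribed by \eqref{12215} was engineered precisely so that
\[
\tfrac{1}{2}(\Delta y\ \delta)\,H_v(y,0)\,(\Delta y\ \delta)^T+\Delta x\,\Delta y=\tfrac{1}{2}(\Delta x\ \delta)\,H_u(x,0)\,(\Delta x\ \delta)^T.
\]

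\textbf{Upper bound for $u$.} For $(\Delta x,\delta)$ in a small neighbourhood of $(0,0)$, define $\Delta y$ by \eqref{12215}; it is a linear function of $(\Delta x,\delta)$, so $|\Delta y|=O(|\Delta x|+|\delta|)$ and $y+\Delta y>0$ for small data. Fenchel duality gives
\[
u(x+\Delta x,\delta)\leq v(y+\Delta y,\delta)+(x+\Delta x)(y+\Delta y).
\]
Apply Lemma~\ref{lem12131} to the first term, expand $(x+\Delta x)(y+\Delta y)=xy+y\Delta x+x\Delta y+\Delta x\Delta y$, use the base-point duality $u(x,0)=v(y,0)+xy$ to cancel the constant terms, notice that the $x\Delta y$ contribution from the product cancels the $-x\Delta y$ term in the linear part of the $v$-expansion, and substitute the identity \eqref{12213} of Lemma~\ref{12212} to convert the quadratic form in $(\Delta y,\delta)$ together with $\Delta x\Delta y$ into the quadratic form $\tfrac{1}{2}(\Delta x\ \delta)H_u(x,0)(\Delta x\ \delta)^T$. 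Since $\Delta y$ is linear in $(\Delta x,\delta)$, the remainder $o(\Delta y^2+\delta^2)$ is absorbed into $o(\Delta x^2+\delta^2)$. Combining with the matching lower bound from Corollary~\ref{cor12131} yields \eqref{5162}.

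\textbf{Lower bound for $v$.} Use the reverse conjugacy inequality
\[
v(y+\Delta y,\delta)\geq u(x+\Delta x,\delta)-(x+\Delta x)(y+\Delta y),
\]
valid for any admissible $\Delta x$. Given $(\Delta y,\delta)$, solve \eqref{12215} for $\Delta x$ (this is possible since $b(y,y)>0$ by \eqref{122111}, and makes $\Delta x$ a linear function of $(\Delta y,\delta)$), so that the algebraic identity \eqref{12213} still applies. Plug in the lower bound from Corollary~\ref{cor12131} for $u(x+\Delta x,\delta)$, expand the product, and again use \eqref{12213} to rewrite the combined quadratic term as $\tfrac{1}{2}(\Delta y\ \delta)H_v(y,0)(\Delta y\ \delta)^T$. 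Coupled with the upper bound from Lemma~\ref{lem12131}, this gives \eqref{5163}.

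\textbf{Main obstacle.} The argument is essentially a bookkeeping exercise once Lemmas~\ref{cor12131}, \ref{lem12131}, \ref{12212}, and Theorem~\ref{closingGapThm} are in hand; the real content lies in those statements. The only genuine subtleties are (i) verifying that the choice of $\Delta y$ from \eqref{12215} (and its inverse) is indeed affine, so that the remainders remain of order $o(\Delta x^2+\delta^2)$ (respectively $o(\Delta y^2+\delta^2)$); and (ii) checking that $y+\Delta y>0$ and $x+\Delta x>0$ for all sufficiently small perturbations, so that Lemma~\ref{lem12131} and Corollary~\ref{cor12131} can legitimately be invoked at the perturbed base points. Both are immediate from linearity and the strict positivity of $x,y,b(y,y)$.
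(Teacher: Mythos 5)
Your proposal is correct and follows essentially the same route as the paper: Fenchel's inequality combined with Lemma \ref{lem12131} and the choice of $\Delta y$ from \eqref{12215} gives the upper bound for $u$, Corollary \ref{cor12131} gives the matching lower bound, and Lemma \ref{12212} reconciles the two quadratic forms. Your symmetric argument for \eqref{5163} (inverting \eqref{12215} and using the reverse Fenchel inequality) is precisely the "similarly" step the paper leaves to the reader.
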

\begin{proof}
For small $\Delta x$  and $\delta$ and with $\Delta y$ given by \eqref{12215}, we get
 from conjugacy of $u$ and $v$ and Lemma \ref{lem12131} that
\begin{equation}\label{5-9-1}
\begin{array}{rcl}
 u(x+\Delta x, \delta) &\leq&
 v(y+\Delta y, \delta) + (x + \Delta x)(y + \Delta y)\\
 &\leq& v(y, 0)-\Delta y x +\delta xy\mathbb E^{\mathbb R(x,0)}\left[ F\right]
 +\tfrac{1}{2}\left(\begin{array}{c}\Delta y \\ \delta\end{array}\right)^T H_v(y,0) \left(\begin{array}{c}\Delta y \\ \delta\end{array}\right)\\
 &&+ xy + y\Delta x + x \Delta y + \Delta x\Delta y + o(\Delta y^2+\delta^2),\\
\end{array}
\end{equation}
where 
$
 H_v(y,0)
$
is given in \eqref{12137}.
As $y = u_x(x, 0)$ and $x = -v_y(y, 0)$, 
collecting terms in the right-hand side of (\ref{5-9-1}), we obtain

\begin{equation}\label{5-9-2}
 \begin{array}{rcl}
  u(x + \Delta x, \delta) &\leq& u(x, 0)+ \Delta x y+  \delta xy\mathbb E^{\mathbb R(x,0)}\left[ F\right]\\
  &&+\tfrac{1}{2}\left(\begin{array}{c}\Delta y \\ \delta\end{array}\right)^T H_v(y,0) \left(\begin{array}{c}\Delta y \\ \delta\end{array}\right) + \Delta x\Delta y + o(\Delta x^2 + \delta^2).
 \end{array}
\end{equation}
Likewise, using Corollary \ref{cor12131}
, we get
\begin{equation}\label{122113}
u(x+\Delta x,\delta) \geq u(x,0)+ \Delta x y+  \delta xy\mathbb E^{\mathbb R(x,0)}\left[ F\right]+ \tfrac{1}{2}(\Delta x\quad \delta) 
H_u(x,0)
\begin{pmatrix}
\Delta x\\
\delta\\
\end{pmatrix} + o(\Delta x^2 + \delta^2).
\end{equation}
By Lemma \ref{12212}
, the quadratic terms in \eqref{5-9-2} and \eqref{122113} are equal. Therefore, \eqref{5-9-2} and \eqref{122113} imply that $u$ admits a second-order expansion at $(x,0)$ given by \eqref{5162}.
Similarly we can prove 
\eqref{5163}.

\end{proof}
\begin{proof}[Proof of Theorem \ref{byproduct}]
The assertions of Theorem \ref{byproduct} follow from Lemma \ref{5161}.
\end{proof}
\begin{proof}[Proof of Theorem \ref{seconOrderThmA}]
Expansions \eqref{122121} and \eqref{122122} follow from Lemma \ref{5161} and Theorem \ref{byproduct}.
\end{proof}
\section*{Derivatives of the optimizers}
We begin with a technical lemma.
\begin{lem}\label{131}
Let $x>0$ be fixed, the conditions of Theorem \ref{byproduct} hold, $y=u_x(x,0)$, and let $(\delta^n)_{n\geq 1}$ be a sequence, which converges  to $0$. Then, we have
\begin{displaymath}
\lim\limits_{n\to\infty}\mathbb E\left[V\left( \widehat {\eta}(y,0)L^{\delta^n}\right) \right] = v(y,0).
\end{displaymath}
\end{lem}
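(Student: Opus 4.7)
The plan is to prove this by dominated convergence. First, as $\delta^n \to 0$ we have $L^{\delta^n} = \exp(-\delta^n F - \tfrac{1}{2}(\delta^n)^2 G) \to 1$ a.s.\ (since $F$ and $G$ are a.s.\ finite), so continuity of $V$ on $(0,\infty)$ yields $V(\widehat\eta(y,0) L^{\delta^n}) \to V(\widehat\eta(y,0))$ a.s. It therefore suffices to dominate $|V(\widehat\eta(y,0)L^{\delta^n}) - V(\widehat\eta(y,0))|$ in $L^1(\mathbb P)$, uniformly in $n$ for all sufficiently large $n$.

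For the dominating estimate, by the mean value theorem and the fact that $V'$ is negative and increasing (by convexity of $V$), I would write
\begin{equation}\nonumber
|V(\widehat\eta L^{\delta^n}) - V(\widehat\eta)| \leq \max\bigl(|V'(\widehat\eta L^{\delta^n})|,\; |V'(\widehat\eta)|\bigr)\,\widehat\eta\,|L^{\delta^n} - 1|.
\end{equation}
Corollary \ref{rraCor} gives $|V'(\widehat\eta L^{\delta^n})| \leq ((L^{\delta^n})^{-1/c_1}+1)|V'(\widehat\eta)|$, so the maximum is bounded by $((L^{\delta^n})^{-1/c_1}+1)|V'(\widehat\eta)|$. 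Combining with the primal-dual identity $-V'(\widehat\eta(y,0))\,\widehat\eta(y,0) = \widehat\xi(x,0)\widehat\eta(y,0) = xy\cdot d\mathbb R(x,0)/d\mathbb P$ (where $y = u_x(x,0)$) and taking $\mathbb P$-expectation, the problem reduces to showing
\begin{equation}\nonumber
\lim_{n\to\infty}\mathbb E^{\mathbb R(x,0)}\bigl[\bigl((L^{\delta^n})^{-1/c_1}+1\bigr)\,|L^{\delta^n}-1|\bigr] = 0.
\end{equation}

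The last step is dominated convergence under $\mathbb R(x,0)$ applied to this expression: the integrand tends to $0$ a.s., and for $n$ large enough with $|\delta^n|\leq \delta_*$, it is pointwise bounded by $2\exp(C_*(|F|+G))$ for a constant $C_*$ depending only on $c_1$ and $\delta_*$. The main technical point will be choosing $\delta_*$ and an exponent $p>1$ with $pC_*\leq c$, where $c$ is the constant in Assumption \ref{integrabilityAs}; H\"older's inequality combined with the exponential moment $\mathbb E^{\mathbb R(x,0)}[\exp(c(|F|+G))]<\infty$ then supplies the uniform $L^p(\mathbb R(x,0))$-bound, hence uniform integrability, which closes the argument. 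The only subtlety is organizing these exponent constraints so that the same $\delta_*$ and $p$ work simultaneously for the $(L^{\delta^n})^{-1/c_1}$ factor and the approximation of $|L^{\delta^n}-1|$; this is a routine bookkeeping step once the structural bound above is in place.
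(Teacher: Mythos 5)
Your proposal is correct and follows essentially the same route as the paper, which only sketches this proof by reference to Lemma \ref{12132}: a mean-value/derivative bound, the estimate $-V'(z\widehat\eta)\leq (z^{-1/c_1}+1)(-V'(\widehat\eta))$ from Corollary \ref{rraCor}, the identity $-V'(\widehat\eta)\widehat\eta = xy\,\tfrac{d\mathbb R(x,0)}{d\mathbb P}$ to pass to $\mathbb R(x,0)$, and then the exponential moment in Assumption \ref{integrabilityAs} (with a small enough $\delta_*$, H\"older being optional) to dominate and apply dominated convergence. The bookkeeping you flag is indeed routine, so no gap remains.
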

\begin{proof}
The proof goes along the lines of the proof of Lemma \ref{12132}, it is therefore skipped.
\end{proof}
\begin{lem}
Let $x>0$ be fixed, the conditions of Theorem \ref{byproduct} hold, $y=u_x(x,0)$, and $(y^n, \delta^n)_{n\in\mathbb N}$ be a sequence, which converges to $(y,0)$. Then $\eta^n \triangleq \widehat {\eta}(y^n, \delta^n)$, $n\geq 1$, converges to $\eta \triangleq \widehat {\eta}(y, 0)$ in probability and $V(\eta^n)$, $n\geq 1$, converges to $V(\eta)$ in $\mathbf L^1(\mathbb P).$
\end{lem}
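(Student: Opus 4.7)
The plan is to combine continuity of the dual value function (already established in Theorem \ref{byproduct}) with a strict-convexity-of-$V$ argument applied to an appropriately chosen convex combination of $\eta^n$ with a ``near-optimal surrogate'' built from $\eta$. Convergence of $V(\eta^n)$ in $\mathbf{L}^1$ will then follow from convergence in probability combined with the convergence of expectations via a Scheff\'e-type argument.

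\textbf{Step 1 (continuity of $v$).} By Theorem \ref{byproduct} the value function $v$ is continuous at $(y,0)$, so $\mathbb{E}[V(\eta^n)] = v(y^n,\delta^n) \to v(y,0) = \mathbb{E}[V(\eta)]$.

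\textbf{Step 2 (surrogate sequence).} Set
\[
z^n \triangleq \frac{y^n}{y}\, L^{\delta^n}\, \eta.
\]
Since $\eta \in \mathcal{D}(y,0) = y\mathcal{D}$, we have $z^n \in y^n \mathcal{D} L^{\delta^n} = \mathcal{D}(y^n,\delta^n)$. A straightforward extension of Lemma \ref{131} (allowing $y^n \to y$ in addition to $\delta^n \to 0$; the argument is identical, using Corollary \ref{rraCor} and Assumption \ref{integrabilityAs}) yields $\mathbb{E}[V(z^n)] \to \mathbb{E}[V(\eta)]$.

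\textbf{Step 3 (convex combination and gap).} By convexity of $\mathcal{D}(y^n,\delta^n)$, $\bar\eta^n \triangleq \tfrac12(\eta^n + z^n) \in \mathcal{D}(y^n,\delta^n)$. Set
\[
D^n \triangleq \tfrac12\bigl(V(\eta^n) + V(z^n)\bigr) - V(\bar\eta^n) \geq 0,
\]
which is non-negative by convexity of $V$. Optimality of $\eta^n$ gives $\mathbb{E}[V(\bar\eta^n)] \geq \mathbb{E}[V(\eta^n)]$, hence
\[
0 \leq \mathbb{E}[D^n] \leq \tfrac12\bigl(\mathbb{E}[V(z^n)] - \mathbb{E}[V(\eta^n)]\bigr) \longrightarrow 0,
\]
so $D^n \to 0$ in $\mathbf{L}^1(\mathbb{P})$, and hence in probability.

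\textbf{Step 4 (from gap to convergence in probability).} Assumption \ref{rra} gives $B(y) = -V''(y)y/V'(y) \in [1/c_2, 1/c_1]$, which translates into a quantitative local strict convexity of $V$: on any event where $\eta^n$ and $z^n$ lie in a compact subset of $(0,\infty)$, $D^n$ dominates $c\,(\eta^n - z^n)^2$ for an appropriate constant $c$. Since $z^n \to \eta$ in probability and $\eta \in (0,\infty)$ a.s., $z^n$ is tight in $(0,\infty)$; the integral bound $\sup_n \mathbb{E}[V(\eta^n)] < \infty$ combined with the behavior of $V$ at the endpoints controls the tails of $\eta^n$, yielding tightness of $\eta^n$ in $(0,\infty)$ as well. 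Localizing on events of the form $\{\varepsilon \leq \eta^n, z^n \leq 1/\varepsilon\}$ and letting $\varepsilon \downarrow 0$ then gives $\eta^n - z^n \to 0$ in probability, and therefore $\eta^n \to \eta$ in probability.

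\textbf{Step 5 ($\mathbf{L}^1$-convergence of $V(\eta^n)$).} Continuity of $V$ together with $\eta^n \to \eta$ in probability yields $V(\eta^n) \to V(\eta)$ in probability. Combined with the convergence of expectations from Step 1, a Scheff\'e-type argument (applied separately to the positive and negative parts, using the convex bounds on $V$ and Assumption \ref{rra} to secure uniform integrability of the dominated side) upgrades this to convergence in $\mathbf{L}^1(\mathbb{P})$.

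\textbf{Main obstacle.} The delicate step is Step 4: translating the $\mathbf{L}^1$-vanishing of the convexity gap $D^n$ into convergence in probability of $\eta^n - z^n$. The modulus of strict convexity of $V$ degenerates at both $0$ and $\infty$, so one must carefully use the uniform bounds on the relative risk tolerance $B$ together with a tightness argument (derived from $\sup_n \mathbb{E}[V(\eta^n)] < \infty$ and Lemma \ref{11303}/Corollary \ref{rraCor}) to avoid mass escaping to the endpoints of $(0,\infty)$.
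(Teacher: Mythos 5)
Your overall strategy is the same as the paper's: transport $\eta$ into the perturbed dual domain to get a feasible surrogate, take the midpoint convex combination with $\eta^n$, use optimality of $\eta^n$ together with continuity of $v$ at $(y,0)$ (Theorem \ref{byproduct}) and a Lemma \ref{131}-type statement to force the convexity gap to vanish in $\mathbf{L}^1$, and then invoke strict convexity of $V$ to obtain convergence in probability; the $\mathbf{L}^1$-convergence of $V(\eta^n)$ is then extracted from convergence in probability plus convergence of the expectations, at the same level of detail as the paper's one-line conclusion. The only organizational difference is that you rescale by $y^n/y$ so that the midpoint stays in $\mathcal{D}(y^n,\delta^n)$ and you compare directly with $v(y^n,\delta^n)$, whereas the paper keeps $\eta L^{\delta^n}$ unrescaled, observes that the midpoint lies in $\mathcal{D}\bigl(\tfrac{y^n+y}{2},\delta^n\bigr)$, and argues by contradiction via $v\bigl(\tfrac{y^n+y}{2},\delta^n\bigr)$; the two variants are equivalent.

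The one step that does not hold up as written is the tightness justification in Step 4, which you yourself single out as the main obstacle. The bound $\sup_n\mathbb{E}[V(\eta^n)]<\infty$ does not control $\mathbb{P}[\eta^n>K]$ in general: $V$ is decreasing, so if $U(0+)>-\infty$ then $V$ is bounded on $[1,\infty)$ and mass of $\eta^n$ escaping to $+\infty$ costs nothing in the dual objective; and even when $V(\infty)=-\infty$, converting the bound on $\mathbb{E}[V(\eta^n)]$ into tail control would require uniform integrability of $V^{+}(\eta^n)$, which you have not established. The correct (and simpler) source of tightness is structural: $\eta^n\in\mathcal{D}(y^n,\delta^n)=y^n\mathcal{D}L^{\delta^n}$, the set $\mathcal{D}$ is bounded in probability because $\mathcal{C}$ contains a strictly positive element $\xi_0$ with $\mathbb{E}[\xi_0 d]\le 1$ for every $d\in\mathcal{D}$ (Assumption \ref{bipolar}), and $L^{\delta^n}\to 1$ in probability; this is precisely the localization the paper uses when it writes $\widetilde\eta^n=\eta^n/L^{\delta^n}\in\mathcal{D}(\bar y,0)$ and restricts to the event $|\widetilde\eta^n-\eta|L^{\delta^n}\le 1/\varepsilon$. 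Relatedly, the modulus of strict convexity of $V$ degenerates only at $+\infty$, not at $0$: for $0<t\le K$, monotonicity of $-V'$ and Assumption \ref{rra} give $V''(t)=-V'(t)B(t)/t\ge -V'(K)/(c_2K)>0$, so only the upper localization $\eta^n,z^n\le K$ is needed; this also closes the small hole left by your event $\{\varepsilon\le\eta^n\}$, which as stated would not cover the case of $\eta^n$ carrying mass near $0$ while $z^n$ stays near $\eta>0$. With these repairs your argument is complete and coincides in substance with the paper's proof.
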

\begin{proof} In view of Theorem \ref{byproduct}, without loss of generality, we may assume that $v(y^n, \delta^n)$ is finite for every $n\in\mathbb N$.
Let us assume by contradiction that $\left( \eta ^n\right)_{n\in\mathbb N}$ does not converge in probability to $\eta$. Then there exists $\varepsilon >0$, such that 
\begin{equation}\nonumber
\limsup\limits_{n\to \infty}\mathbb P\left[\left| \eta^n - \eta\right| >\varepsilon\right] >\varepsilon.
\end{equation}
Let us define $\widetilde {\eta}^n \triangleq \frac{\eta^n}{L^{\delta^n}}$, $n\geq 1$, and $\bar y\triangleq \sup\limits_{n\geq 1}y^n$. As $\left(\widetilde {\eta}^n \right)_{n\in\mathbb N}\subset \mathcal D(\bar y, 0)$ and $\left(L^{\delta^n}\right)_{n\in\mathbb N}$ converges to $1$ in probability (therefore, in particular $\left(L^{\delta^n}\right)_{n\in\mathbb N}$ is bounded in $\mathbf L^0$), by possibly passing to a smaller $\varepsilon$, we may assume that 
\begin{equation}\nonumber
\limsup\limits_{n\to \infty} \mathbb P\left[\left| \eta^n - \eta\right| >\varepsilon,~\left| \widetilde {\eta}^n - \eta \right|L^{\delta^n}\leq \tfrac{1}{\varepsilon}\right] >\varepsilon.
\end{equation}
Let us define 
\begin{displaymath}
h^n \triangleq \frac{1}{2}\left( \widetilde {\eta}^n + \eta\right)L^{\delta^n}
=\frac{1}{2}\left(  \eta^n + \eta L^{\delta^n}\right)\in \mathcal D\left(\tfrac{y_n + y}{2}, \delta^n\right),\quad n\geq 1.
\end{displaymath}
From convexity of $V$, we have
\begin{equation}\nonumber
V(h^n) \leq \frac{1}{2} \left(V(\eta^n) + V\left(\eta L^{\delta^n}\right) \right),
\end{equation}
and from the strict convexity of $V$, we deduce the existence of a positive constant $\varepsilon_0$, such that 
\begin{equation}\nonumber
\limsup\limits_{n\to\infty}\mathbb P\left[ V(h^n) \leq \frac{1}{2} \left(V(\eta^n) + V\left(\eta L^{\delta^n}\right) \right) - \varepsilon_0\right]>\varepsilon_0.
\end{equation}
Therefore, using Lemma \ref{131}, we obtain
\begin{equation}\label{132}
\begin{array}{rcl}
\limsup\limits_{n\to\infty}\mathbb E\left[ V(h^n)\right] &\leq &
\tfrac{1}{2} \limsup\limits_{n\to\infty}\mathbb E\left[V\left( \eta ^n\right) \right] + \tfrac{1}{2} \limsup\limits_{n\to\infty}\mathbb E\left[V\left( \eta L^{\delta^n}\right) \right] - \varepsilon_0^2\\
&=&\tfrac{1}{2}\limsup\limits_{n\to\infty}v(y^n, \delta^n) + \tfrac{1}{2} v(y,0) - \varepsilon_0^2 \\
&=& v(y,0) - \varepsilon_0^2,\\
\end{array}
\end{equation}
where in the last equality we have also used continuity of $v$ at $(y,0)$, which follows from Theorem \ref{byproduct}.
On the other hand, as $h^n\in\mathcal D\left(\tfrac{y_n + y}{2}, \delta^n\right)$, $n\geq 1$, we get 
\begin{equation}\label{133}
 \limsup\limits_{n\to\infty}v\left( \tfrac{y_n + y}{2}, \delta^n\right)\leq \limsup\limits_{n\to\infty}\mathbb E\left[ V(h^n)\right].
\end{equation}
Combining \eqref{132} and \eqref{133} and using continuity of $v$ at $(y,0)$ again, we get
\begin{displaymath}
v(y,0) = \limsup\limits_{n\to\infty}v\left( \tfrac{y_n + y}{2}, \delta^n\right)\leq \limsup\limits_{n\to\infty}\mathbb E\left[ V(h^n)\right] \leq v(y,0) - \varepsilon_0^2,
\end{displaymath}
which is a contradiction as $\varepsilon_0 \neq 0$. Thus, $(\eta^n)_{n\in\mathbb N}$ converges to $\eta$ in probability. In turn, this and continuity of $v$ at $(y,0)$ imply the other assertion of the lemma.
\end{proof}
\begin{proof}[Proof of Theorem \ref{134}] We will only prove \eqref{1252}, as \eqref{1251} can be shown similarly. In view of Theorem \ref{byproduct}, without loss of generality we will assume that for every $n\in\mathbb N$, $u(\cdot, \delta^n)$ and $v(\cdot, \delta^n)$ are finite-valued functions. The rest of the proof goes along the lines of the proof of Theorem 2 in \cite{KS2006}. 
Let $(y^n, \delta^n)_{n\in\mathbb N}$ be a sequence, which converges to $(y,0)$, where $y = u_x(x,0)>0$. Let $\widehat \eta^n = \widehat \eta(y^n, \delta^n)$, $n\in\mathbb N$, denote the corresponding dual optimizers and set
\begin{displaymath}\begin{array}{rclc}
\phi_1 &\triangleq& \frac{1}{2}\min\left(\widehat \eta(y, 0), \inf\limits_{n\in\mathbb N} \widehat \eta^n\right)>0,&\mathbb P-a.s.\\
\phi_2 &\triangleq& 2\max\left(\widehat \eta(y, 0), \sup\limits_{n\in\mathbb N} \widehat \eta^n\right)<\infty,&\mathbb P-a.s.\\
\theta &\triangleq& \inf\limits_{\phi_1\leq t\leq \phi_2}V''(t).&
\end{array}\end{displaymath}
Note that the construction of $\phi_1$ and $\phi_2$ implies that $\theta >0$ , $\mathbb P-a.s.$. Let us also fix $\beta^0$ and $\beta^1$ in $\mathcal B^\infty(y,0)$ and define
\begin{displaymath}
\eta^n \triangleq \frac{\widehat \eta(y,0)}{y}\left(y + \Delta y^n (\beta^0 + 1) + \delta^n \beta^1  \right)L^{\delta^n}\in\mathcal D(y^n, \delta^n),\quad n\in\mathbb N,
\end{displaymath}
where $\Delta y^n \triangleq y^n - y$. 
As $\beta^0$ and $\beta^1$ are bounded, without loss of generality we will assume that 
\begin{displaymath}
\frac{1}{2} \widehat \eta(y,0) \leq \eta^n \leq 2 \widehat \eta(y,0),\quad n\in\mathbb N,
\end{displaymath}
which implies that $$\phi_1 \leq\eta^n\leq \phi_2.$$
Using the definition of $\theta$, we get
\begin{equation}\label{141}
V\left(\eta^n\right) - V\left( \widehat \eta^n\right) \geq V'(\widehat \eta^n) \left(\eta^n - \widehat \eta^n\right) + \theta \left(\eta^n - \widehat \eta^n \right)^2.
\end{equation}
By \cite[Theorem 3.2]{Mostovyi2015}, $-V'(\widehat \eta^n) = \widehat \xi(x^n, \delta^n)$ is the optimal solution to \eqref{perturbedAbstractPrimal} at $x^n = -v_y(y^n, \delta^n)$, such that 
\begin{displaymath}
\mathbb E\left[\widehat \xi(x^n, \delta^n) \widehat \eta^n \right] = x^ny^n.
\end{displaymath} 
Moreover, the bipolar construction of the sets $\mathcal C(x^n, \delta^n)$ and $\mathcal D(y^n, \delta^n)$ implies that 
\begin{displaymath}
\mathbb E\left[\widehat \xi(x^n, \delta^n) \eta^n \right] \leq x^ny^n.
\end{displaymath}
Therefore, we obtain
\begin{displaymath}
\mathbb E\left[ V'(\widehat \eta^n)\left(\eta^n - \widehat \eta^n\right)\right]\geq 0.
\end{displaymath}
Combining this with \eqref{141}, we get
\begin{equation}\label{142}
\mathbb E\left[ \theta \left(\eta^n - \widehat \eta^n\right)^2\right] \leq \mathbb E\left[V(\eta^n) \right] - v(y^n, \delta^n).
\end{equation}
From \textcolor{black}{Lemma \ref{157}}, we deduce
\begin{displaymath}\begin{array}{rcl}
\mathbb E\left[V(\eta^n) \right] &=& \mathbb E\left[V(\eta^n) \right] \\
&=& v(y,0) -x\Delta y^n + v_\delta(y,0)\delta^n + \tfrac{1}{2}(\Delta y^n\quad \delta^n) H_{\bar w}\begin{pmatrix}\Delta y^n\\ \delta^n\end{pmatrix}  + o((\Delta y^n)^2 + (\delta^n)^2).
\end{array}\end{displaymath}
Combining this with \eqref{142} and using the expansion for $v$ from Theorem \ref{seconOrderThmA}, we obtain
\begin{equation}\label{144}
\limsup\limits_{n\to\infty}\frac{1}{(\Delta y^n)^2 + (\delta^n)^2 }\left(\mathbb E\left[\eta^n\right] - v(y^n, \delta^n) \right)\leq \frac{1}{2}\Vert H_{\bar w} - H_v(y,0)\Vert,
\end{equation}
where for a vector $a=\begin{pmatrix}a_1\\ a_2\end{pmatrix}$ and a two-by-two matrix $A$, we define their norms as  $$\Vert a\Vert \triangleq \sqrt{a_1^2 + a_2^2} \quad and \quad \Vert A \Vert \triangleq \sup\limits_{a\in\mathbb R^2}\frac{\Vert Aa\Vert }{\Vert a\Vert}.$$ In view of Lemma \ref{157} (by the choice of $\beta^0$ and $\beta^1$), we can make the right-hand side of \eqref{144} arbitrarily small. Combining this with \eqref{142},  we deduce that 
\begin{displaymath}
\limsup\limits_{n\to\infty}\frac{1}{(\Delta y^n)^2 + (\delta^n)^2 }\mathbb E\left[ \theta \left(\eta^n - \widehat \eta^n\right)^2\right]
\end{displaymath}
can also be made arbitrarily small.
 As $\theta >0$, $\mathbb P-a.s.$, the assertion of the theorem follows.
\end{proof}


\section{Proofs of Theorems \ref{mainThm1}, \ref{mainThm2}, \ref{162}, \ref{161}, and \ref{thmCorOptimizer}}\label{secProofC}

In order to link abstract theorems to their concrete counterparts, we will have to establish some structural properties of the perturbed primal and dual admissible sets first.
\section*{Characterization of primal and dual admissible sets}
The following lemma gives a useful characterization of the primal and dual admissible sets after perturbations.
\begin{lem}\label{keyConcreteCharLem}
Under Assumption \eqref{NUPBR}, 
for every $\delta \in\mathbb R$, we have
\begin{displaymath}\begin{array}{rcl}
\mathcal Y(1,\delta) &=& \mathcal Y(1,0)\mathcal E\left(-\delta\nu\cdot S^0\right),\\
\mathcal X(1,\delta) &=& \mathcal X(1,0)\frac{1}{ \mathcal E\left(-\delta\nu\cdot S^0\right)}.\\
\end{array}\end{displaymath}
\end{lem}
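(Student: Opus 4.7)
The plan is to verify the two identities by leveraging the continuity of $M$ (hence of $S^0$ and $S^\delta$) together with direct stochastic calculus, and then to exploit the polarity in the definition of $\mathcal Y$. I begin by recording the ``inverse'' identity: since $S^0$ is continuous, $E^\delta \triangleq \mathcal E(-\delta\nu\cdot S^0)$ is continuous and strictly positive, and the formula $\mathcal E(N)\mathcal E(-N+\langle N\rangle)=1$ for continuous semimartingales yields
\begin{displaymath}
\frac{1}{E^\delta} \;=\; \mathcal E\bigl(\delta\nu\cdot S^0 + \delta^2\nu^2\cdot\langle M\rangle\bigr) \;=\; \mathcal E\bigl(\delta\nu\cdot S^\delta\bigr),
\end{displaymath}
where I use that $S^\delta - S^0 = \delta\nu\cdot\langle M\rangle$. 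Thus $E^\delta\in\mathcal X(1,0)$ and $1/E^\delta\in\mathcal X(1,\delta)$, so both processes are strictly positive wealth processes in the respective models.

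Next, I would establish the $\mathcal X$-identity via integration by parts. Take $X=1+H\cdot S^\delta\in\mathcal X(1,\delta)$. Since $S^0$ and $S^\delta$ share the continuous local martingale part $M$, one has $\langle S^\delta, S^0\rangle = \langle M\rangle$, and Itô's product rule (with $dE^\delta = -E^\delta\delta\nu\,dS^0$) gives
\begin{displaymath}
d(XE^\delta) = E^\delta H\,dS^\delta - XE^\delta\delta\nu\,dS^0 - E^\delta H\delta\nu\,d\langle M\rangle = E^\delta(H-X\delta\nu)\,dS^0,
\end{displaymath}
where I substituted $dS^\delta = dS^0+\delta\nu\,d\langle M\rangle$ to cancel the $d\langle M\rangle$ terms. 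The result is nonnegative, starts at $1$, and is a stochastic integral with respect to $S^0$, hence $XE^\delta\in\mathcal X(1,0)$. Symmetrically, for $X' = 1+H'\cdot S^0\in\mathcal X(1,0)$, an analogous computation (now using $dF^\delta = F^\delta\delta\nu\,dS^\delta$ for $F^\delta\triangleq 1/E^\delta$) yields $d(X'F^\delta) = F^\delta(H'+X'\delta\nu)\,dS^\delta$, placing $X'F^\delta$ in $\mathcal X(1,\delta)$. Combining both inclusions gives $\mathcal X(1,\delta) = \mathcal X(1,0)/\mathcal E(-\delta\nu\cdot S^0)$.

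Finally, I would derive the $\mathcal Y$-identity from the definition \eqref{dualDomain} without any further Itô calculus. If $Y\in\mathcal Y(1,0)$, then $YE^\delta\geq 0$ with $(YE^\delta)_0 = 1$; for any $X^\delta\in\mathcal X(1,\delta)$ the previous step gives $X^\delta E^\delta\in\mathcal X(1,0)$, so
\begin{displaymath}
(YE^\delta)X^\delta = Y(X^\delta E^\delta)
\end{displaymath}
is a supermartingale by the defining property of $\mathcal Y(1,0)$. Specializing to $X^\delta\equiv 1\in\mathcal X(1,\delta)$ yields the supermartingale property of $YE^\delta$ itself, so $YE^\delta\in\mathcal Y(1,\delta)$. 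The reverse inclusion is obtained identically after multiplying by $F^\delta$.

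The main obstacle is just the bookkeeping in the Itô product rule: one must verify that the quadratic-variation term generated by $\langle H\cdot S^\delta, -E^\delta\delta\nu\cdot S^0\rangle = -HE^\delta\delta\nu\cdot\langle M\rangle$ exactly matches the drift term coming from $dS^\delta - dS^0$, so that the result is a pure $S^0$-integral (and symmetrically in the other direction). Once this algebraic identity is nailed down, everything else reduces to the polarity built into \eqref{dualDomain}.
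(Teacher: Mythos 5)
Your proof is correct and follows essentially the same route as the paper: multiply by $\mathcal E\left(-\delta\nu\cdot S^0\right)$ (whose reciprocal is $\mathcal E\left(\delta\nu\cdot S^\delta\right)$) to convert $S^\delta$-integrals into $S^0$-integrals and vice versa, and then transfer the dual sets through the polarity in the definition of $\mathcal Y$. The paper only sketches this argument (parametrizing wealth processes as stochastic exponentials and calling the remainder straightforward), while your It\^o product-rule computation on a general $X = 1 + H\cdot S^\delta$ and the explicit supermartingale check for the dual sets supply precisely the details that were skipped.
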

\begin{proof}
Let us fix $\delta \in\mathbb R$. Then, for an arbitrary predictable and $S^\delta$-integrable process $\pi$, let $X^\delta \triangleq \mathcal E\left(\pi\cdot S^\delta\right)$. Then $X^\delta \in\mathcal X(1, \delta)$. Let us consider $X^0 \triangleq X^\delta \mathcal E\left(-\delta\nu\cdot S^0\right)$. One can see that $X^0\in\mathcal X(1,0)$. The remainder of the proof is straightforward, it is therefore skipped.
\end{proof}
\begin{proof}[Proof of Theorem \ref{mainThm1}]
Condition \eqref{NUPBR} implies that the respective closures of the convex solid hulls of $\left\{X_T:~X\in\mathcal X(1,0)\right\}$ and $\left\{Y_T:~Y\in\mathcal Y(1,0)\right\}$ satisfy (abstract) Assumption \ref{bipolar}. In view of Lemma \ref{keyConcreteCharLem}, we have
$$\left\{\frac{X_T}{L^\delta}:~X\in\mathcal X(1,0)\right\} = \left\{{X_T}:~X\in\mathcal X(1,\delta)\right\},$$ likewise $$\left\{Y_TL^\delta:~Y\in\mathcal Y(1,0)\right\} = \left\{Y_T:~Y\in\mathcal Y(1,\delta)\right\},\quad \delta\in\mathbb R.$$ Therefore, the respective closures of convex solid hulls of
$$\left\{{X_T}:~X\in\mathcal X(1,\delta)\right\}\quad
  and \quad
  \left\{Y_T:~Y\in\mathcal Y(1,\delta)\right\}$$
  satisfy abstract condition \eqref{bipolarL}.  The relationship between (abstract) Assumption \ref{integrabilityAs} and  Assumption \ref{integrabilityAssumption} is apparent. It remains to show that the sets $\mathcal M^2(x)$ and $\mathcal N^2(x)$ satisfy (abstract) Assumption \ref{complimentarity}. However this follows from continuity of $S^0$ and \cite[Lemma 6]{KS2006}. Therefore, the assertions of Theorem \ref{mainThm1} follow from (abstract) Theorem \ref{byproduct}. 
\end{proof}
\begin{proof}[Proof of Theorem \ref{mainThm2}]
As in the proof of Theorem \ref{mainThm2}, the assertions of Theorem \ref{mainThm2} follow from (abstract) Theorem \ref{seconOrderThmA}.
\end{proof}
\begin{proof}[Proof of Theorem \ref{162}]
Similarly to the proof of Theorem \ref{mainThm2}, the assertions of Theorem \ref{162} follow from (abstract) Theorem \ref{closingGapThm}.
\end{proof}
\begin{proof}[Proof of Theorem \ref{161}]
As above, the affirmations of this theorem follow from (abstract) Theorem \ref{134}.
\end{proof}
For the proof of Theorem \ref{thmCorOptimizer}, we will need the following technical lemma. First, for $(\delta, \Delta x, \varepsilon)\in \mathbb R\times(-x,\infty)\times (0,\infty)$, let us set 
\begin{equation}\label{5191}
f(\delta, \Delta x, \varepsilon)\triangleq \frac{u(x,0) + (\Delta x\quad \delta) \nabla u(x,0) + \tfrac{1}{2}(\Delta x\quad \delta) 
H_u(x,0)
\begin{pmatrix}
\Delta x\\
\delta\\
\end{pmatrix} - \mathbb E\left[U\left(X^{\Delta x, \delta,  \varepsilon}_T\right) \right]}{\Delta x^2 + \delta^2},
\end{equation}
where 
$\nabla u(x,0)$, $H_u(x,0)$, and  $X^{\Delta x, \delta, \varepsilon}$'s are defined in  \eqref{gradientC}, \eqref{12136C}, and \eqref{4232}, respectively.
\begin{lem}\label{51913}
Assume that $x>0$ is fixed and the assumptions of Theorem \ref{mainThm1} hold.  Then, for $f$ defined in \eqref{5191}, there exists a monotone function $g$, such that 
\begin{equation}\label{51911}
g(\varepsilon)\geq\lim\limits_{|\Delta x|+|\delta|\to 0}f(\delta, \Delta x, \varepsilon),\quad \varepsilon >0,
\end{equation}
and \begin{equation}\label{51912}
\lim\limits_{\varepsilon \to 0}g(\varepsilon) = 0.
\end{equation}

\end{lem}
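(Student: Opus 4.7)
The plan is to show, for each fixed $\varepsilon>0$, that $\mathbb E[U(X^{\Delta x, \delta, \varepsilon}_T)]$ admits a second-order expansion in $(\Delta x, \delta)$ at $(0,0)$ with a ``Hessian'' $H_u^\varepsilon$ that differs from $H_u(x,0)$ only through replacing $M^0_T(x,0), M^1_T(x,0)$ by their stopped versions $M^0_{T\wedge\sigma_\varepsilon}(x,0), M^1_{T\wedge\tau_\varepsilon}(x,0)$, and then to argue $H_u^\varepsilon \to H_u(x,0)$ as $\varepsilon\to 0$.

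First, using continuity of $M$ together with the Dol\'eans-Dade identities $\mathcal{E}(A)\mathcal{E}(B) = \mathcal{E}(A + B + [A,B])$ and $\mathcal{E}(A + C) = \mathcal{E}(A)\exp(C)$ for $C$ of finite variation, a direct calculation exploiting $S^\delta = S^0 + \delta \nu \cdot \langle M\rangle$, $\widehat X(x,0) = x\mathcal E(\widehat\pi(x,0)\cdot S^0)$, and $M^R = S^0 - \widehat\pi(x,0)\cdot \langle M\rangle$ yields
\begin{equation}\nonumber
X^{\Delta x, \delta, \varepsilon}_T\,L^\delta = \tfrac{x + \Delta x}{x}\,\widehat X_T(x,0)\,\mathcal E\bigl((\Delta x\,\gamma^{0,\varepsilon} + \delta\,\gamma^{1,\varepsilon})\cdot M^R\bigr)_T.
\end{equation}
By construction, $(\gamma^{0,\varepsilon}\cdot M^R)_T = M^0_{T\wedge\sigma_\varepsilon}(x,0)/x$ and $(\gamma^{1,\varepsilon}\cdot M^R)_T = M^1_{T\wedge\tau_\varepsilon}(x,0)/x$, and the stopping times $\sigma_\varepsilon, \tau_\varepsilon$ ensure that both the stopped martingales and their quadratic variations are uniformly bounded by $x/\varepsilon$, while $M^0_{\cdot\wedge\sigma_\varepsilon}, M^1_{\cdot\wedge\tau_\varepsilon}$ remain bounded true martingales under $\mathbb R(x,0)$.

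Second, for fixed $\varepsilon>0$ I would carry out a pathwise second-order Taylor expansion in $(\Delta x, \delta)$ of $U(X^{\Delta x, \delta, \varepsilon}_T)$ and integrate, paralleling the proof of Lemma \ref{12132}. The $\varepsilon$-dependent boundedness above, Corollary \ref{rraCor}, Assumption \ref{integrabilityAssumption}, and H\"older's inequality provide the domination required to interchange second differentiation with the expectation. The key simplification is that the extra contributions produced by the stochastic-exponential structure of $X^{\Delta x, \delta, \varepsilon}_T$ (as opposed to the purely linear $\psi(s,t)$ of Lemma \ref{12132}) reduce, after passing to the $\mathbb R(x,0)$-measure via the density $U'(\widehat X_T)\widehat X_T/(xy)$, to expressions of the form $\mathbb E^{\mathbb R(x,0)}[M^i_{T\wedge\cdot}]$ and $\mathbb E^{\mathbb R(x,0)}[(M^i_{T\wedge\cdot})^2 - \langle M^i_{\cdot\wedge\cdot}\rangle_T]$ for $i=0,1$, all of which vanish by the bounded-martingale property. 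The outcome is
\begin{equation}\nonumber
\mathbb E\bigl[U(X^{\Delta x, \delta, \varepsilon}_T)\bigr] = u(x,0) + (\Delta x\quad \delta)\nabla u(x,0) + \tfrac{1}{2}(\Delta x\quad \delta)H_u^\varepsilon\begin{pmatrix}\Delta x\\ \delta\end{pmatrix} + o(\Delta x^2 + \delta^2),
\end{equation}
where $H_u^\varepsilon$ has the form \eqref{12136C} with $M^0_T(x,0), M^1_T(x,0)$ replaced by $M^0_{T\wedge\sigma_\varepsilon}(x,0), M^1_{T\wedge\tau_\varepsilon}(x,0)$ throughout \eqref{axxC}, \eqref{addC}, and \eqref{axdC}.

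Finally, since $\sigma_\varepsilon, \tau_\varepsilon \uparrow T$ as $\varepsilon\to 0$ and $M^0(x,0), M^1(x,0) \in \mathbf H^2_0(\mathbb R(x,0))$, the stopped terminal values converge to $M^i_T(x,0)$ in $\mathbf L^2(\mathbb R(x,0))$; combined with the bound $A\leq c_2$ from Assumption \ref{rra} and Assumption \ref{integrabilityAssumption} (via H\"older's inequality for the $xF$-terms in \eqref{addC} and \eqref{axdC}), this yields entrywise convergence $H_u^\varepsilon \to H_u(x,0)$. Coupled with the expansion of $u(x+\Delta x,\delta)$ from Theorem \ref{mainThm2}, the numerator in \eqref{5191} becomes $\tfrac{1}{2}(\Delta x\quad \delta)(H_u(x,0) - H_u^\varepsilon)(\Delta x\quad \delta)^T + o(\Delta x^2 + \delta^2)$, so
\begin{equation}\nonumber
\limsup_{|\Delta x|+|\delta|\to 0} f(\delta, \Delta x, \varepsilon) \leq \tfrac{1}{2}\Vert H_u(x,0) - H_u^\varepsilon\Vert,
\end{equation}
with $\Vert \cdot \Vert$ the operator norm; setting $g(\varepsilon) := \sup_{0 < \varepsilon' \leq \varepsilon}\tfrac{1}{2}\Vert H_u(x,0) - H_u^{\varepsilon'}\Vert$ produces a non-decreasing $g$ with $g(\varepsilon)\to 0$ as $\varepsilon\to 0$, establishing \eqref{51911}--\eqref{51912}. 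The main obstacle will be the second step: verifying that the martingale-cancellation mechanism eliminates all the extra quadratic terms generated by the non-linear Dol\'eans-exponential dependence of $X^{\Delta x, \delta, \varepsilon}_T$ on $(\Delta x, \delta)$, while securing enough uniform integrability via Assumption \ref{integrabilityAssumption} to produce an expectation-level expansion with the requisite $o(\Delta x^2 + \delta^2)$ remainder.
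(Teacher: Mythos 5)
Your proposal is essentially the paper's own argument: the paper proves Lemma \ref{51913} by the same route, writing $X^{\Delta x,\delta,\varepsilon}_T=\widehat X_T(x,0)\psi(\Delta x,\delta)$ with $\psi$ the stochastic-exponential expression in \eqref{12131C}, performing a pathwise second-order Taylor expansion as in Lemma \ref{12132} with domination supplied by the boundedness of the stopped integrands together with Corollary \ref{rraCor} and Assumption \ref{integrabilityAssumption}, identifying the exact gradient $\nabla u(x,0)$ and an approximate Hessian $H^{\varepsilon}_u(x,0)$, and letting $H^{\varepsilon}_u(x,0)\to H_u(x,0)$ as $\varepsilon\to 0$ before choosing $g$ monotone. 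The cancellation you flag as the main obstacle does go through exactly as you describe (the stopped $\mathcal M^2(x,0)$-elements are bounded $\mathbb R(x,0)$-martingales, so the terms $\mathbb E^{\mathbb R(x,0)}[M^i_{\cdot\wedge\cdot}]$ and $\mathbb E^{\mathbb R(x,0)}[(M^i_{\cdot\wedge\cdot})^2-\langle M^i_{\cdot\wedge\cdot}\rangle_T]$ vanish), yielding precisely the stopped-plug-in form of $H^{\varepsilon}_u(x,0)$, so your write-up is a correct and in fact more explicit version of the paper's sketch.
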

\begin{proof} The proof goes along the lines of the proof of Lemma \ref{12132}. We only outline the main steps for brevity of exposition. 
For a fixed $\varepsilon >0$, let us define 
\begin{equation}\label{12131C}
\begin{array}{rcl}
\psi(\Delta x,\delta) &\triangleq& \frac{x+\Delta x}{x}\exp\left((\Delta x\gamma^{0,\varepsilon} + \delta\gamma^{1,\varepsilon})\cdot M^R_T - \tfrac{1}{2}(\Delta x\gamma^{0,\varepsilon} + \delta\gamma^{1,\varepsilon})^2\cdot \langle M\rangle_T\right)\frac{1}{L^\delta},\\
w(\Delta x,\delta) &\triangleq& \mathbb E\left[U(\widehat X_T(x,0)\psi(\Delta x,\delta))\right],\quad (\Delta x,\delta)\in\mathbb R^2,\\
\end{array}
\end{equation}
where $M^R$ is defined in \eqref{4231}. 
Let us first fix $\varepsilon' > 0$, 
then fix $(\Delta x,\delta)\in B_{\varepsilon'}(0,0)$, and set
\begin{displaymath}
\widetilde \psi (z) \triangleq \psi(z\Delta x, z\delta),\quad z\in (-1,1).
\end{displaymath}
By direct computations, we get
\begin{equation}\label{51916}
\widetilde \psi'(z) = \psi_{\Delta x}(z\Delta x, z\delta)\Delta x + \psi_\delta(z\Delta x, z\delta)\delta,
\end{equation}
where 
 \begin{displaymath}\begin{array}{c}
\psi_{\Delta x}(\Delta x, \delta) = \psi(\Delta x, \delta)\left( \frac{1}{x + \Delta x} + (\Delta x\gamma^{0,\varepsilon}\cdot M^R_T - ((\Delta x\gamma^{0,\varepsilon} + \delta\gamma^{1,\varepsilon})\gamma^{0,\varepsilon})\cdot \langle M\rangle_T)\right),\\
\psi_\delta(\Delta x, \delta) = \psi(\Delta x, \delta)\left( \gamma^{1,\varepsilon}\cdot M^R_T - ((\Delta x\gamma^{0,\varepsilon} + \delta\gamma^{1,\varepsilon})\gamma^{1,\varepsilon})\cdot \langle M\rangle_T + F +\delta G\right),
\end{array}
\end{displaymath}
where $F$ and $G$ are defined in \eqref{defFG}. Similarly, we obtain
\begin{displaymath}
\widetilde \psi''(z) = \psi_{\Delta x\Delta x}(z\Delta x, z\delta)\Delta x^2 + 2\psi_{\Delta x \delta}(z\Delta x, z\delta)\Delta x\delta + \psi_{\delta \delta}(z\Delta x, z\delta )\delta^2,
\end{displaymath}
where
\begin{equation}\nonumber\begin{array}{rcl}
\psi_{\Delta x\Delta x}(\Delta x, \delta) &=&\psi(\Delta x, \delta)\left( \frac{1}{x + \Delta x} + (\Delta x\gamma^{0,\varepsilon}\cdot M^R_T - ((\Delta x\gamma^{0,\varepsilon} + \delta\gamma^{1,\varepsilon})\gamma^{0,\varepsilon})\cdot \langle M\rangle_T)\right)^2\\
&& + \psi(\Delta x, \delta)\left(\gamma^{0,\varepsilon}\cdot M^R_T+(\gamma^{0,\varepsilon})^2\cdot \langle M\rangle_T -\frac{1}{(x + \Delta x)^2}\right),\\
\psi_{\Delta x \delta}(\Delta x,\delta) &=& \psi(\Delta x, \delta)\left( \frac{1}{x + \Delta x} + (\Delta x\gamma^{0,\varepsilon}\cdot M^R_T - ((\Delta x\gamma^{0,\varepsilon} + \delta\gamma^{1,\varepsilon})\gamma^{0,\varepsilon})\cdot \langle M\rangle_T)\right)\times\\
&&\times \left( \gamma^{1,\varepsilon}\cdot M^R_T - ((\Delta x\gamma^{0,\varepsilon} + \delta\gamma^{1,\varepsilon})\gamma^{1,\varepsilon})\cdot \langle M\rangle_T + F +\delta G\right)\\
&&+ \psi(\Delta x, \delta) \left((\gamma^{1,\varepsilon}\gamma^{0,\varepsilon})\cdot \langle M\rangle_T\right),\\
\psi_{\delta \delta}(\Delta x, \delta) &=&\psi(\Delta x, \delta)\left( \gamma^{1,\varepsilon}\cdot M^R_T - ((\Delta x\gamma^{0,\varepsilon} + \delta\gamma^{1,\varepsilon})\gamma^{1,\varepsilon})\cdot \langle M\rangle_T + F +\delta G\right)^2 \\
&& + \psi(\Delta x, \delta)\left((\gamma^{1,\varepsilon})^2\cdot  \langle M\rangle_T  + G \right).\\
\end{array}\end{equation}
Setting $W(z) \triangleq U(\widehat X_T(x,0)\widetilde \psi(z))$, $z\in(-1, 1)$, by direct computations, we get
\begin{equation}\nonumber
\begin{array}{rcl}
W'(z) &=& U'(\widehat X_T(x,0) \widetilde \psi(z))\widehat X_T(x,0) \widetilde \psi'(z), \\
W''(z) &=& U''(\widehat X_T(x,0)\widetilde \psi(z))\left(\widehat X_T(x,0) \widetilde \psi'(z)\right)^2 + U'(\widehat X_T(x,0)\widetilde \psi(z))\widehat X_T(x,0)\widetilde \psi''(z).\\
\end{array}
\end{equation}
As in Lemma \ref{12132}, from boundedness of $\gamma^{0,\varepsilon}\cdot M^R_T$, $\gamma^{1,\varepsilon}\cdot M^R_T$, $(\gamma^{0,\varepsilon})^2\cdot \langle M\rangle_T$, and $(\gamma^{1,\varepsilon})^2\cdot \langle M\rangle_T$, via Corollary \ref{rraCor} and Assumption \ref{integrabilityAs}, one can show that
\begin{equation}\nonumber
\left|\frac{W(z_1) - W(z_2)}{z_1 -z_2}\right| + \left|\frac{W'(z_1) - W'(z_2)}{z_1 -z_2}\right| \leq \eta,
\end{equation}
for some random variable $\eta$, which depend on $\varepsilon'$ and which is integrable for a sufficiently small $\varepsilon'$.
By direct computations, the  derivatives of $W$ plugged inside the expectation lead to  the ``exact'' gradient $\nabla u(x,0)$  and the ``approximate'' 
Hessian $H^{\varepsilon}_u(x,0)$. This results in \eqref{51911}. Now, approximation by $\varepsilon \rightarrow 0$ leads to
$H^{\varepsilon} _u(x,0)\rightarrow H_u(x,0)$, and, therefore we obtain \eqref{51912}. Finally, one can choose $g$ to be monotone.
\end{proof}
\begin{proof}[Proof of Theorem \ref{thmCorOptimizer}]
First, for $f$ defined in \eqref{5191}, via Lemma \ref{51913}, we deduce the existence of a monotone function $g$, such that \eqref{51911} and \eqref{51912} hold.
Let us define
$$\phi(\varepsilon) \triangleq \left\{(\delta, \Delta x):~f(t\delta, t\Delta x, \varepsilon)\leq 2 g(\varepsilon),~~for~every~t\in[0,1]\right\},\quad \varepsilon>0,$$
$$r(\varepsilon) \triangleq \tfrac{1}{2}\sup\left\{r\leq \varepsilon:~B_r(0,0)\subseteq \phi(\varepsilon)\right\}, \quad \varepsilon>0.$$
Note that $r(\varepsilon)>0$ for every $\varepsilon>0$. With
$$\varepsilon(\delta, \Delta x) \triangleq \inf\left\{ \varepsilon:~ r(\varepsilon) \geq \sqrt{\Delta x^2 + \delta^2}\right\}, \quad (\delta, \Delta x)\in\mathbb R\times(-x,\infty),$$
we have
$$\lim\limits_{|\Delta x|+|\delta|\to 0}\frac{u(x + \Delta x, \delta) - \mathbb E\left[U\left(X^{\Delta x, \delta, \varepsilon(\delta, \Delta x)}_T\right) \right]}{\Delta x^2 + \delta^2} = 0.$$
\end{proof}

\section{Counterexample}\label{counterexamples}
In the following example we show that even when $0$-model is nice, but Assumption \ref{integrabilityAssumption} fails, we might have $$u(z,\delta) = v(z,\delta) = \infty\quad for~every~\delta \neq 0~and~z>0.$$
\begin{exa}\label{2142}
Consider the $0$-model, where 
$$T = 1,\quad M = B, \quad \lambda \equiv 1, \quad and\quad  U(x) = \frac{x^p}{p},\quad p\in(0,1).$$
Let assume that $B$ is a Brownian motion defined on a filtered probability space $\left(\Omega, \mathcal F, \mathbb P\right)$, where the filtration $\left(\mathcal F_t\right)_{t\in[0,T]}$ is generated by $B$. We recall that for the utility function $U(x) = \frac{x^p}{p}$, the convex conjugate is $V(y) = \frac{y^{-q}}{q}$, where $q = \frac{p}{1-p}$. Let $Z^0$ denote the martingale deflator for $S^0$.
The direct computations yield
\begin{displaymath}
\mathbb E\left[(Z^0_1)^{-q} \right] = \exp\left( \tfrac{1}{2}q(q+1)\right)\in\mathbb R.
\end{displaymath}
Therefore by \cite{KS2003}, the standard conclusions of the utility maximization theory hold. The primal and dual optimizers are 
$$\widehat X_1(x,0)=x\exp \left((q+1)B_1 + \tfrac{1}{2}(1-q^2) \right)\quad and\quad \widehat Y_1(y,0) = y\exp\left( -B_1 - \tfrac{1}{2}\right).$$

Now, let us consider a process $\nu$ such that 
\begin{equation}\label{2141}\nu\cdot B_1 = B_1^3,\quad \mathbb P-a.s.
\end{equation}
Let us denote $I_t \triangleq t$, $t\in[0,1]$.
As  $$\frac{d\mathbb R(x,0)}{d\mathbb P} = \exp\left(-\frac{q(q+1)}{2}\right)\exp\left(qB_1 + q\tfrac{1}{2} \right) = \exp\left(qB_1 - \frac{q^2}{2} \right),\quad x>0,$$ with notations \eqref{defFG}, for every $c>0$, we get
\begin{equation}\label{222}
\begin{array}{rcl}
\mathbb E^{\mathbb R(x,0)}\left[ \exp\left(c\left(\left|F \right| +  G\right)\right)\right] 
&=& \mathbb E\left[ \exp\left(qB_1 - \frac{q^2}{2} \right) \exp\left(c\left| \nu\cdot B_1 + \nu\cdot I_1\right| + c\nu^2\cdot I_1\right)\right]\\
&=& \mathbb E\left[\exp\left(qB_1 - \frac{q^2}{2} + c\left| B^3_1 + \nu\cdot I_1\right| + c\nu^2\cdot I_1\right)\right]\\
&\geq& \mathbb E\left[\exp\left(qB_1 - \frac{q^2}{2} + c |B^3_1|  -c |\nu|\cdot I_1 + c\nu^2\cdot I_1\right)\right]\\
&\geq& \exp\left( - \frac{q^2}{2}  - \tfrac{c}{4}\right)\mathbb E\left[\exp\left(qB_1 + c |B^3_1|   +c\left(|\nu|-\tfrac{1}{2}\right)^2\cdot I_1\right)\right]\\
&\geq& \exp\left( - \frac{q^2}{2} - \tfrac{c}{4}\right)\mathbb E\left[\exp\left(qB_1 + c |B^3_1|\right)\right]\\
&=& \exp\left( - \frac{q^2}{2} - \tfrac{c}{4}\right)\frac{1}{\sqrt{2\pi}}\int_{\mathbb R}\exp\left(qy + c |y^3| - y^2/2\right)dy = \infty,\\
\end{array}
\end{equation}
i.e. Assumption \ref{integrabilityAssumption} does not hold. 

For every $\delta\in\mathbb R$, we can express the local martingale deflator $Z^\delta$ as follows
$$Z^\delta_t = \exp\left(-(\lambda + \delta\nu)\cdot B_t - \tfrac{1}{2}(\lambda +\delta\nu)^2 \cdot I_t\right),\quad t\in[0,1].$$
For $p\in(0,1)$, as $q>p>0$, we have
\begin{equation}\nonumber
\mathbb E\left[(Z^\delta_1)^{-q} \right] =\mathbb E\left[\exp\left( q (\lambda  + \delta\nu)\cdot B_1 + \frac{q}{2}(\lambda + \delta\nu)^2\cdot I_1\right) \right]
\geq 
\mathbb E\left[\exp\left(  q (\lambda + \delta\nu)\cdot B_1 \right) \right].
\end{equation}
Therefore, using \eqref{2141}, we get 
\begin{displaymath}
\begin{array}{rcl}
\mathbb E\left[(Z^\delta_1)^{-q} \right]&\geq &
\mathbb E\left[\exp\left( q (\lambda  + \delta\nu)\cdot B_1 \right) \right] \\
&=& 
\mathbb E\left[\exp\left(  q (B_1 +  \delta B^3_1 ) \right) \right] \\
&=& 
\int_{\mathbb R}\frac{1}{\sqrt{2\pi}} \exp\left( -\frac{y^2}{2} + q(y + \delta y^3)\right)dy = \infty,\\
\end{array}
\end{displaymath}
for every $\delta \neq 0$. Consequently, $v(1, \delta) = \infty$ for every $\delta \neq 0$. Moreover, one can find a constant $D>0$, such that 
\begin{equation}\nonumber
\begin{array}{rcl}
u(1,\delta) & \geq & \mathbb E\left[ U\left(\widehat X^0_1(1,0) \exp\left( \delta F + \tfrac{1}{2}\delta^2G\right)\right)\right] \\
& = & D\mathbb E\left[\exp \left(qB_1 +\tfrac{q}{2}\right) \exp\left( p\delta \nu\cdot B_1 + p\delta \nu\cdot I_1 + \tfrac{p}{2}\delta^2\nu^2\cdot I_1\right)\right] \\
& = & D\mathbb E\left[\exp \left(qB_1 + p\delta B^3_1  +\tfrac{q-p}{2} +  \tfrac{p}{2} \left(\delta\nu + 1\right)^2\cdot I_1\right)\right].\\
\end{array}
\end{equation}
As $(q-p)$ and $\tfrac{p}{2} \left(\delta\nu + 1\right)^2\cdot I_1$ are nonnegative, we get
\begin{equation}\nonumber
\begin{array}{rcl}
u(1,\delta) & \geq &D\mathbb E\left[\exp \left(qB_1 + p\delta B^3_1\right)\right] \\
& = &D\frac{1}{\sqrt{2\pi}}\int_{\mathbb R}
\exp\left(qy + p\delta y^3 - y^2/2\right)dy = \infty,\\
\end{array}
\end{equation}
for every $\delta \neq  0$. 
\end{exa}

\section{Relationship to the risk-tolerance wealth process}\label{secRiskTol}
Following \cite{KS2006b}, we recall that for an initial wealth $x>0$ and $\delta\in\mathbb R$, the {\it risk-tolerance wealth process} is a maximal wealth process $R(x,\delta)$, such that 
\begin{equation}\label{1251riskTol}
R_T(x,\delta) = -\frac{U'(\widehat X_T(x,\delta))}{U''(\widehat X_T(x,\delta))},
\end{equation}
i.e. it is a replication process for the random payoff given by the right-hand side of \eqref{1251riskTol}. In general the risk-tolerance wealth process $R(x, \delta)$ may not exist. It is shown in \cite{KS2006b} that the existence of the risk-tolerance wealth process is closely related to some important properties of the marginal utility-based prices and to the validity of the second-order expansions of the value functions under the presence of random endowment. Below we establish a relationship between the existence of $R(x, 0)$ and the second-order expansions of the value functions in the present context.

The following theorem is a version of \cite[Theorem 4]{KS2006b}. Despite the fact that the assertions of \cite[Theorem 4]{KS2006b} are obtained under the existence of an equivalent martingale measure assumption in \cite{KS2006b}, the proof goes through also under condition \eqref{NUPBR}, no changes are needed. Therefore, the proof of the following theorem is not presented.
\begin{thm}\label{riskTolThmLocal}
Let $x>0$ be fixed, assume that \eqref{NUPBR}, \eqref{finCond}, and Assumption \ref{rra} hold, and denote $y = u_x(x,0)$. Then the following assertions are equivalent:
\begin{enumerate}
\item The risk-tolerance wealth process $R(x, 0)$ exists.
\item The value function $u$ admits the expansion \eqref{122121} at $(x,0)$ and 
$u_{xx}(x,0) = -\frac{y}{x}a(x,x)$ satisfies
\begin{equation}\nonumber
\frac{\left( u_x(x,0)\right)^2}{u_{xx}(x,0)} = \mathbb E\left[\frac{\left(U'(\widehat X_T(x, 0) \right)^2}{U''(\widehat X_T(x,0))} \right],
\end{equation}
\begin{equation}\nonumber
u_{xx}(x,0) = \mathbb E\left[U''(\widehat X_T(x, 0) \left(\frac{R_T(x,0)}{R_0(x,0)}\right)^2\right].
\end{equation}
\item The value function $v$ admits the expansion \eqref{122122} at $(y,0)$ and $v_{yy}(y,0)=\frac{x}{y}b(y,y)$ satisfies
\begin{equation}
y^2 v_{yy}(y,0) = \mathbb E\left[ \left(\widehat Y_T(y,0)\right)^2V''(\widehat Y_T(y,0)) \right] = xy \mathbb E^{\mathbb R(x,0)}\left[ B(\widehat Y_T(y,0))\right].
\end{equation}
\end{enumerate}
In addition, if these assertions are valid, then the initial value of $R(x)$ is given by 
\begin{equation}\label{1253}
R_0(x, 0) = -\frac{u_x(x,0)}{u_{xx}(x,0)} = \frac{x}{a(x,x)},
\end{equation}
the product $R(x,0)Y(y,0) = \left( R_t(x,0)Y_t(y,0)\right)_{t\in [0,T]}$ is a uniformly integrable martingale and 
\begin{equation}\label{1255}
\lim\limits_{\Delta x \to 0}\frac{\widehat X_T(x + \Delta x, 0) - \widehat X_T(x , 0)}{\Delta x} = \frac{R_T(x,0) }{R_0(x,0)},
\end{equation}
\begin{equation}\label{1256}
\lim\limits_{\Delta y \to 0}\frac{\widehat Y_T(y + \Delta y, 0) - \widehat Y_T(y, 0)}{\Delta y} = \frac{\widehat Y_T(y,0)}{y},
\end{equation}
where the limits in \eqref{1255} and \eqref{1256} take place in $\mathbb P$-probability.
\end{thm}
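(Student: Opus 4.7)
The plan is to adapt the argument of \cite[Theorem 4]{KS2006b} to the present NUPBR framework, using the abstract expansion and optimizer results developed in the preceding sections as substitutes for the NFLVR-based tools of the original paper. The central bridge between the risk-tolerance wealth process $R(x,0)$ and the auxiliary problem \eqref{axxC} is the identity
\[
-\frac{U'(\widehat X_T(x,0))}{U''(\widehat X_T(x,0))} = \frac{\widehat X_T(x,0)}{A(\widehat X_T(x,0))} = \widehat X_T(x,0)\, B(\widehat Y_T(y,0)),
\]
which expresses the target terminal value of $R$ directly in terms of the quantities controlling $M^0(x,0)$ and $N^0(y,0)$ through \eqref{122111C}.

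For $(1) \Rightarrow (2)$, I would use the assumption that $R(x,0)$ exists to observe that $R(x,0)/R_0(x,0) \in \mathcal X(1,0)$, and then pass to the num\'eraire $\widehat X(x,0)/x$ to obtain an element $M \in \mathcal M^2(x,0)$ with $1 + M_T = x R_T(x,0)/(R_0(x,0)\widehat X_T(x,0))$. A direct computation using the bridge identity gives $A(\widehat X_T(x,0))(1 + M_T) = x/R_0(x,0)$, a constant, which matched against the characterization $A(\widehat X_T(x,0))(1 + M^0_T(x,0)) = a(x,x)(1 + N^0_T(y,0))$ in \eqref{122111C} forces $M = M^0(x,0)$, $N^0(y,0) \equiv 0$, and $R_0(x,0) = x/a(x,x)$, which is precisely \eqref{1253}. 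The two explicit representations of $u_{xx}(x,0)$ then follow by plugging this $M^0$ back into \eqref{axxC} and using the bridge identity to rewrite the integrand. The equivalence $(2) \Leftrightarrow (3)$ is then immediate from $a(x,x)b(y,y) = 1$ combined with the density $\widehat X_T(x,0)\widehat Y_T(y,0)/(xy)$ of $\mathbb R(x,0)$, which transports the $U''$-expectation into a $B(\widehat Y_T(y,0))$-expectation under $\mathbb R(x,0)$.

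For the reverse direction, I would define the candidate $\widetilde R_T \triangleq R_0(1 + M^0_T(x,0))\widehat X_T(x,0)/x$ with $R_0 \triangleq x/a(x,x)$, where $M^0(x,0)$ is the minimizer of \eqref{axxC}. The prescribed representation of $u_{xx}(x,0)$ in (2), together with the identities \eqref{4141} and \eqref{122111C} from Theorem \ref{162}, forces $A(\widehat X_T(x,0))(1 + M^0_T(x,0))$ to be $\mathbb R(x,0)$-a.s.\ constant, which via the bridge identity gives $\widetilde R_T = -U'(\widehat X_T(x,0))/U''(\widehat X_T(x,0))$. The $\mathbb P$-martingale property of $\widehat X(x,0) M^0(x,0) \widehat Y(y,0)$ supplied by Theorem \ref{162} makes $\widetilde R\, \widehat Y(y,0)$ a uniformly integrable $\mathbb P$-martingale, whence $\widetilde R$ is the desired replicating wealth process and $\widetilde R = R(x,0)$. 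Finally, \eqref{1255} and \eqref{1256} follow from Theorem \ref{161} specialized to $\delta = 0$, where the $\delta$-corrections $M^1$, $N^1$ and $F$ drop out and the leading terms reduce to $R_T(x,0)/R_0(x,0)$ and $\widehat Y_T(y,0)/y$.

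The main obstacle, and the reason why \cite[Theorem 4]{KS2006b} cannot be transcribed verbatim, is ensuring that $R(x,0)\widehat Y(y,0)$ and $\widehat X(x,0)\widehat Y(y,0)$ are genuine $\mathbb P$-martingales rather than merely supermartingales under only NUPBR; in \cite{KS2006b} this is automatic through the existence of an equivalent martingale measure. In the present framework this martingale property is precisely the content of Theorem \ref{162}, and once it is invoked the remainder of the argument proceeds along the lines of \cite{KS2006b} with only cosmetic modifications.
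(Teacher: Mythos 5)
Your proposal is correct and takes essentially the same route as the paper, which offers no independent proof but simply asserts that the argument for Theorem 4 of \cite{KS2006b} goes through under \eqref{NUPBR}; your reconstruction (the constancy of $A(\widehat X_T(x,0))(1+M^0_T(x,0))$, the identification $R_0(x,0)=x/a(x,x)$, the martingale property of $R(x,0)\widehat Y(y,0)$ via Theorem \ref{162}, and Theorem \ref{161} at $\delta=0$ for \eqref{1255} and \eqref{1256}) is precisely that adaptation. The only point to make explicit is that Theorems \ref{162} and \ref{161} are formulated under Assumption \ref{integrabilityAssumption}, which Theorem \ref{riskTolThmLocal} does not impose; since none of the quantities appearing here depend on $\nu$, you should invoke those theorems with the trivial perturbation $\nu\equiv 0$ (so that $F=G=0$ and Assumption \ref{integrabilityAssumption} holds automatically), after which your argument is complete.
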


As in \cite{KS2006b}, for $x>0$ and with $y = u_x(x,0)$, let us define
\begin{equation}\label{tildeR}
\frac{d\widetilde {\mathbb R}(x,0)}{d \mathbb P} \triangleq \frac{R_T(x,0) \widehat Y_T(y,0)}{R_0(x,0) y},
\end{equation}
and choose $ \frac{R(x,0)}{R_0(x,0)}$ as a num\'eraire, i.e., let us set
\begin{equation}\label{SR}
S^{R(x,0)} \triangleq \left(\frac{R_0(x,0)}{ R(x,0)}, \frac{ R_0(x,0)S}{ R(x,0)}\right).
\end{equation}
We define the spaces of martingales \begin{equation}\label{eq:tildeM}
\mathcal {\widetilde M}^2(x,0) \triangleq \left\{ M\in \mathbf H_0^2(\mathbb{\widetilde R}(x,0)):~M = H\cdot S^{R(x,0)}\right\},
\end{equation} and 
$\mathcal{\widetilde N}^2(y,0)$ it the orthogonal complement in $\mathbf H_0^2(\mathbb{\widetilde R}(x,0))$.  We start with the following simple lemma (stated without a proof) relating the change of num\'eraire  to the structure of martingales:
\begin{lem}\label{lem:change-rt} 
Let $x>0$ be fixed, assume that the conditions of Theorem \ref{riskTolThmLocal} hold, and denote $y = u_x(x,0)$. Then, we have
\begin{equation}\label{3261}
M \in\mathcal M^2(x,0)\quad {{ if~and~only~if}} \quad M\frac{\widehat X_T(x,0)}{R_T(x,0)}\in\mathcal {\widetilde M}^2(x,0),
\end{equation}
and
$$N\in \mathcal{ N}^2(y,0  ) \quad {{ if~and~only~if}} \quad N\in  \mathcal{\widetilde N}^2(y,0).$$\end{lem}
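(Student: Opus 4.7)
The plan is to identify the density process of $\widetilde{\mathbb R}(x,0)$ with respect to $\mathbb R(x,0)$ and then invoke a change-of-num\'eraire identification together with Bayes' rule. Denoting
\[
Z_t\triangleq \frac{d\widetilde{\mathbb R}(x,0)}{d\mathbb R(x,0)}\bigg|_{\mathcal F_t},
\]
I would first use that both $\widehat X(x,0)\widehat Y(y,0)$ and (by Theorem~\ref{riskTolThmLocal}) $R(x,0)\widehat Y(y,0)$ are $\mathbb P$-martingales, and apply Bayes' rule to \eqref{tildeR} and \eqref{abstractR} to conclude $Z_t = \frac{xR_t(x,0)}{R_0(x,0)\widehat X_t(x,0)}$. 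The crucial observation is that $Z$ is bounded above and below by strictly positive deterministic constants: from \eqref{1251riskTol} and Assumption~\ref{rra}, $R_T(x,0)/\widehat X_T(x,0)=1/A(\widehat X_T(x,0))\in[1/c_2,1/c_1]$, and the $\mathbb P$-martingale property of $R\widehat Y$ and $\widehat X\widehat Y$ transfers these bounds to every $t\in[0,T]$ by taking conditional expectations in the ratio $R_t/\widehat X_t$. As a result, $\mathbf H^2_0(\mathbb R(x,0))$ and $\mathbf H^2_0(\widetilde{\mathbb R}(x,0))$ coincide as sets of processes with equivalent norms.

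For the first equivalence, I would exploit the fact that both $\mathcal M^2(x,0)$ and $\widetilde{\mathcal M}^2(x,0)$ parameterize the \emph{same} space of trading-gain processes, just expressed in different num\'eraires. Any admissible trading strategy producing a wealth process $W$ with $W_0=0$ yields simultaneously $M_t=xW_t/\widehat X_t(x,0)\in\mathcal M^2(x,0)$ and $\widetilde M_t=R_0(x,0)W_t/R_t(x,0)\in\widetilde{\mathcal M}^2(x,0)$, and a direct computation gives
\[
\widetilde M=\frac{R_0(x,0)}{x}\cdot M\frac{\widehat X(x,0)}{R(x,0)}.
\]
Since the two spaces are stable under multiplication by a nonzero constant, this yields \eqref{3261}. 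The local-martingale property passes between the sides via Bayes, because $Z\widetilde M = (x/R_0(x,0))M$, and square integrability is preserved by the two-sided bound on $Z$.

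For the second equivalence, for any $N\in\mathbf H^2_0(\mathbb R(x,0))=\mathbf H^2_0(\widetilde{\mathbb R}(x,0))$ and any $\widetilde M\in\widetilde{\mathcal M}^2(x,0)$, Bayes' rule asserts that $\widetilde MN$ is a $\widetilde{\mathbb R}(x,0)$-martingale iff $Z\widetilde MN$ is an $\mathbb R(x,0)$-martingale. Using $Z\widetilde M=(x/R_0(x,0))M$ and the bijection $M\leftrightarrow\widetilde M$ from the previous step, this reduces to the condition that $MN$ be an $\mathbb R(x,0)$-martingale for every $M\in\mathcal M^2(x,0)$, which is precisely the definition of $N\in\mathcal N^2(y,0)$.

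The main technical obstacle is the bookkeeping in the first equivalence: one must verify that a stochastic integral $H\cdot S^{X(x,0)}$ can be re-expressed as $\widetilde H\cdot S^{R(x,0)}$ after rescaling by $\widehat X(x,0)/R(x,0)$, and that the rescaled integrand $\widetilde H$ is predictable and $S^{R(x,0)}$-integrable. This is the standard num\'eraire-change manipulation (computing $d(\widehat X/R)$ and using integration by parts, aided by the bounded ratio $\widehat X/R$), but is the one step that requires genuine care. Once this identification is in place, the remainder of the proof is an application of Bayes' rule together with the equivalence of the two $\mathbf H^2_0$ norms.
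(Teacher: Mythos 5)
The paper actually states this lemma without proof (it is introduced as a ``simple lemma''), so there is no official argument to compare against; your route --- computing the density process $Z_t=\frac{d\widetilde{\mathbb R}(x,0)}{d\mathbb R(x,0)}\big|_{\mathcal F_t}=\frac{xR_t(x,0)}{R_0(x,0)\widehat X_t(x,0)}$ via the $\mathbb P$-martingale properties of $\widehat X(x,0)\widehat Y(y,0)$ and $R(x,0)\widehat Y(y,0)$, bounding it two-sidedly through $R_T(x,0)/\widehat X_T(x,0)=1/A(\widehat X_T(x,0))\in[1/c_2,1/c_1]$, and then transferring the defining properties by Bayes' rule together with the num\'eraire-change representation --- is exactly the intended one, and the identity $Z\widetilde M=(x/R_0(x,0))M$ combined with stability under scalar multiplication correctly yields \eqref{3261} modulo the representation bookkeeping you flag.

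There is, however, one genuinely false intermediate claim: that $\mathbf H^2_0(\mathbb R(x,0))$ and $\mathbf H^2_0(\widetilde{\mathbb R}(x,0))$ coincide. A two-sided bounded density makes the $\mathbf L^2$ spaces of terminal random variables agree, but the martingale property is not invariant under an equivalent change of measure: in this very setting $Z-1=\frac{xR(x,0)}{R_0(x,0)\widehat X(x,0)}-1$ is a bounded element of $\mathbf H^2_0(\mathbb R(x,0))$, yet it is a $\widetilde{\mathbb R}(x,0)$-martingale only if $Z^2$ is an $\mathbb R(x,0)$-martingale, i.e.\ only in the degenerate case $R(x,0)/R_0(x,0)\equiv\widehat X(x,0)/x$. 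Your second-equivalence paragraph relies on this set equality to know that an $N\in\mathcal N^2(y,0)$ even belongs to $\mathbf H^2_0(\widetilde{\mathbb R}(x,0))$ (and conversely), so as written there is a gap. The fix is short and uses only ingredients you already have: since $R(x,0)$ is a wealth process and $R(x,0)\widehat Y(y,0)$ is a $\mathbb P$-martingale, the process $M^Z\triangleq \frac{xR(x,0)}{\widehat X(x,0)}-R_0(x,0)$ is a bounded element of $\mathcal M^2(x,0)$ and $Z=\frac{x}{R_0(x,0)}\left(R_0(x,0)+M^Z\right)$; hence for $N\in\mathcal N^2(y,0)$ the product $ZN=\frac{x}{R_0(x,0)}\left(R_0(x,0)N+M^ZN\right)$ is an $\mathbb R(x,0)$-martingale precisely by the orthogonality defining $\mathcal N^2(y,0)$, so Bayes gives that $N$ is a $\widetilde{\mathbb R}(x,0)$-martingale, while square integrability transfers by the bounded density. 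Symmetrically, $1/Z=1+\widetilde M'$ with $\widetilde M'\triangleq\frac{R_0(x,0)\widehat X(x,0)}{xR(x,0)}-1\in\widetilde{\mathcal M}^2(x,0)$ handles the converse inclusion. With this substitution for the erroneous identification of the two $\mathbf H^2_0$ spaces, the remainder of your argument is sound.
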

The following theorem describes the structural properties  the approximations in Theorems \ref{mainThm2}, \ref{162}, and \ref{161}  under the assumption that the risk-tolerance process exists. In words, \emph{the second order approximation of the value function optimal strategies amounts to a Kunita-Watanabe decomposition} under the changes of measure and num\'eraire described above.
\begin{thm}\label{riskTolThm2}
Let $x>0$ be fixed, assume that the conditions of Theorem \ref{riskTolThmLocal} hold, and denote $y = u_x(x,0)$. Let us also assume that the risk-tolerance process $R(x,0)$ exists.
Consider the Kunita-Watanabe decomposition of  the square integrable martingale 
$$P_t \triangleq \mathbb E^{\mathbb{\widetilde R}(x,0)}\left[\left(A(\widehat X_T(x,0)) -1\right) xF|\mathcal F_t\right],\ \ t\in[0,T]$$ 
given by
\begin{equation}\label{eq:Kunita-Watanabe}
P=P_0-{\widetilde M}^1-{\widetilde N}^1, \quad{where}\quad {\widetilde M}^1 \in \mathcal {\widetilde M}^2(x,0),\quad {\widetilde N}^1  \in \mathcal{\widetilde N}^2(y,0),\quad P_0\in \R.
\end{equation}
Then, the optimal solutions $M^1(x,0)$ and $N^1(y,0)$ of the quadratic optimization problems  \eqref{addC} and \eqref{bddC} can be obtained from the Kunita-Watanabe decomposition \eqref{eq:Kunita-Watanabe} by reverting to the original num\'eraire, according to Lemma \ref{lem:change-rt}, through the identities 
\begin{equation}\label{3251}
{\widetilde M}^1_t=
\frac{\widehat X_t(x,0) }{R_t(x,0)}M_t^1(x,0),\quad {\widetilde N}^1_t=\frac{x}{y} N_t^1(y,0),\quad t\in[0,T].
\end{equation}
In addition, the Hessian terms in the quadratic expansion of $u$ and $v$  can be identified as 
\begin{equation}\label{3253}
\begin{array}{rcl}
a(d,d)
&=& \frac{R_0(x,0)}{x}\inf\limits_{\widetilde M\in\mathcal {\widetilde M}^2(x,0)}\mathbb E^{\mathbb {\widetilde R}(x,0)}\left[\left(\widetilde M_T + xF\left({A\left(\widehat X_T(x,0)\right)} - 1 \right)\right)^2 \right]  + 
C_a.\\
&=&  \frac{R_0(x,0)}{x}\mathbb E^{\mathbb {\widetilde R}(x,0)}\left[\left({\widetilde N}^1_T \right)^2 \right] +\frac{R_0(x,0)}{x}P_0^2+ C_a,
\end{array}
\end{equation}
where
\begin{equation}\label{eqC_a}
C_a\triangleq x^2\mathbb E^{\mathbb R(x,0)}\left[F^2\frac{A(\widehat X_T(x,0)) - 1}{A(\widehat X_T(x,0))}- G\right],
\end{equation}
and \begin{equation}
\begin{array}{rcl}
b(d,d)
&=& \frac{R_0(x,0)}{x}\inf\limits_{ {\widetilde N}\in\mathcal{ N}^2(y,0)}\mathbb E^{\mathbb{\widetilde R}(y,0)}\left[ \left({\widetilde N}_T + yF\left( A\left(\widehat X_T(x,0)\right) - 1\right)\right)^2\right] + C_b.\\
&=& \frac{R_0(x,0)}{x}\left (\frac yx  \right)^2 \mathbb E^{\mathbb{\widetilde R}(y,0)}\left[ \left({\widetilde M}^1_T \right)^2\right] +\frac{R_0(x,0)}{x}\left (\frac yx  \right)^2P_0^2+ C_b,
\end{array}
\end{equation}
where
 \begin{equation}\label{eqC_b}
 C_b\triangleq y^2\mathbb E^{\mathbb R(x,0)}\left[G + F^2\left(1 - A\left(\widehat X_T(x,0)\right)\right) \right].
 \end{equation}
The cross terms in the Hessians of $u$ and $v$  are identified as
$$a(x,d)=P_0$$ and $b(y,d)$ is given by 
$$ b(y,d)=\frac y x \frac{P_0}{a(x,x)}.$$
With these identifications, all the  conclusions  of Theorem \ref{mainThm2}  and Corollary \ref{2151} hold true.
\end{thm}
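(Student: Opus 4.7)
The plan is to use the change of num\'eraire and measure from $\mathbb R(x,0)$ to $\widetilde{\mathbb R}(x,0)$ in order to reduce the two quadratic optimization problems \eqref{addC} and \eqref{bddC} to a projection problem in $\mathbf L^2(\widetilde{\mathbb R}(x,0))$, which is solved by the Kunita--Watanabe decomposition \eqref{eq:Kunita-Watanabe}. I first collect the structural consequences of the existence of $R(x,0)$. Theorem \ref{riskTolThmLocal} gives $a(x,x) = x/R_0(x,0)$, and combining \eqref{1255} with \eqref{1251c} at $\delta=0$ yields $\widehat X_T(x,0)(1+M^0_T(x,0))/x = R_T(x,0)/R_0(x,0)$. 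Since $A(\widehat X_T) = \widehat X_T/R_T$ by \eqref{1251riskTol}, this produces $A(\widehat X_T)(1+M^0_T) = x/R_0 = a(x,x)$, which forces $N^0(y,0)\equiv 0$ in \eqref{122111C}. Comparing \eqref{tildeR} with the definition of $\mathbb R(x,0)$ gives the density relation $d\mathbb R(x,0)/d\widetilde{\mathbb R}(x,0) = R_0 A(\widehat X_T)/x$, equivalently $\mathbb E^{\mathbb R(x,0)}[g/A(\widehat X_T)] = (R_0/x)\mathbb E^{\widetilde{\mathbb R}(x,0)}[g]$.

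Next, I reduce the primal problem \eqref{addC} by means of the algebraic identity
\begin{equation*}
A(M_T+xF)^2 - 2xFM_T - x^2(F^2+G) = \tfrac{1}{A}\bigl(AM_T+xF(A-1)\bigr)^2 + x^2 F^2 \tfrac{A-1}{A} - x^2 G,
\end{equation*}
where $A=A(\widehat X_T)$. Taking $\mathbb E^{\mathbb R(x,0)}$ and applying the density relation above to the first summand identifies the $M$-independent remainder with $C_a$ from \eqref{eqC_a} and yields
\begin{equation*}
a(d,d) = C_a + \tfrac{R_0}{x}\inf_{M\in\mathcal M^2(x,0)}\mathbb E^{\widetilde{\mathbb R}(x,0)}\bigl[\bigl(AM_T + xF(A-1)\bigr)^2\bigr].
\end{equation*}
By Lemma \ref{lem:change-rt}, the map $M_T\mapsto A(\widehat X_T)M_T$ is a bijection between terminal values in $\mathcal M^2(x,0)$ and $\mathcal{\widetilde M}^2(x,0)$, and $P_T = xF(A-1)$ by definition of $P$. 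Substituting the Kunita--Watanabe decomposition $P_T = P_0 - \widetilde M^1_T - \widetilde N^1_T$ and invoking orthogonality of $\mathcal{\widetilde M}^2$, $\mathcal{\widetilde N}^2$, and constants in $\mathbf L^2(\widetilde{\mathbb R})$ gives
\begin{equation*}
\mathbb E^{\widetilde{\mathbb R}}\bigl[(\widetilde M_T-\widetilde M^1_T+P_0-\widetilde N^1_T)^2\bigr] = \mathbb E^{\widetilde{\mathbb R}}\bigl[(\widetilde M_T-\widetilde M^1_T)^2\bigr] + P_0^2 + \mathbb E^{\widetilde{\mathbb R}}\bigl[(\widetilde N^1_T)^2\bigr],
\end{equation*}
which is minimized at $\widetilde M = \widetilde M^1$. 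This gives \eqref{3253} and the identification $\widetilde M^1_T = A(\widehat X_T)M^1_T$, the first identity in \eqref{3251}.

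The dual problem \eqref{bddC} is handled identically via the parallel identity
\begin{equation*}
B(N_T-yF)^2+2yFN_T-y^2(F^2-G)=B\bigl(N_T+yF(A-1)\bigr)^2+y^2\bigl(G+F^2(1-A)\bigr)
\end{equation*}
(using $AB=1$) together with $\mathcal N^2(y,0) = \mathcal{\widetilde N}^2(y,0)$ from Lemma \ref{lem:change-rt}; the resulting $\mathbf L^2(\widetilde{\mathbb R})$-projection produces the stated formula for $b(d,d)$ and identifies the optimum at $N^1_T = (y/x)\widetilde N^1_T$, the second identity in \eqref{3251}. For the cross term, substituting $A(1+M^0_T)=x/R_0$ into \eqref{axdC} and using $\mathbb E^{\mathbb R}[M^1_T]=0$ and the density relation yields
\begin{equation*}
a(x,d) = \tfrac{x^2}{R_0}\mathbb E^{\mathbb R}\!\left[\tfrac{F(A-1)}{A}\right] = \mathbb E^{\widetilde{\mathbb R}}\bigl[xF(A(\widehat X_T)-1)\bigr] = \mathbb E^{\widetilde{\mathbb R}}[P_T] = P_0,
\end{equation*}
by the $\widetilde{\mathbb R}$-martingale property of $P$; then $b(y,d) = (y/x)P_0/a(x,x)$ follows from the matrix identity \eqref{12154c} together with $a(x,x)b(y,y)=1$.

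The main technical obstacle is verifying well-posedness of the Kunita--Watanabe projection in $\mathbf L^2(\widetilde{\mathbb R}(x,0))$: I need $P_T = xF(A(\widehat X_T)-1)\in\mathbf L^2(\widetilde{\mathbb R})$ and closedness of $\mathcal{\widetilde M}^2(x,0)$ in this space. Square-integrability of $P_T$ and finiteness of $C_a, C_b$ both follow from Assumption \ref{integrabilityAssumption} combined with the boundedness of $A$ from Assumption \ref{rra}, while closedness follows from continuity of the num\'eraired price process $S^{R(x,0)}$. Once all the identifications of $a(x,d), a(d,d), b(y,d), b(d,d)$ and of $M^1(x,0), N^1(y,0)$ are in place, the quadratic expansions \eqref{122121}--\eqref{122122} from Theorem \ref{mainThm2} and the limiting formulas in Corollary \ref{2151} apply verbatim with these explicit Kunita--Watanabe forms, so no additional argument is required.
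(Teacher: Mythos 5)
Your proposal is correct and follows essentially the same route as the paper: complete the square in \eqref{addC} and \eqref{bddC}, pass from $\mathbb R(x,0)$ to $\widetilde{\mathbb R}(x,0)$ via the density ratio $R_0(x,0)A(\widehat X_T(x,0))/x$, transform the feasible sets by Lemma \ref{lem:change-rt}, and read off the minimizers and the values from the orthogonality in the Kunita--Watanabe decomposition. The only (immaterial) difference is the cross term: you compute $a(x,d)=P_0$ directly from \eqref{axdC} using $A(\widehat X_T(x,0))(1+M^0_T(x,0))=x/R_0(x,0)$, whereas the paper obtains it by taking $\widetilde{\mathbb R}(x,0)$-expectations in \eqref{4141}; both then get $b(y,d)$ from \eqref{12154c} in the same way.
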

\begin{proof}

Let us prove \eqref{3251} first. 
Completing the square in \eqref{addC}, we get 
\begin{equation}\label{3252}
a(d,d) = \inf\limits_{M\in\mathcal M^2(x,0)}\mathbb E^{\mathbb R(x,0)}\left[A\left(\widehat X_T(x,0)\right)\left(M_T + xF\left(1 - \frac{1}{A(\widehat X_T(x,0))}\right)\right)^2 \right]  + 
C_a,
\end{equation} 
where $C_a$ is defined in \eqref{eqC_a}. As $$\frac{d\mathbb{ R}(x,0)}{{d\mathbb {\widetilde R}(x,0)}} = \frac{A\left(\widehat X_T(x,0)\right)R_0(x,0)}{x} = \frac{\widehat X_T(x,0)R_0(x,0)}{R_T(x,0)x},$$
using Lemma \ref{lem:change-rt},
 we can reformulate \eqref{3252} as
\begin{equation}\label{3253}
\begin{array}{rcl}
a(d,d) &=&  \frac{R_0(x,0)}{x}\inf\limits_{M\in\mathcal M^2(x,0)}\mathbb E^{\mathbb {\widetilde R}(x,0)}\left[\left(M_T\frac{\widehat X_T(x,0)}{R_T(x,0)} + xF\left({A\left(\widehat X_T(x,0)\right)} - 1\right)\right)^2 \right]  + 
C_a,\\
&=& \frac{R_0(x,0)}{x}\inf\limits_{\widetilde M\in\mathcal {\widetilde M}^2(x,0)}\mathbb E^{\mathbb {\widetilde R}(x,0)}\left[\left(\widetilde M_T + xF\left({A\left(\widehat X_T(x,0)\right)} - 1 \right)\right)^2 \right]  + 
C_a.\\
\end{array}
\end{equation}
Likewise,
completing the square in \eqref{bddC}, we obtain
\begin{equation}\label{3254}
\begin{array}{rcl}
b(d,d)&=& \inf\limits_{N\in\mathcal N^2(y,0)}\mathbb E^{\mathbb R(y,0)}\left[ 
B\left(\widehat Y_T(y,0)\right)\left(N_T + yF\frac{1 - B(\widehat Y_T(y,0)}{B(\widehat Y_T(y,0)}\right)^2\right] + C_b,\\
&=& \frac{R_0(x,0)}{x}\inf\limits_{ N\in\mathcal{ N}^2(y,0)}\mathbb E^{\mathbb{\widetilde R}(y,0)}\left[ \left(N_T + yF\left( A\left(\widehat X_T(x,0)\right) - 1\right)\right)^2\right] + C_b,\\
\end{array}
\end{equation}
where $C_b$ is defined in \eqref{eqC_b}.
Now, decomposition \eqref{3251} (where the constant $P_0$ is still to be determined) results from \eqref{3253}, \eqref{3254}, and optimality of $M^1(x,0)$ and $N^1(y,0)$ for \eqref{addC} and \eqref{bddC}, respectively. 
 As $A\left(\widehat X_T(x,0) \right) = \frac{\widehat X_T(x,0)}{R_T(x,0)}$, taking the expectation in \eqref{4141} under $\mathbb {\widetilde R}(x,0)$, we deduce that $P_0 = a(x,d)$.
 Therefore, using \eqref{12154c}, we deduce that $b(y,d)=\frac y x \frac{P_0}{a(x,x)}$.
\end{proof}
\begin{rem}Applying It\^o formula, one can find expressions for the corrections of the optimal proportions in terms of the Kunita-Watanabe decomposition under risk-tolerance wealth process as num\'eraire, in the spirit of Theorem \ref{thmCorOptimizer}. However, in the general case when $\frac{R(x,0)}{R_0(x,0)}=X'(x,0)\neq\widehat X(x,0)/x$ such a correction to proportions also contains the terms $\widehat X(x,0)/R(x,0)$ and $\widetilde{M}^1(x,0)$.
\end{rem}
\begin{rem}

Theorem \ref{riskTolThm2} gives an interpretation of $a(x,d)$ as an  utility-based price. Let us start by observing that 
$$a(x,d) = \mathbb E^{\mathbb{\widetilde R}(x,0)}\left[\left(A(\widehat X_T(x,0)) -1\right) xF\right] = \mathbb E\left[\frac{\left(\widehat X_T(x,0) - R_T(x,0) \right)}{R_0(x,0)} xF\frac{\widehat Y_T(y,0)}{y}\right].$$ 
If there exists a wealth process $X'$ such that 
\begin{equation}\label{3265}
X'_T\geq \left|\left(\widehat X_T(x,0) - R_T(x,0) \right)F\right|,
\end{equation} and
$X' \widehat Y$ is a uniformly integrable martingale\footnote{In particular, such a process $X'$ satisfying both conditions exists if $|F|\leq C$ a.s. for some constant $C>0$. In this case $X' = C(R(x,0) + \widehat X(x,0))$ satisfies \eqref{3265} and $X'\widehat Y(y,0)$ is a $\mathbb P$-martingale by Theorem~\ref{riskTolThmLocal}.}, according to \cite{HK04, HKS05}, $a(x,d)$ represents the marginal utility-based price of
the ``random endowment''  
$\frac{\left(\widehat X_T(x,0) - R_T(x,0) \right)}{R_0(x,0)} xF.$
\end{rem}

%

\section*{An extended remark}\label{secApplication}
Below we will consider an application of our results. As was pointed out in the introduction, there is a number of models, or rather classes of models, which admit a closed form solution, see for example \cite{Liu07} and references therein. Once we perturb the input parameters, the solution typically halts to exist in the closed form. However, Theorems  \ref{mainThm1}, \ref{mainThm2}, \ref{161}, and \ref{thmCorOptimizer} give approximations to the value function, the optimizer, and the optimal trading strategy. We will assume that $U(x) = \frac{x^p}{p}$, $p\in(-\infty, 0)\cup (0,1)$ and there are two traded securities, a money market account with zero interest rate and one traded stock 
that satisfies conditions of \cite{Liu07}. In this case the optimal strategy can be  obtained explicitly, see \cite[Proposition 2]{Liu07}. 
\subsection*{Explicit form of the correction terms}
As we are in power-utility settings, it is enough to consider $x=1$.
We assume that $0$-model admits a solution $\widehat X(1,0)$, where $\widehat X(1,0) = 1 + (\widehat X(1,0)\widehat \pi(1,0))\cdot S^0$, for some predictable and $S^0$-integrable process $\widehat \pi(1,0)$.
Let us set recall that $M^R$ (specified in \eqref{4231}) is given by
$$M^R = M + (\lambda - \widehat \pi(1,0))\cdot \langle M\rangle = S^0 - \widehat \pi(1,0)\cdot\langle S^0\rangle.$$

Let us consider perturbations of $\lambda$ by a process $\nu$, such that  Assumption \ref{integrabilityAssumption} holds. In these settings, the solutions to \eqref{axxC} and \eqref{addC}, respectively, are 
$$M^0(1,0) \equiv 0\quad and \quad M^1(1,0) = \gamma^1\cdot M^R.$$ 
Following the argument in section \ref{approxTradingStrategies}, we specify 
\begin{equation}\nonumber
\begin{array}{rcl}
\tau_\varepsilon &= &\inf\left\{ t\in [0,T]:~|M^1_t(1,0)|\geq \frac{1}{\varepsilon}~or~\langle M^1(1,0)\rangle_t\geq \frac{1}{\varepsilon}\right\},\quad\varepsilon>0, \\
\end{array}
\end{equation} 
and
\begin{equation}\nonumber
\gamma^{1,\varepsilon} = \gamma^11_{\{[0,\tau_\varepsilon]\}},\quad \varepsilon>0.
\end{equation}
In \eqref{4232}, we have
\begin{equation}\nonumber
 X^{\Delta x, \delta, \varepsilon} = (1 + \Delta x)\mathcal E\left(\left(\widehat \pi(1,0) +  \delta(\nu+ \gamma^{1,\varepsilon})\right)\cdot S^\delta\right).
\end{equation} 
Following the argument of Theorem \ref{thmCorOptimizer}, we can find $\varepsilon(\Delta x,\delta)$, such that
\begin{displaymath}
\mathbb E\left[ U\left( X_T^{\Delta x, \delta, \varepsilon(\Delta x,\delta)}\right) \right] = u(1 + \Delta x, \delta) - o(\Delta x^2 + \delta^2).
\end{displaymath} 
Using Theorem \ref{162}, we deduce that the Kunita-Watanabe decomposition of 
$\left(\mathbb E^{\mathbb R(x,0)}\left[\frac{p}{1- p}F|\mathcal F_t \right]\right)_{t\in[0,T]},$ where $F$ is specified in \eqref{defFG}, is:
\begin{equation}\label{291}
\frac{p}{1- p}F = -\frac{a(x,d)}{1-p} + \gamma^1\cdot M^R_T + \frac{1}{(1-p)y}N^1(y,0),
\end{equation}
where $y = u_x(1,0)$ and $N^1(y,0)$ is the solution to \eqref{bddC}.

Moreover, in this case 
the corresponding coefficients $a(x,x)$, $a(x,d)$, and $a(d,d)$ from \eqref{axxC}, \eqref{axdC}, and \eqref{addC}, respectively, are given by
\begin{displaymath}
\begin{array}{lcl}
a(x,x) & = & 1- p,\\
a(x,d) &= & -p\mathbb E^{\mathbb R(1,0)}\left[F\right],\\
a(d,d) &= & \frac{1}{1-p}(a(x,d))^2 + \frac{1}{y^2(1-p)}\mathbb E^{\mathbb R(1,0)}\left[N^1_T(1,0)^2\right] - \mathbb E^{\mathbb R(1,0)}\left[ \frac{p}{1-p} F^2+ G\right].\\
\end{array}
\end{displaymath}

\subsection*{Relation to the risk-tolerance process}
In this case, risk-tolerance wealth process exists for every $x>0$, and is given by $R(x,0) = \frac{\widehat X(x,0)}{1- p} = \frac{x}{1-p}\widehat X(1,0)$, whereas $\mathbb{\widetilde R}(x,0) = \mathbb R(x,0)$, $x>0$, where $\mathbb{\widetilde R}(x,0)$ is defined in \eqref{tildeR}. Theorem \ref{riskTolThmLocal} implies that 
\begin{equation}\nonumber
\begin{array}{rcl}
u_{xx}(x,0) &=& -\frac{y}{x}(1-p),\\
 \lim\limits_{\Delta x\to 0}\frac{1}{\Delta x}\left(
\widehat X_T(x + \Delta x, 0) - \widehat X_T(x,0)\right)&=& \frac{\widehat X_T(x,0)}{x},\\
  \lim\limits_{\Delta y\to 0}\frac{1}{\Delta y}\left(
\widehat Y_T(y + \Delta y, 0) - \widehat Y_T(y,0)\right)&=& \frac{\widehat Y_T(y,0)}{y},\\
\end{array}
\end{equation}
where the convergence take place in $\mathbb P$-probability.
In turn, \eqref{291} also asserts that $M^1_T(x,0)$ and $N^1_T(y,0)$ form (up to multiplicative constants) an orthogonal decomposition of $F + \frac{a(x,d)}{p}$ under $\mathbb R(1,0)$, in accordance with Theorem \ref{riskTolThm2}. 
In particular,
$\mathbb E^{\mathbb R(1,0)}\left[F\right] = \frac{a(x,d)}{-p}.$
\bibliographystyle{alpha}\bibliography{literature1}

\begin{thebibliography}{CCFM15}

\bibitem[CCFM15]{MostovyiNUPBR}
H.~Chau, A.~Cosso, C.~Fontana, and O.~Mostovyi.
\newblock Optimal investment with intermediate consumption under no unbounded
  profit with bounded risk.
\newblock {\em arXiv:1509.01672 [q-fin.PM]}, 2015.

\bibitem[CR16]{RasChau}
H.~Chau and M.~Rasonyi.
\newblock On optimal investment with processes of long or negative memory.
\newblock arXiv:1608.00768 [q-fin.MF], 2016.

\bibitem[DS06]{DSmathArb}
F.~Delbaen and W.~Schachermayer.
\newblock {\em The Mathematics of Arbitrage}.
\newblock Springer, 2006.

\bibitem[GR15]{GR15}
P.~Guasoni and S.~Robertson.
\newblock Static fund separation of long-term investments.
\newblock {\em Math. Finan.}, 25(4):789--826, 2015.

\bibitem[HK04]{HK04}
J.~Hugonnier and D.~Kramkov.
\newblock Optimal investment with random endowment in incomplete markets.
\newblock {\em Ann. Appl. Probab.}, 14:845--864, 2004.

\bibitem[HKS05]{HKS05}
J.~Hugonnier, D.~Kramkov, and W.~Schachermayer.
\newblock On utility-based pricing of contingent claims in incomplete markets.
\newblock {\em Math. Finance}, 15:203--212, 2005.

\bibitem[HS10]{HulleySchweizer}
H.~Hulley and M.~Schweizer.
\newblock On minimal market models and minimal martingale measures.
\newblock {\em in: C. Chiarella and A. Novikov (eds.), "Contemporary
  Quantitative Finance. Essays in Honour of Eckhard Platen"}, pages 35--51,
  2010.

\bibitem[KK07]{KarKar07}
I.~Karatzas and K.~Kardaras.
\newblock The num\'{e}raire portfolio in semimartingale financial models.
\newblock {\em Finance Stoch.}, 11:447--493, 2007.

\bibitem[KO96]{KO96}
T.~S. Kim and E.~Omberg.
\newblock Dynamic nonmyopic portfolio behavior.
\newblock {\em Rev. Fin. Stud.}, 9(1):141--161, 1996.

\bibitem[KS99]{KS}
D.~Kramkov and W.~Schachermayer.
\newblock The asymptotic elasticity of utility functions and optimal investment
  in incomplete markets.
\newblock {\em Ann. Appl. Probab.}, 9(3):904--950, 1999.

\bibitem[KS03]{KS2003}
D.~Kramkov and W.~Schachermayer.
\newblock Necessary and sufficient conditions in the problem of optimal
  investment in incomplete markets.
\newblock {\em Ann. Appl. Probab.}, 13:1504--1516, 2003.

\bibitem[KS06a]{KS2006}
D.~Kramkov and M.~S\^irbu.
\newblock On the two-times differentiability of the value functions in the
  problem of optimal investment in incomplete markets.
\newblock {\em Ann. Appl. Probab.}, 16(3):1352--1384, 2006.

\bibitem[KS06b]{KS2006b}
D.~Kramkov and M.~S\^irbu.
\newblock Sensitivity analysis of utility-based prices and risk-tolerance
  wealth process.
\newblock {\em Ann. Appl. Probab.}, 16(4):2140--2194, 2006.

\bibitem[Liu07]{Liu07}
J.~Liu.
\newblock Portfolio selection in stochastic environments.
\newblock {\em Rev. Fin. Stud.}, 20:1--39, 2007.

\bibitem[LM{\v{Z}}14]{LarMosZit14}
K.~Larsen, O.~Mostovyi, and G.~{\v{Z}}itkovi{\'c}.
\newblock An expansion in the model space in the context of utility
  maximization.
\newblock arXiv:1410.0946 [q-fin.PM], 2014.

\bibitem[LS02]{Lewis02}
A.~Lewis and H.~Sendov.
\newblock Quadratic expansions of spectral functions.
\newblock {\em Linear Algebra Appl.}, 340:97--121, 2002.

\bibitem[Mos15]{Mostovyi2015}
O.~Mostovyi.
\newblock Necessary and sufficient conditions in the problem of optimal
  investment with intermediate consumption.
\newblock {\em Finance Stoch.}, 19:135--159, 2015.

\end{thebibliography}
\end{document}